\definecolor{DarkGray}{rgb}{0.66, 0.66, 0.66}
\definecolor{DarkPowderBlue}{rgb}{0.0, 0.2, 0.6}
\definecolor{fluorescentyellow}{rgb}{0.8, 1.0, 0.0}
\newcounter{note}[section]
\newcommand{\initOneLiners}{%
    \setlength{\itemsep}{0pt}
    \setlength{\parsep }{0pt}
    \setlength{\topsep }{0pt}
}
  \def\\{}%
  \def\texttt#1{<#1>}%
  \def\textsf#1{<#1>}%
  \def\mathsf#1{<#1>}%
  \def\ensuremath#1{#1}%
  \def\Cref#1{<Label:#1>}%
  \def\eqref#1{<Eq.:#1>}%
\newtheorem{theorem}{Theorem}
\newtheorem{lemma}{Lemma}
\newtheorem{corollary}{Corollary}
\newtheorem{assumption}{Assumption}
\newtheorem{remark}{Remark}
\newtheorem{definition}{Definition}
\newcommand{\ve}[1]{\bm{#1}}
\newcommand{\R}{\mathbb{R}}
\newcommand{\E}{\mathbb{E}}
\newcommand{\Prob}{\mathbb{P}}
\newcommand{\indi}{\mathbbm{1}}
\newcommand{\indibrac}[1]{\mathbbm{1}_{\{#1\}}}
\newcommand{\abs}[1]{\lvert#1\rvert}
\newcommand{\sumall}{\sum_{i=1}^\njt}
\newcommand{\Thebrac}[1]{\Theta\left(#1\right)}
\newcommand{\obrac}[1]{o\left(#1\right)}
\newcommand{\Obrac}[1]{O\left(#1\right)}
\newcommand{\wbrac}[1]{\omega\left(#1\right)}
\newcommand{\Wbrac}[1]{\Omega\left(#1\right)}
\newcommand{\FCFS}{\textnormal{FCFS}}
\newcommand{\PP}{\textnormal{SNF}}
\newcommand{\sysinf}[1]{#1^{(\infty)}}
\newcommand{\yc}{\Phi}
\newcommand{\ycinf}{\sysinf{\Phi}}
\newcommand{\offset}{\bar{r}}
\newcommand{\cmax}{c_{\max}}
\newcommand{\vmx}{v_{\max}}
\newcommand{\fmx}{f_{\max}}
\newcommand{\njt}{I}
\newcommand{\arr}{\lambda}
\newcommand{\sev}{\mu} 
\newcommand{\siz}{\ell}
\newcommand{\load}{\rho}
\newcommand{\slk}{\delta}
\newcommand{\Lm}{\ell_{\max}}
\newcommand{\sysvar}{\sigma^2}
\newcommand{\mwt}{\E\big[T^w(\infty)\big]}
\newcommand{\wait}{T^w}
\newcommand{\wst}{\delta'} 
\newcommand{\normwork}{\overline{W}}
\newcommand{\fnw}{w}
\newcommand{\fnormwork}[1]{\fnw(#1)}
\newcommand{\Lmratio}{\epsilon_0}
\newcommand{\indexone}{i^*_1}
\newcommand{\indextwo}{i^*_2}
\newcommand{\indexmain}{i^*} 
\newcommand{\sevmin}{\sev_{\min}}
\newcommand{\sevmax}{\sev_{\max}}
\newcommand{\xnom}{\bar{x}}
\newcommand{\sysup}[1]{#1^{(U)}}
\newcommand{\syslow}[1]{#1^{(L)}}
\begin{document}

\title{Sharp Waiting-Time Bounds for Multiserver Jobs}

\author[1]{Yige Hong\thanks{\url{yigeh@andrew.cmu.edu}}}
\author[1]{Weina Wang\thanks{\url{weinaw@cs.cmu.edu}}}
\affil[1]{Computer Science Department, Carnegie Mellon University}

\date{}
\maketitle

\begin{abstract}
Multiserver jobs, which are jobs that occupy multiple servers simultaneously during service, are prevalent in today's computing clusters. But little is known about the delay performance of systems with multiserver jobs. We consider queueing models for multiserver jobs in \emph{scaling regimes} where the system load becomes heavy and meanwhile the total number of servers in the system and the number of servers that a job needs become large. 
Prior work has derived upper bounds on the queueing probability in this scaling regime. However, without proper lower bounds, the existing results cannot be used to differentiate between policies. In this paper, we study the delay performance by establishing \emph{sharp bounds} on the \emph{mean waiting time} of multiserver jobs, where the waiting time of a job is the time spent in queueing rather than in service. 
We first characterize the \emph{exact order} of the mean waiting time under the First-Come-First-Serve (FCFS) policy. Then we prove a lower bound on the mean waiting time of all policies, which has an \textit{order gap} with the mean waiting time under FCFS. Finally, we show that the lower bound is \emph{achievable} under a priority policy that we call Smallest-Need-First (SNF).
\end{abstract}


\section{Introduction}\label{sec:introduction}
In today's large-scale computing clusters behind cloud platforms, \textit{multiserver jobs} have become increasingly prevalent, where a multiserver job is a job that demands to occupy multiple ``servers'' (which can be multiple physical servers, multiple CPU cores, etc.) simultaneously during its runtime \citep[][]{TirBarDen_20,VerPedKor_15,LinPaoCho_18,AbaBarChe_16}. 
For example, cloud platforms allow users to specify the number of CPU cores in their virtual machines or containers, and this information can be utilized by centralized schedulers to make scheduling decisions (see, e.g., \citet{VerPedKor_15}, Google Kubernetes Engine \citep{GKE}). Moreover, the number of  ``servers'' that a multiserver job requests, which we refer to as the \textit{server need}, is becoming increasingly large. This trend is driven by machine learning jobs from applications like TensorFlow in \citet{AbaBarChe_16}, where the jobs are highly parallel and require synchronization. According to the statistics from Google's Borg Scheduler in \citet{VerPedKor_15}, the server needs in Borg can vary across six orders of magnitudes.

In this paper, we study the impact of multiserver jobs on the delay performance of large-scale computing systems using queueing models.
Queueing models with multiserver jobs have been studied in the literature, but quantifying the delay performance is notoriously hard.
Exact steady-state distributions can only be derived in highly simplified settings with two servers \citep[][]{BriGre_84,FilKar_06},
while the majority of prior work has focused on characterizing stability conditions \citep[][]{GroHarSch_20,AfaBasGri_19,MorRum_16,RumMor_17}. However, even for stability, exact conditions are known only for the special cases where all jobs have the same service rate or where there are two job classes.
We comment that concurrent to the conference version of our work \citep{HonWan_22}, 
\citet{GroHarSch_22_wfcs} and \citet{GroScuHarSch_22_msj_srpt} study the delay performance of multiserver jobs in the traditional heavy-traffic regime. 
A more detailed review of related work is provided in Section~\ref{sec:related-work}. 

A recent advance in understanding the delay of multiserver jobs is a characterization of the \emph{queueing probability} in a \emph{large system} by \citet{WanXieHar_21_2}, where the queueing probability is the probability that an arriving job has to queue rather than entering service immediately.
Specifically, \citet{WanXieHar_21_2} consider a multiserver job system with $n$ servers, and study the asymptotic scaling regimes where $n$ becomes large.
The scaling regimes allow different job types to have different arrival rates, server needs and service rates. Among those parameters, server needs and arrival rates can scale up with $n$. Such scaling regimes capture the trend that different multiserver jobs can be highly heterogeneous, especially in terms of server needs. They establish an upper bound on the queueing probability, based on which they give a sufficient condition for the queueing probability to diminish as $n$ goes to infinity.

Although the work of \citet{WanXieHar_21_2} identifies when the queueing probability diminishes in large systems, which is a much desirable operating scenario, it does not provide much insight for differentiating between scheduling policies.
In particular, their queueing probability upper bound holds for any scheduling policy that is reasonably work-conserving (although the bound is presented only for the First-Come-First-Serve policy).
Moreover, queueing probability does not directly translate to delay of jobs.

In this paper, we focus on the \emph{waiting time} of jobs, which is the total time a job spends waiting in the queue (not receiving any service), under various scheduling policies.
The waiting time is a performance metric that is directly related to job delay.
Our goal is to establish bounds on the mean waiting time that are \emph{order-wise tight} as the number of servers, $n$, scales.
Such tight bounds will enable us to differentiate between policies based on their delay performance.
We comment that there has been a line of work in the literature \citep[][]{Liu_19,LiuYin_19_2,LiuYin_20,LiuGonYin_20_2,vanZubBor_20,WenWan_21_2,WenZhoSri_21_2} that focuses on quantifying when the mean waiting time diminishes in large systems for various queueing models.
However, little is known on \emph{how fast} the mean waiting time diminishes due to the lack of lower bounds.
Our results provide the rate of diminishing when the mean waiting time does diminish, but our tight bounds on the mean waiting time are not limited to the ``diminishing'' scenario.

Since the First-Come-First-Serve (FCFS) policy is widely used as a default policy in practice and also receives the most attention from theoretical studies of multiserver jobs \citep[][]{BriGre_84,FilKar_06,GroHarSch_20,AfaBasGri_19,MorRum_16,RumMor_17}, in this paper, we will first examine FCFS and understand the exact order of the mean waiting time under it.
Then a natural question that arises is: \emph{can any policy outperform FCFS in terms of the mean waiting time?}
More generally, we aim to answer the following fundamental questions:
\begin{itemize}
\item \emph{What is the optimal order of the mean waiting time as the system scales?
\item Which policy achieves the optimal order?}
\end{itemize}

\subsection{Model and performance metric}
We consider a system that consists of $n$ servers and $\njt$ types of jobs.
An example is illustrated in Figure~\ref{fig:msj}.
Suppose type~$i$ jobs need the simultaneous service of $\siz_i$ servers. We sort the job types such that their \emph{server needs}~$\siz_i$'s satisfy $\siz_1 \leq \siz_2 \leq \dots \le \siz_\njt$. Let the \textit{maximal server need} $\Lm$ be $\Lm = \max_{i\in\{1,2,\dots,\njt\}} \siz_i=\siz_\njt$, and we call type~$\njt$ jobs the \textit{maximal-need jobs}. 

\begin{figure}
    \centering    \includegraphics[width=10cm]{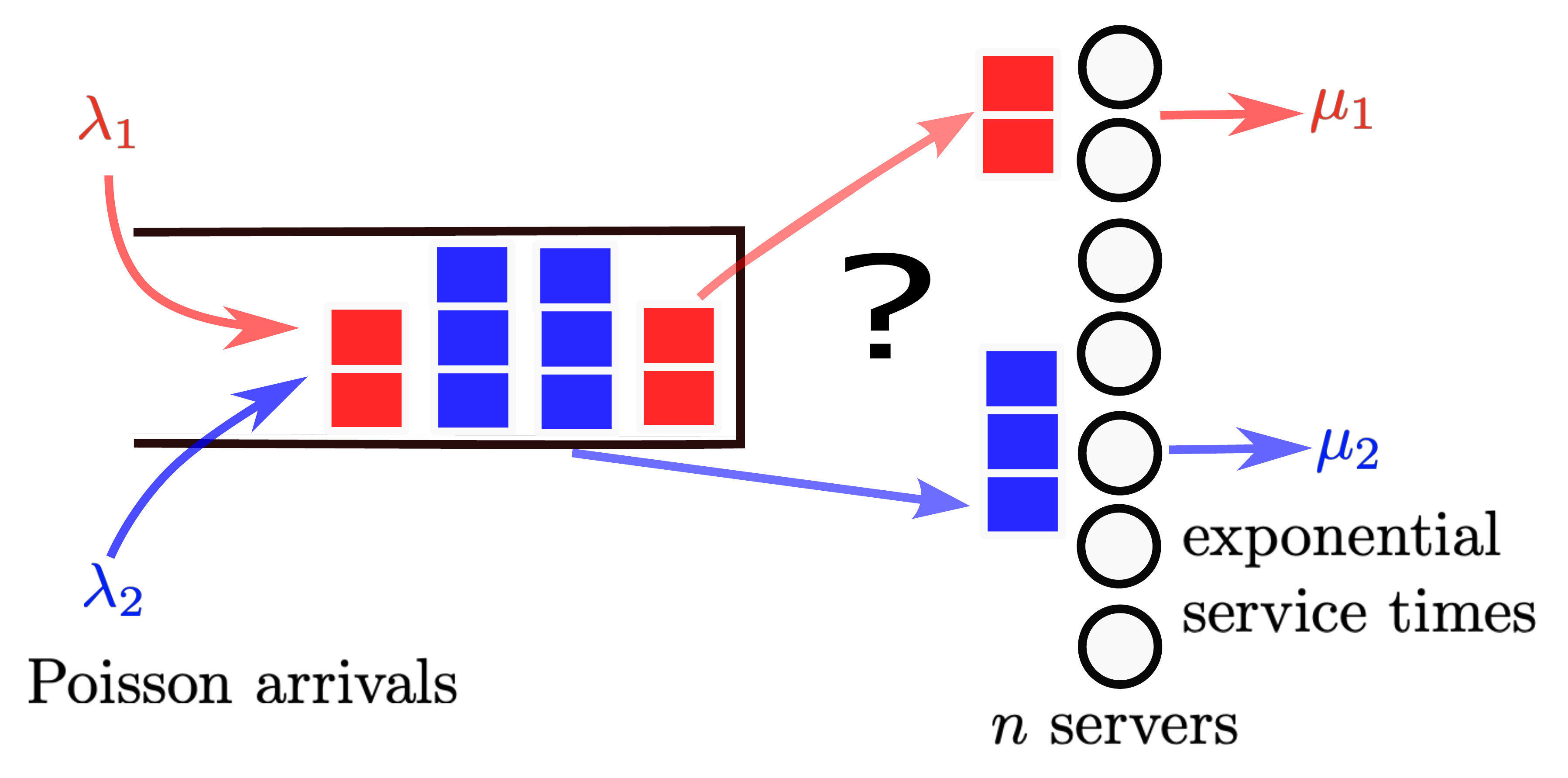}
    \caption{A multiserver-job system with two types of jobs. Type $1$ jobs have arrival rate $\arr_1$, service rate $\sev_1$, and server need $\siz_1 = 2$. Type $2$ jobs have arrival rate $\arr_2$, service rate $\sev_2$, and server need $\siz_2 =3$.}
    \label{fig:msj}
\end{figure}

The dynamics of the system are as follows. For each $i=1,2,\dots, \njt$, type~$i$ jobs arrive to the system following a Poisson process with \emph{arrival rate}~$\arr_i$. Upon arrival, a job either starts service immediately or waits in a centralized queue. When a type~$i$ job starts service, it leaves the queue and makes exclusive use of $\siz_i$ servers. The job leaves the system after receiving enough service. The service time of a type~$i$ job follows an exponential distribution with \emph{service rate}~$\sev_i$. The service times and arrival events are independent. 
During the operation of the system, a scheduling policy is used to determine which set of jobs to serve at any time. The scheduling policy is allowed to be preemptive, i.e., we can put a job in service back to the queue and resume its service later.

We measure the performance of our scheduling policy based on \textit{mean waiting time} as defined below: let $\wait_i(\infty)$ denote the waiting time of type~$i$ jobs in steady-state, then the mean waiting time is defined as the steady-state expected waiting time averaged over all job types, i.e.,
\begin{equation*}
\mwt=\frac{1}{\arr}\sum_{i=1}^\njt \arr_i \E\left[\wait_i(\infty)\right],
\end{equation*}
where $\arr \triangleq \sumall \arr_i$ is the total arrival rate.

\subsection{Scaling regimes}\label{subsec:intro-scaling}
We study job delay in scaling regimes where the number of servers, $n$, goes to infinity.
Specifically, we consider a sequence of systems with parameters scaling up jointly with $n$, and analyze the growth/decrease rate of the mean waiting time.
In the considered scaling regimes, the arrival rates $\arr_i$ and server needs $\siz_i$ are allowed to scale with $n$, while the service rate $\sev_i$ and the number of job types $\njt$ stay constant. One key parameter for specifying a scaling regime is the \textit{slack capacity} $\slk$, defined as $\slk= n - \sum_{i=1}^\njt \frac{\arr_i \siz_i}{\sev_i}$, which is the expected number of idle servers in steady state. Slack capacity is used to specify the heaviness of traffic, which is alternatively specified by \emph{load} $\load$ given by $\load=\sum_{i=1}^\njt \frac{\arr_i\siz_i}{n\sev_i}$ in literature. 

\begin{figure}
    \centering
    \includegraphics[width=8cm]{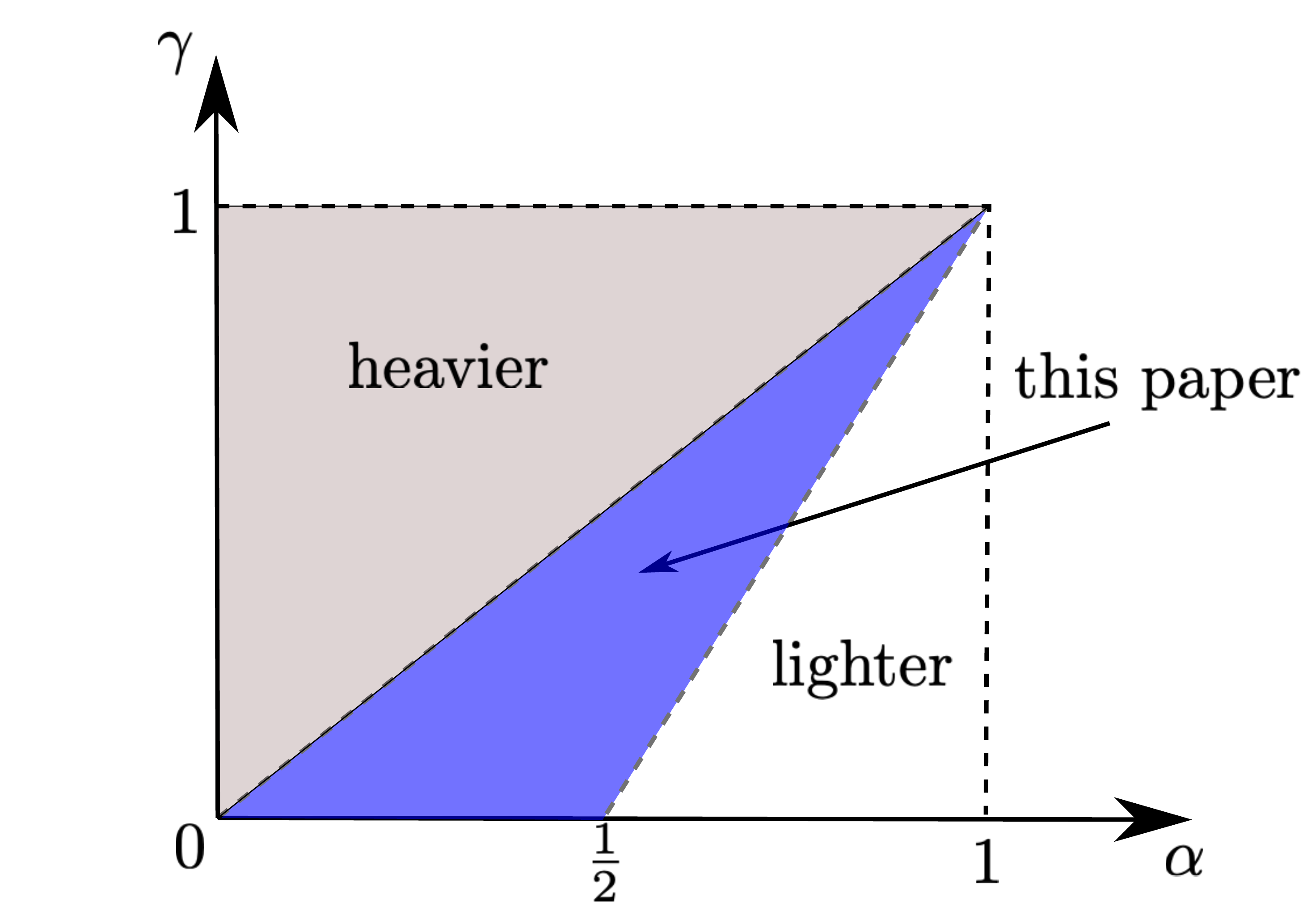}
    \caption{Scaling regimes under the special parameterization with slack capacity $\slk = n^\alpha$ and maximum server need $\Lm = n^\gamma$. The traffic is heavier as we move to the upper left. The three triangles are partitioned by lines $\alpha=\gamma$ and $\alpha=\frac{1+\gamma}{2}$.} 
    \label{fig:traffic}
\end{figure}

For expositional purposes, we now parameterize the scaling regimes in a specific way below to demonstrate our results.
Our general model is presented in Section~\ref{sec:model}.
Suppose $\Lm = n^\gamma$, $\slk = n^\alpha$ for some exponents $0\leq \alpha, \gamma < 1$, and the total arrival rate $\arr = \Theta(n)$.
 We aggregate all scaling regimes with the same $(\alpha, \gamma)$ pair into one point and plot all such points, as shown in Figure~\ref{fig:traffic}. We partition the set of exponent pairs $(\alpha, \gamma)\in [0,1)^2$ into three triangles, using the lines $\alpha=\gamma$ and $\alpha = \frac{1+\gamma}{2}$. The corresponding scaling regimes to the upper left are in general ``heavier'' than the scaling regimes to the lower right, since the former regimes have larger work variability and smaller slack capacity. We comment that the point $(\alpha, \gamma) = (\frac{1}{2}, 0)$ is analogous to the celebrated Halfin-Whitt regime introduced in \citet{HalWhi_81} and $(\alpha, \gamma) = (0,0)$ is analogous to the Non-Degenerate Slowdown (NDS) regime in \citet{Ata_12} in traditional multiclass M/M/$n$ models.

We focus on the scaling regimes where $\gamma < \alpha < \frac{1+\gamma}{2}$, marked in blue in Figure~\ref{fig:traffic}.
The scaling regimes satisfying the condition are not too light; the lighter regimes marked in white, studied in \citet{WanXieHar_21_2}, can be shown to have both queueing probability and mean waiting time diminish at a rate faster than any polynomial in $n$ under any reasonably work-conserving policy.
Meanwhile, the regimes under consideration are not too heavy either; the system still enjoys diminishing mean waiting time together with high system utilization.

\subsection{Results}\label{sec:intro-results}
We present our main results here in the specialized scaling regimes for expositional purposes.
The general forms with fully specified assumptions are presented in Sections~\ref{sec:model} and \ref{sec:main-results}. Our results and analysis heavily use the asymptotic notation. 
\footnote{We use the standard Bachmann–Landau notation. Consider two sequences $a(n)$ and $b(n)$ (or simply $a$ and $b$), where $b(n)$ is positive for large enough $n$. Then $a = \Obrac{b}$ if $\limsup_{n\to\infty}\frac{|a|}{b}<\infty$; $a = \obrac{b}$ if $\lim_{n\to\infty} \frac{a}{b} = 0$; $a = \Wbrac{b}$ if $\liminf_{n\to\infty}\frac{a}{b}>0$, which is equivalent to $b = \Obrac{a}$; $a = \wbrac{b}$ if $\lim_{n\to\infty} \frac{\abs{a}}{b} = \infty$, which is equivalent to $b = \obrac{a}$; $a = \Thebrac{b}$ if $a$ satisfies both $a = \Obrac{b}$ and $a = \Wbrac{b}$.} 

\begin{itemize}
    \item\textbf{Mean waiting time under FCFS.} The exact order of the mean waiting time under FCFS is given by
    \begin{equation}
        \mwt^{\textnormal{FCFS}} = \Thebrac{n^{\gamma-\alpha}}.
    \end{equation}
    \item\textbf{Mean waiting time lower bound.} Under any policy, the mean waiting time is lower bounded as
    \begin{equation}\label{eq:lower-intro}
        \mwt =  \Wbrac{n^{-\alpha}},
    \end{equation}
    \item\textbf{Order-wise optimal policy.} Consider a static priority policy that we call the \emph{Smallest-Need-First (SNF) policy}, which preemptively prioritizes the jobs with smaller server needs. Then the mean waiting time under SNF achieves the lower bound in \eqref{eq:lower-intro}, i.e.,
    \begin{equation}\label{eq:pp-intro}
        \mwt^{\PP} = \Thebrac{n^{-\alpha}}.
    \end{equation}
    Therefore, the SNF policy is order-wise optimal in the mean waiting time.
\end{itemize}

Comparing the mean waiting times under FCFS and under SNF, we can see that FCFS is strictly suboptimal, and SNF improves upon FCFS by a factor of $\Thebrac{n^{\gamma}}$.

A key to proving the mean waiting time results above is the \emph{order-wise tight} bounds on the expected workload we establish (Lemma~\ref{result:workload-lower} and Lemma~\ref{result:workload-upper}).
In addition, although we consider the system under the traffic regime where $\gamma < \alpha < \frac{1+\gamma}{2}$, we still need to analyze ``subsystems'' that are in the lighter traffic regime.
In this lighter regime, we show that the total server need decays faster than any polynomial (Lemma~\ref{lem:proof:total-server-need-expectation}).
All these lemmas hold under a very general class of policies, so they could be potentially relevant when we study policies other than FCFS and SNF.

\subsubsection*{\textbf{Results on queueing probability.}}
As a by-product to our analysis, we further derive an upper bound on the queueing probability under any work-conserving policy, presented in Corollary~\ref{result:queueing-probability}, which significantly improves upon the queueing probability upper bound in \citet{WanXieHar_21_2} in a slightly more constrained traffic regime.

\subsubsection*{\textbf{Simulation experiments.}}
The SNF policy we consider in our analysis is a \emph{preemptive} priority policy.
Preemption is sometimes undesirable in practice.
Therefore, we use simulation experiments to also explore a \emph{non-preemptive} version of the priority policy, which we call the non-preemptive Smallest-Need-First (SNF-NP) policy.
SNF-NP serves a job with the smallest server need in the queue when enough number of servers free up.
Our simulation experiments compare the mean waiting times under FCFS, SNF, and SNF-NP.
The simulation results, presented in Section~\ref{sec:simulation}, show that SNF-NP has comparable performance with SNF, and 
demonstrate
the performance gap between FCFS and SNF/SNF-NP.

\subsection{Technical challenges}
The main technical challenges in analyzing the considered multi-server-job system are rooted in the \emph{heterogeneity} among job types in both their service rates and their server needs.
Such heterogeneity makes the system dynamics multidimensional: neither the total number of jobs in service nor the total number of busy servers determines the current job departure rate.
We comment that even for a classical multi-class M/M/$n$ system, where there are multiple job types with different service rates but all job types have a server need of $1$, finding an optimal scheduling policy is known to be a hard problem, and solutions are available mostly in the so-called Halfin-Whitt heavy-traffic regime through the diffusion control problem \citep[][]{AtaManRei_04, HarZee_04, AtaGur_12}.
Compared with the classical multi-class M/M/$n$ system, our multiserver-job system has an additional layer of intricacy due to the heterogeneous server needs, which makes it possible for the system to have servers idling while there are jobs waiting in the queue.

To address the challenges due to heterogeneity, our analysis relies on various state-space concentration results.
State-space concentration is a phenomenon where the state concentrates around a \emph{subset} of the state space in steady state, observed in queueing systems in heavy-traffic or large-system regimes \citep[][]{WanMagSri_17_2, Liu_19, LiuYin_19_2, LiuYin_20, LiuGonYin_20_2, WenZhoSri_21_2}. 
In the multiserver-job system we consider, state-space concentration results are crucial for analyzing the system dynamics when the queue is nonempty.
The scenario when the queue is nonempty is especially important to our scaling regimes since the queueing probability may not be diminishing even when the mean waiting time is diminishing.
This contrasts with the analysis in the prior work of \citet{WanXieHar_21_2}, which focuses on diminishing queueing probability.
Furthermore, our performance goal is to achieve the \emph{optimal order} of the mean waiting time in large systems, which deviates from the traditional performance goal of minimizing delay or certain long-run cost.

\subsection{\textbf{Organization of the paper and relationship with the conference version}.}
This paper is organized as follows. In Section~\ref{sec:related-work}, we review some additional related work that has not been discussed in the introduction. 
We present our model and assumptions in Section~\ref{sec:model} and then formally state our three main theorems in Section~\ref{sec:main-results}. In Section~\ref{sec:drift-prelim}, we give an overview of the proof structure and preliminaries of our main proof technique, the drift method. In Section~\ref{sec:workload-bounds}, we state and prove Lemmas~\ref{result:workload-lower} and \ref{result:workload-upper}, which will be used to prove the three theorems in Sections~\ref{sec:proof:fcfs}, \ref{sec:proof:lower} and \ref{sec:proof:priority}. Finally, in Section~\ref{sec:simulation}, we present the simulation results.

This paper has the following differences from our previous conference version \citep{HonWan_22}. 
First, we have included a more comprehensive related work section (Section~\ref{sec:related-work}).
Second, we have included proofs of some important lemmas and theorem that were omitted due to the space limit in the conference paper.
These are the proofs of Lemmas~\ref{result:workload-lower} and \ref{result:workload-upper} and Theorem~\ref{result:waiting-time-fcfs}, which can be found in Section~\ref{sec:workload-bounds} and Section~\ref{sec:proof:fcfs}. 
Third, we have changed the name of the order-wise optimal policy that we propose from ``P-Priority'' to ``Smallest-Need-First'' to better reflect the feature of the policy.

\section{Related work}\label{sec:related-work}
In this section, we give a more detailed review of the prior work on multiserver-job models as well as some related models that are not covered in the introduction. 

\subsubsection*{\textbf{Multiserver-job model.}}
As mentioned in the introduction, the majority of prior work on the multiserver-job model has either focused on characterizing stability conditions \citep{GroHarSch_20,AfaBasGri_19,MorRum_16,RumMor_17}, or been restricted to the highly specialized settings with two servers \citep{BriGre_84,FilKar_06}. 
However, recently, there are two papers that study the delay performance of multiserver jobs \citep{GroHarSch_22_wfcs,GroScuHarSch_22_msj_srpt}, which are concurrent to the conference version of our work \citep{HonWan_22}.
\citet{GroHarSch_22_wfcs} characterizes the mean response time in a multiserver-job model under two proposed policies.
One of their policies is called ServerFilling.  \citet{GroScuHarSch_22_msj_srpt} then proposes and analyzes a variant of ServerFilling called ServerFilling-SRPT. 
The biggest distinction between their work and our work is in the scaling regime: in their work, the analysis of mean response time is asymptotically tight when the load of the system approaches one and the number of servers remains \textit{fixed}; in contrast, we consider the scaling regimes where the load, number of servers and server needs \textit{scale jointly}. 
Another distinction is the distributional assumptions on the server needs: their work assumes that the server needs are numbers that can divide the total number of servers, while our work assumes that the maximal server need is small compared with the slack capacity.

\subsubsection*{\textbf{Virtual machine (VM) scheduling.}}
A problem related to the multiserver-job scheduling problem studied in this paper is the virtual machine (VM) scheduling problem (see, e.g., \cite{MagSriYin_12,MagSri_13,MagSriYin_14,XieDonLu_15,PsyGha_18,PsyGha_19,StoZho_21}).
For the VM scheduling problem, typically the system consists of multiple servers, where each server has certain units of each type of resource (e.g., CPU, memory, storage).
A VM job demands to occupy multiple units of each type of resource.
Each VM job will be served on a single server.
Some results for the VM scheduling problem in the traditional heavy-traffic regime can be specialized to the multiserver job scheduling problem.
To see this, consider a VM scheduling problem where the system consists of a single server and there is a single resource type.
Then each unit of resource can be viewed as a server in the multiserver-job scheduling problem.
With this specialization, the results in \citep{MagSriYin_14} provide bounds on a linear combination of the queue lengths of different types of jobs.
The bounds are tight in the traditional heavy-traffic regime with a fixed amount of resources (number of servers in the multiserver-job setting).
However, these bounds do not directly translate into heavy-traffic optimality of mean job response time.

\subsubsection*{\textbf{Multitask job model.}}
A multitask job is a job that consists of a batch of tasks that can run on servers in parallel, which is similar to a multiserver job in that both can occupy multiple servers at the same time. However, unlike a multiserver job, the tasks of a multitask job can have different runtimes and do not need to be executed simultaneously. Multitask job model has been considered under a wide variety of settings, and is sometimes referred to as batch arrival model \citep[see, e.g.,][]{Mil_59, DawPen_19, DawFraPen_20, DawHamPen_19}. Recently, multitask job model is also extensively studied under the setting of parallel computing due to the popularity of large-scale data processing systems such as MapReduce, Apache Hadoop and Apache Spark. \citep[see, e.g.,][]{WenWan_21_2, Zub_20}. The work closest to our work is \citet{WenWan_21_2}, which shows diminishing queueing time for multi-server jobs in a load-balancing system where tasks of a job need to be dispatched to the queues at the servers upon arrival.

\subsubsection*{\textbf{Dropping model.}}
When the multiserver-job system does not have any queueing space and allows incoming jobs to be dropped, it becomes a model that has been studied in the literature and we refer to it as the \emph{dropping model}.
In this model, one can design a dropping policy that decides whether to drop an incoming job or not based on the types of the incoming job and of the jobs currently in service.
Under the policy that drops an incoming job only when it cannot fit into the servers, i.e., when its server need is larger than the number of available servers, the stationary distribution has a product form under exponentially distributed service times, as observed by \citet{ArtKau_79}.
The results have been generalized by \citet{Whi_85} to allow jobs to demand multiple resource types (e.g., both CPU and I/O) and by \citet{van_89} to allow general service time distributions. \citet{Tik_05} further
combined aspects of \citet{Whi_85} and \citet{van_89}.
Different dropping policies, which mostly fall within the class of trunk reservation policies, have been designed to minimize the cost associated with dropping \citep[][]{HunKur_94,BeaGibZac_95,HunLaw_97}.

\subsubsection*{\textbf{Streaming model.}}
The \emph{streaming model} for a communication network resembles the multiserver-job model in many aspects.
In a streaming model, the ``servers'' correspond to the bandwidth in the network and the ``jobs'' are data flows such as audio or video flows.
Then flows that require a fixed amount of bandwidth \citep[][]{Mel_96,DasSri_99,BenFreDel_01,PonKimMel_10}, sometimes referred to as streaming flows, can be viewed as multiserver jobs.
However, a communication network also features a network structure that the multiserver-job model does not have.
A communication network usually has both streaming flows and flows that are flexible in their bandwidth needs, and streaming flows again operate in the dropping model.
The performance metric in such a system typically combines the cost associated with dropping for streaming flows and the cost associated with delay for other flows.

\section{Model}\label{sec:model}
A basic description of the system parameters and dynamics has been given in the introduction section.
In this section, we provide formal descriptions of the scheduling policies, the system states, the scaling regime, and the concept of subsystems used in our analysis.

\subsubsection*{\textbf{Scheduling policies.}}
A scheduling policy decides which jobs to put into service at any moment of time.  
We are interested in the following two policies:
\begin{itemize}
    \item \emph{First-Come-First-Serve (FCFS)}: Jobs are placed onto servers in a First-Come-First-Serve fashion until either the next job in queue does not fit or all the jobs are in service.
    \item \emph{Smallest-Need-First (SNF)}: Recall that the job types are indexed in a way such that $\siz_1\leq \siz_2\leq \dots \leq \siz_\njt$. We assign priorities to job types such that a smaller index has a higher priority. Whenever there is a job arrival or departure, SNF preempts all the jobs in service and determines a new schedule from scratch.  
    SNF starts from job type~$1$ and places as many type~$1$ jobs as possible onto servers.  After this, if there are still servers available, SNF goes to the next priority level, type~$2$, and places as many type~$2$ jobs as possible onto servers.  This procedure continues until no more jobs in the queue can fit into the servers.
\end{itemize}

\subsubsection*{\textbf{System state.}}
Under FCFS or SNF, a Markovian representation of the system state can be described as follows. The state $\ve{u}$ of the Markov chain is an ordered list of the jobs in the system, sorted in their order of arrival, and each entry of $\ve{u}$ describes the type of the corresponding job and whether the job is in service or not. Let the state space be denoted as $\mathcal{U}$. Although the state space is infinite dimensional, in our analysis, we typically only need to focus on three $\njt$-dimensional vectors defined below.

For any time $t$ and each job type~$i$, let $X_i(t)$ denote the number of type~$i$ jobs in the system, $Z_i(t)$ denote the number of type~$i$ jobs in service, and $Q_i(t)\triangleq X_i(t)-Z_i(t)$ denote the number of type~$i$ jobs waiting in the queue. 
Note that since the total number of servers in use cannot exceed $n$, and we cannot serve more jobs than there are in the system, we have the following constraints:
\begin{equation}\label{eq:assump:job-constraints}
    \begin{aligned}
        \sum_{i=1}^\njt \siz_i Z_i(t) &\le n \quad\text{for all }t\geq 0,\\
        Z_i(t) &\leq X_i(t) \quad \text{for all } t\geq 0, i\in[\njt],
    \end{aligned}
\end{equation}
where $[\njt]$ denotes the index set $\{1,2, \dots, \njt\}$.

Let $X_i(\infty)$, $Z_i(\infty)$, and $Q_i(\infty)$ be random variables that follow the corresponding steady-state distributions when they exist.
We sometimes use vector representations of these quantities for convenience.  For example, we write $\ve{X}(t)=\left(X_1(t),X_2(t),\dots,X_\njt(t)\right)$.  We define the vectors $\ve{Z}(t)$, $\ve{Q}(t)$, $\ve{X}(\infty)$, $\ve{Z}(\infty)$, and $\ve{Q}(\infty)$ in a similar way.
Note that these random elements correspond to the $n$ server system and thus their distributions depend on $n$.
Throughout this paper, for conciseness, we often omit the $(\infty)$ in the steady-state random elements except in theorem or lemma statements.

Recall that our performance metric is the mean waiting time $\mwt$, given by
$$\mwt=\frac{1}{\arr}\sum_{i=1}^\njt \arr_i \E\big[\wait_i(\infty)\big],$$
where $\wait_i(\infty)$ is the waiting time of type~$i$ jobs in steady-state.
Note that by Little's law, the mean waiting time can be written as
\begin{equation*}
\mwt=\frac{1}{\arr}\sum_{i=1}^\njt\E\left[Q_i(\infty)\right].
\end{equation*}
Therefore, bounding the mean waiting time reduces to bounding the expected total queue length.

\subsubsection*{\textbf{Scaling regimes.}}
Recall that we consider a scaling regimes where number of servers, $n$, goes to infinity, and the arrival rates $\arr_i$ and server needs $\siz_i$ are allowed to scale with $n$, \textit{while the service rate $\sev_i$ and the number of job types $\njt$ stay constant}.
The scaling regimes are specified by the slack capacity $\slk\triangleq n - \sum_{i=1}^\njt \frac{\arr_i \siz_i}{\sev_i}$, the maximal server need $\Lm \triangleq\max_{i\in[\njt]} \siz_i=\siz_\njt$ and another parameter called the \textit{work variability}: $\sysvar \triangleq \sumall \frac{\arr_i \siz_i^2}{\sev_i^2}$. Work variability reflects the variability of the ``work'' caused by job arrivals in terms of server--time product, which is $\frac{\siz_i}{\sev_i}$ in expectation for each type~$i$ job. To help later presentation, we also define the  \emph{load brought by type~$i$ jobs} $\load_i$ as $\load_i = \frac{\arr_i \siz_i}{n\sev_i}$.

We state our assumptions below. Throughout the paper, $\log n$ denotes natural logarithm.
\begin{assumption}[Heavy traffic assumption]\label{assump:traffic}
     The slack capacity $\slk$ is small relative to $\sqrt{\sysvar}$:
    \begin{equation}\label{eq:assump:traffic}
         \slk=o\left(\frac{\sqrt{\sysvar}}{\log n}\right).
    \end{equation}
\end{assumption}

\begin{assumption}[Maximal server need assumption]\label{assump:lm-bound}
     There exists a constant $\Lmratio$ with $0 < \Lmratio < 1$ such that
    \begin{equation}\label{eq:assump:lm-bound}
        \Lm \leq \Lmratio \slk.
    \end{equation}
\end{assumption}

\begin{assumption}[Commonness assumption]\label{assump:commonness}
     The load brought by the maximal-need jobs is not too small:
    \begin{equation}\label{eq:assump:commonness}
        \rho_{\njt} \triangleq \frac{\arr_{\njt}\siz_{\njt}}{n\sev_{\njt}} = \wbrac{\sqrt{\frac{\slk \log n}{\sqrt{\sysvar} }\cdot \frac{\Lm}{n}}\log n}.
    \end{equation}
\end{assumption}

Assumption~\ref{assump:traffic} guarantees that the traffic is not too light, while Assumption~\ref{assump:lm-bound} guarantees that the system is stable under FCFS and SNF. In the simplified setting of Section~\ref{sec:introduction} where $\Lm=n^\gamma$ and $\slk=n^\alpha$, the first two assumptions correspond to $\alpha < \frac{1+\gamma}{2}$ and $\alpha > \gamma$, which exclude the white and grey parts in Figure~\ref{fig:traffic}, respectively.  Assumption~\ref{assump:commonness} states that the load brought by the maximal-need jobs are not too small. To understand the right hand side expression in Assumption~\ref{assump:commonness}, note that it is automatically satisfied when $\load_{\njt} = \wbrac{\sqrt{\frac{\Lm}{n}}\log n}$. For example, when $\Lm = \Thebrac{\sqrt{n}}$, then it suffices to have $\load_{\njt} = \wbrac{n^{-1/4}\log n}$. However, when the traffic becomes heavier, i.e., when $\frac{\slk \log n}{\sqrt{\sysvar} }$ becomes smaller, Assumption~\ref{assump:commonness} in \eqref{eq:assump:commonness} can be much weaker than $\load_{\njt}  = \wbrac{\sqrt{\frac{\Lm}{n}}\log n}$.

To have an intuitive view of the magnitudes of the parameters, we give the following asymptotics:
$\sysvar = \Obrac{n\Lm}$,  $\slk = \obrac{\frac{n}{(\log n)^2}}$, and $\Lm \leq \Lmratio \slk = \obrac{\frac{n}{(\log n)^2}}$. They can be verified using the definitions and assumptions.

\subsubsection*{\textbf{Subsystems.}}
In our analysis, we frequently use the concept of the $i$-th \emph{subsystem}, which is the system that has all type~$j$ jobs in the original system with $j\leq i$ and removes all type~$k$ jobs with $k\geq i$.
In the $i$-th subsystem, the slack capacity becomes $\slk_i = n - \sum_{j=1}^i \frac{\arr_j \siz_j}{\sev_j}$, and the work variability becomes $\sysvar_i = \sum_{j=1}^i \frac{\arr_j \siz_j^2}{\sev_j^2}$.
Note that $\slk = \slk_{\njt}$ and $\sysvar_{\njt} = \sysvar$.
The maximal server need in the $i$-th system is $\siz_i$ since $\siz_1 \leq \siz_2 \leq \dots \le \siz_i$.

As $i$ increases, the load of the $i$-th subsystem gets heavier since $\slk_i$ becomes smaller. There is a \emph{critical index} $\indexmain$ such that 
\begin{equation}\label{eq:istar}
\indexmain = \min\left\{i\in[\njt]\;\middle|\; \slk_i=\obrac{\frac{\sqrt{\sysvar_i}}{\log n}}\right\},
\end{equation}
i.e., the $\indexmain$th subsystem is the smallest subsystem whose traffic regime is as heavy as that of the original system.
Recall that we have assumed $\slk=\obrac{\frac{\sqrt{\sysvar}}{\log n}}$, so the set in \eqref{eq:istar} contains at least the index $I$ and thus $\indexmain$ is well-defined. 
Note that $\slk_i$ is monotonically decreasing while $\sysvar_i$ is monotonically increasing.  Thus
the index $\indexmain$ serves as a division point: 
for any $i$ with $\indexmain \leq i \leq \njt$, we have $\slk_i = \obrac{\frac{\sqrt{\sysvar_i}}{\log n}}$, resulting in a heavier traffic regime; and for any $i$ with $1 \leq i < \indexmain$, we have $\slk_i = \Wbrac{\frac{\sqrt{\sysvar_i}}{\log n}}$, resulting in a lighter traffic regime.

\section{Main results}\label{sec:main-results}
In this section, we first
present our main results under the scaling regimes we specify in Section~\ref{sec:model} as Theorems~\ref{result:waiting-time-fcfs}, \ref{result:waiting-time-lower-bound}, and \ref{result:waiting-time-priority}.
Then, to demonstrate our results in a more intuitive fashion, we consider the parameterized scaling regimes defined in Section~\ref{subsec:intro-scaling} as a special case, and present the specialized form of our results as Corollary~\ref{coro:parameterized}.

\begin{theorem}[Mean waiting time under FCFS]\label{result:waiting-time-fcfs}
    Consider the multiserver-job system with $n$ servers satisfying Assumptions~\ref{assump:traffic} and \ref{assump:lm-bound}. Under the FCFS policy, for each $i\in [\njt]$, the expected waiting time of type~$i$ jobs satisfies
    \begin{align} 
    \E\big[\wait_i(\infty)\big]^\FCFS & \ge \frac{\sysvar}{n(\slk+\Lm)}\cdot \left(1-o(1)\right),\\
    \E\big[\wait_i(\infty)\big]^\FCFS & \leq \frac{\sysvar}{n(\slk-\Lm)} \cdot \left(1+o(1)\right).
    \end{align}
    Consequently,
    \begin{equation}
    \mwt^\FCFS = \Thebrac{\frac{\sysvar}{n\slk}}, \quad  \E\big[\wait_i(\infty)\big]^\FCFS = \Thebrac{\frac{\sysvar}{n\slk}}.
    \end{equation}
\end{theorem}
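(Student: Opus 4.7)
The plan is to deduce \Cref{result:waiting-time-fcfs} from the workload bounds in \Cref{result:workload-lower} and \Cref{result:workload-upper} by sandwiching the per-type mean waiting time between two quantities obtained by dividing the expected queue workload by bounds on the effective drainage rate under FCFS.

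First, I would express $\E[\wait_i(\infty)]^\FCFS$ in terms of the expected queue workload $\E[W_Q(\infty)]$, where $W_Q(\infty) \triangleq \sum_{j=1}^\njt (\siz_j / \sev_j) Q_j(\infty)$. By PASTA, a type~$i$ arrival sees the system in steady state, and under FCFS it must wait for the queue ahead of it to transition into service before its own $\siz_i$ servers can be allocated. Two bounds on the effective drainage rate of the queue follow from work conservation and the structure of FCFS. First, at any time $t$, the cumulative amount of server-time work drained over an interval is at most $n$ times the interval length, since at most $n$ servers are ever busy, giving a drainage rate of at most $n$. Second, whenever the queue is nonempty, the head-of-line job fails to fit into the currently free servers, so strictly fewer than $\Lm$ servers are idle and hence the number of busy servers is at least $n - \Lm$. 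These two facts sandwich the queue-clearing time and yield
\[
\frac{\E[W_Q(\infty)]}{n}\bigl(1 - o(1)\bigr) \;\le\; \E[\wait_i(\infty)]^\FCFS \;\le\; \frac{\E[W_Q(\infty)]}{n - \Lm}\bigl(1 + o(1)\bigr),
\]
where the $o(1)$ corrections absorb the residual head-of-line delay that a specific type~$i$ job at the head of the queue spends waiting for $\siz_i$ servers to free, which I would bound by $O(\Lm/n)$ using the steady-state rate of in-service job completions.

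Next, I would invoke \Cref{result:workload-lower} and \Cref{result:workload-upper}, which provide bounds of the form $\sysvar/(\slk \pm \Lm)$ on the expected workload up to $(1 \pm o(1))$ factors. Translating these into bounds on $\E[W_Q(\infty)]$ requires estimating the in-service work $\E[W_S(\infty)] = \sum_j \arr_j \siz_j / \sev_j^2$, which by Little's law $\E[Z_j(\infty)] = \arr_j/\sev_j$ is a tractable quantity, and verifying it is of lower order than the target in the heavy-traffic regime specified by \Cref{assump:traffic}. Substituting the workload bounds into the sandwich display yields the two per-type bounds in the theorem, and the $\Thebrac{\sysvar / (n\slk)}$ order bounds on both $\E[\wait_i(\infty)]^\FCFS$ and $\mwt^\FCFS$ then follow from \Cref{assump:lm-bound} ($\Lm \le \Lmratio \slk$ with $\Lmratio < 1$).

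The main obstacle is making the sandwich argument rigorous. Unlike an M/G/1 queue, where the virtual workload exactly equals the waiting time of the next arrival, here the drainage rate fluctuates stochastically with the schedule, and FCFS can cause up to $\Lm - 1$ servers to idle even when the queue is nonempty. Converting the sample-path drainage-rate bounds into bounds on $\E[\wait_i(\infty)]^\FCFS$ requires a careful tagged-arrival/coupling argument that identifies the work ``seen ahead'' by a type~$i$ arrival with $\E[W_Q(\infty)]$ via PASTA, and a control on the head-of-line residual delay using the rate at which servers free up in steady state. This drainage-rate control is the delicate piece that has no direct analogue in classical single-server queues.
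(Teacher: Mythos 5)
Your target inequalities are the right ones, but the step that converts the workload bounds into per-type waiting times --- dividing $\E[W_Q(\infty)]$ by a drainage rate between $n-\Lm$ and $n$ --- has a gap that is fatal in exactly the regime this theorem addresses. A tagged type-$i$ arrival waits until its predecessors have \emph{entered} service, whereas the quantity that decreases at rate equal to the number of busy servers is the \emph{residual work} $V(t)$ of those predecessors. The sample-path accounting therefore gives $T\in\bigl[\tfrac{V(0)-V(T)}{n},\tfrac{V(0)-V(T)}{n-\Lm}\bigr]$ with $V(0)=W_Q(0)+W_S(0)$, where $W_S(0)$ is the residual in-service work at the arrival epoch and $V(T)$ is the residual work of predecessors still running when the tagged job starts. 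Both boundary terms are $\Thebrac{n}$: $\E[W_S(0)]=\sum_j \arr_j\siz_j/\sev_j^2\ge (n-\slk)/\sevmax$ and $\E[V(T)]\le n/\sevmin$, while the target $\E[W_Q]=\Thebrac{\sysvar/\slk}=\Obrac{n\Lm/\slk}$ is $o(n)$ whenever $\Lm=o(\slk)$ (e.g., the parameterized regime $\gamma<\alpha$). So your stated plan to check that the in-service work is ``of lower order than the target'' cannot succeed; what the argument actually needs is $\E[W_S(0)]-\E[V(T)]=o(\sysvar/\slk)$, a near-exact cancellation of two order-$n$ quantities, one of which is evaluated at the non-stationary epoch $T$. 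That cancellation is precisely the step you defer to ``a careful tagged-arrival/coupling argument,'' and it is the crux, not a technicality. (The $\Obrac{\Lm/n}$ head-of-line correction you do identify is indeed negligible, since $\Lm\slk\le\Lmratio\slk^2=\obrac{\sysvar}$.)

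The paper's proof avoids drainage accounting altogether. It introduces a Modified-FCFS rule that admits the next job only when at least $\Lm$ servers are idle, so that a job's start-of-service time is determined by its predecessors and is independent of its own type; this yields the exact proportionality $\E[\sysup{Q}_i]=\tfrac{\arr_i}{\arr}\E[\sysup{Q}_\Sigma]$ (\Cref{lem:fcfs:independence}). A sample-path induction on job indices (\Cref{lem:fcfs:sandwich}) then sandwiches true FCFS between an $n$-server and an $(n+\Lm)$-server Modified-FCFS system, and \Cref{result:workload-lower,result:workload-upper} applied to those two systems, combined with per-type Little's law, give the theorem. Note that the proportionality is what your route is implicitly missing as well: Brumelle's identity $\E[W_Q]=\sum_j\arr_j\tfrac{\siz_j}{\sev_j}\E[\wait_j]$ only becomes $(n-\slk)\E[\wait_i]$ once you know all types wait equally long, and that does not follow from PASTA alone because the fitting condition at the head of the line depends on the tagged job's own server need.
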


\begin{theorem}[Mean waiting time lower bound]\label{result:waiting-time-lower-bound}
    Consider the multiserver-job system with $n$ servers satisfying Assumptions~\ref{assump:traffic} and \ref{assump:lm-bound}. Under any policy, the mean waiting time is lower bounded as 
    \begin{equation}\label{eq:result:mean-waiting-time-lower-bound}
        \mwt  \geq \max_{\indexmain \leq i \leq \njt} \frac{1}{\arr} \frac{\sevmin\sysvar_i}{\siz_i \slk_i} \cdot (1 - o(1)) 
        = \Wbrac{\max_{\indexmain \leq i \leq \njt}\frac{1}{\arr} \frac{\sysvar_i}{\siz_i \slk_i}}
    \end{equation}
    where $\indexmain$ is the critical index defined in \eqref{eq:istar} and $\sevmin=\min_{i\in[\njt]}\sev_i$, and the expression represented by $o(1)$ is independent of the policies.
\end{theorem}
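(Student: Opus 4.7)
The strategy is to lower bound $\mwt = \frac{1}{\arr}\sum_{j=1}^{\njt}\E[Q_j(\infty)]$ by relating the expected total queue length to the workload of a judiciously chosen subsystem. Fix any index $i$ with $\indexmain \le i \le \njt$ and consider the $i$-th subsystem, which contains only jobs of types $1,\dots,i$ and has parameters $\slk_i$, $\sysvar_i$, and maximal server need $\siz_i$. Since all the queue lengths are nonnegative, $\sum_{j=1}^{\njt}\E[Q_j(\infty)]\ge\sum_{j=1}^i\E[Q_j(\infty)]$, so it suffices to lower bound the aggregated queue length for types $1,\dots,i$.

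The core step invokes Lemma~\ref{result:workload-lower}, applied to the $i$-th subsystem, to obtain a Kingman-type lower bound on the expected queueing workload $\E[W_i^Q(\infty)]\ge\sysvar_i/\slk_i\cdot(1-o(1))$. This lemma is policy-independent because it is derived from the steady-state balance of the workload process: the mean drift is zero, while the instantaneous squared-variation is driven by arrivals at rate $\sysvar_i$, so a Lyapunov calculation with $V(w)=w^2$ produces the familiar $\sysvar_i/\slk_i$ form. The heavy-traffic condition $i\ge\indexmain$, namely $\slk_i=o(\sqrt{\sysvar_i}/\log n)$, is precisely what allows this Kingman-type bound to be clean to leading order in the $i$-th subsystem.

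Next, I translate the workload bound into a queue-length bound via the memorylessness of exponential service: each type-$j$ job waiting in the queue contributes expected remaining work $\siz_j/\sev_j$, hence
\begin{equation*}
\E[W_i^Q(\infty)]=\sum_{j=1}^i\frac{\siz_j}{\sev_j}\E[Q_j(\infty)]\le\frac{\siz_i}{\sevmin}\sum_{j=1}^i\E[Q_j(\infty)],
\end{equation*}
where the inequality uses $\siz_j\le\siz_i$ and $\sev_j\ge\sevmin$ for every $j\le i$. Combining this with the workload lower bound and dividing by $\arr$,
\begin{equation*}
\mwt\ge\frac{1}{\arr}\sum_{j=1}^i\E[Q_j(\infty)]\ge\frac{\sevmin}{\arr\siz_i}\E[W_i^Q(\infty)]\ge\frac{\sevmin\sysvar_i}{\arr\siz_i\slk_i}(1-o(1)).
\end{equation*}
Taking the maximum over $\indexmain\le i\le\njt$ yields the theorem.

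The principal obstacle I anticipate is hidden inside Lemma~\ref{result:workload-lower}: an arbitrary scheduling policy may leave servers idle even while jobs wait, so the workload process does not simply drain at rate $n$, and its drift and variance must be controlled using Assumption~\ref{assump:lm-bound} together with the heavy-traffic condition defining $\indexmain$. A related subtlety is whether the lemma bounds the queueing workload $W_i^Q$ directly or the total workload $W_i=W_i^Q+W_i^S$; in the latter case one must additionally account for the service-in-progress contribution $\E[W_i^S]=\sum_{j=1}^i\arr_j\siz_j/\sev_j^2$, obtained by Little's law on the service facility, and verify via the critical-index condition that it is absorbed into the $(1-o(1))$ factor. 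Once these workload-level issues are settled, the remainder is the short algebraic manipulation sketched above.
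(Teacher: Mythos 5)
Your proposal is correct and follows essentially the same route as the paper: apply the policy-independent workload lower bound (Lemma~\ref{result:workload-lower}) to each $i$-th subsystem with $\indexmain \le i \le \njt$, convert workload to total queue length via $\frac{\siz_j}{\sev_j} \le \frac{\siz_i}{\sevmin}$ for $j\le i$, and maximize over $i$ (the paper merely packages this conversion as a small linear program). The only step you leave implicit is why the lemma applied to the $i$-th subsystem constrains the original system's queue lengths for types $1,\dots,i$ — the paper justifies this with a one-line imitation argument (the subsystem can replicate the original policy's service decisions for those types) — and your worry about queueing versus total workload is moot since Lemma~\ref{result:workload-lower} bounds $\E\bigl[\sum_j \frac{\siz_j}{\sev_j}Q_j\bigr]$ directly.
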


\begin{theorem}[Mean waiting time under SNF]\label{result:waiting-time-priority}
    Consider the multiserver-job system with $n$ servers satisfying Assumptions~\ref{assump:traffic}, \ref{assump:lm-bound}, and \ref{assump:commonness}. Under the SNF policy, the mean waiting time satisfies
    \begin{equation}\label{eq:result:mean-waiting-time-priority}
        \mwt^\PP \leq \frac{1}{\arr} \sum_{i=\indexmain}^\njt \frac{\sevmax \sysvar_i}{\siz_i (\slk_i - \siz_i)} \cdot \left(1+o(1)\right)
        =\Obrac{\max_{\indexmain \leq i \leq \njt}\frac{1}{\arr} \frac{\sysvar_i}{\siz_i \slk_i}},
    \end{equation}
    where $\indexmain$ is the critical index defined in \eqref{eq:istar} and $\sevmax=\max_{i\in[\njt]}\sev_i$.
    Consequently, the SNF policy achieves the optimal order of the mean waiting time.
\end{theorem}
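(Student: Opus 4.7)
The plan is to exploit the decoupling induced by SNF. Because types $1,\ldots,i$ have strict preemptive priority over types $i+1,\ldots,\njt$, the schedule SNF produces from scratch at every event places type-$1$ jobs first, then type-$2$, and so on, so the number of type-$i$ jobs in service $Z_i$ is a deterministic function of $(X_1,\ldots,X_i)$. Consequently the marginal dynamics of $(X_1,\ldots,X_i,Z_1,\ldots,Z_i)$ under SNF in the full system coincide exactly with those of the $i$-th subsystem evolving in isolation under SNF. Combined with Little's law this yields $\mwt^\PP = \frac{1}{\arr}\sum_{i=1}^\njt \E[Q_i(\infty)]$, so it suffices to bound each $\E[Q_i]$ by analyzing the corresponding subsystem. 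I would partition the sum at $\indexmain$ and handle the heavy and light regimes separately.

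For $i \geq \indexmain$, let $W^{(i)}$ denote the normalized workload of the $i$-th subsystem. Queued type-$i$ jobs alone contribute work $Q_i \cdot \siz_i/\sev_i$ to $W^{(i)}$, so $\E[Q_i] \leq (\sev_i/\siz_i)\,\E[W^{(i)}]$. The $i$-th subsystem satisfies Assumption~\ref{assump:traffic} by the definition of $\indexmain$ in \eqref{eq:istar}, and Assumption~\ref{assump:lm-bound} because $\siz_i \leq \Lm \leq \Lmratio \slk \leq \Lmratio \slk_i$ using $\slk_i \geq \slk$. Applying the workload upper bound Lemma~\ref{result:workload-upper} to the $i$-th subsystem then yields $\E[W^{(i)}] \leq \sysvar_i/(\slk_i - \siz_i)\cdot(1 + o(1))$, and hence $\E[Q_i] \leq \sevmax \sysvar_i/(\siz_i(\slk_i - \siz_i))\cdot(1+o(1))$ after bounding $\sev_i \leq \sevmax$.

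For $i < \indexmain$ the $i$-th subsystem lies in the lighter traffic regime, so by Lemma~\ref{lem:proof:total-server-need-expectation} (applied to the $i$-th subsystem, whose SNF-induced dynamics fall within the general class the lemma covers) the expected total server need $\E[\sum_{j\leq i} X_j \siz_j]$ decays faster than any polynomial in $n$; since $\siz_j \geq 1$ this forces $\E[\sum_{j\leq i} Q_j]$ to be super-polynomially small. Assumption~\ref{assump:commonness} ensures that the principal term $\frac{1}{\arr}\max_{\indexmain \leq i \leq \njt}\sysvar_i/(\siz_i\slk_i)$ is only polynomially small in $n$, so the $i<\indexmain$ contributions are $o(1)$ times the principal term. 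Summing the heavy-regime contributions over the constantly many indices $\indexmain \leq i \leq \njt$ produces both the explicit sum and the $\Obrac{\cdot}$ form of \eqref{eq:result:mean-waiting-time-priority}, and matches the lower bound of Theorem~\ref{result:waiting-time-lower-bound} up to the $\sevmax/\sevmin$ factor, giving order-wise optimality.

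The main obstacle is the bookkeeping at the boundary between the light and heavy regimes: one must argue that the super-polynomial decay of the light-regime contributions strictly dominates the at-most-polynomially-small principal term, which is precisely where Assumption~\ref{assump:commonness} is used. The subsystem decoupling itself is a routine consequence of strict preemptive priority, and once the subsystem assumptions have been verified, invoking Lemma~\ref{result:workload-upper} is straightforward; the delicate part is coupling these two asymptotic scales correctly.
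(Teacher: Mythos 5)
Your overall skeleton matches the paper's: reduce to bounding each $\E[Q_i]$ via Little's law, use the SNF decoupling to identify $\E\bigl[\sum_{j\le i}\siz_j Q_j\bigr]$ with a quantity in the $i$-th subsystem, apply Lemma~\ref{result:workload-upper} for $i\ge\indexmain$, and argue that the light-traffic indices contribute negligibly. The heavy-regime half of your argument is essentially the paper's Case~1 and is fine.

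The gap is in your treatment of $i<\indexmain$. You invoke Lemma~\ref{lem:proof:total-server-need-expectation} for \emph{every} such index, but that lemma requires the subsystem to satisfy $\slk_i=\wbrac{\sqrt{n\siz_i}\log n}$, and this is not automatic from $\slk_i=\Wbrac{\sqrt{\sysvar_i}/\log n}$. What Assumption~\ref{assump:commonness} actually buys (via $\slk_i\ge n\load_\njt$ for $i<\njt$) is $\slk_i=\wbrac{\sqrt{\tfrac{\slk\log n}{\sqrt{\sysvar}}\, n\Lm}\,\log n}$, which implies the hypothesis of Lemma~\ref{lem:proof:total-server-need-expectation} only when $\siz_i=\Obrac{\tfrac{\slk\log n}{\sqrt{\sysvar}}\Lm}$. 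Since $\tfrac{\slk\log n}{\sqrt{\sysvar}}=o(1)$ under Assumption~\ref{assump:traffic}, there can be indices $i<\indexmain$ with $\siz_i$ between $\tfrac{\slk\log n}{\sqrt{\sysvar}}\Lm$ and $\Lm$ for which the super-polynomial decay is not available. The paper handles these with a separate case: it applies the second branch of Lemma~\ref{result:workload-upper} to get $\E\bigl[\sum_{j\le i}\tfrac{\siz_j}{\sev_j}(X_j-\xnom_j)\bigr]=\Obrac{\sqrt{\sysvar_i}\log n}$ and then divides by the (large) $\siz_i$ to conclude $\E[Q_i]=\obrac{\tfrac{\sysvar}{\slk\Lm}}$, which is dominated by the $i=\njt$ term of the main sum. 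You need this extra case (or an equivalent argument) to close the proof; also note that the role of Assumption~\ref{assump:commonness} is to validate the hypothesis of Lemma~\ref{lem:proof:total-server-need-expectation} in the small-$\siz_i$ case, not merely to keep the principal term polynomially large.
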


We have a more general bound for the SNF policy that holds without Assumption~\ref{assump:commonness}. Interested readers can refer to Appendix~\ref{sec:discussion}.

Below we state the results appearing in Section~\ref{sec:introduction} as direct consequences to the above theorems.

\begin{corollary}[Mean waiting times in the parameterized scaling regimes]\label{coro:parameterized}
    Consider the multiserver-job system with $n$ servers satisfying Assumptions~1, 2 and 3. Suppose the maximal server need $\Lm = n^\gamma$ and the slack capacity $\slk = n^\alpha$, then the assumptions simplify to $0\leq\gamma < \alpha < \frac{1+\gamma}{2}<1$, $\rho_\njt = \Thebrac{1}$. We further assume that the total arrival rate $\arr = \Theta(n)$. Then we have the following results:
    \begin{enumerate}
        \item Under the FCFS policy, for each $i\in[\njt]$, the expected waiting time of type~$i$ jobs satisfies
        \begin{equation}
            \E\big[\wait_i(\infty)\big]^\FCFS = \Thebrac{n^{\gamma - \alpha}},
        \end{equation}
        and the mean waiting time over all job types also satisfies
        \begin{equation}
            \mwt^\FCFS = \Thebrac{n^{\gamma - \alpha}}.
        \end{equation}
    \item Under any policy, the mean waiting time is lower bounded as
    \begin{equation}
        \mwt = \Wbrac{n^{ - \alpha}},
    \end{equation}
    where the expression represented by $\Wbrac{n^{-\alpha}}$ is independent of the policies.
    \item The mean waiting time under the SNF policy satisfies
    \begin{equation}
        \mwt^\PP = \Thebrac{n^{ - \alpha}}.
    \end{equation}
    \end{enumerate}
\end{corollary}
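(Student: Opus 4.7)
The plan is to derive Corollary~\ref{coro:parameterized} as a direct consequence of Theorems~\ref{result:waiting-time-fcfs}, \ref{result:waiting-time-lower-bound}, and \ref{result:waiting-time-priority} by plugging in the parameterization and computing the resulting orders. The only substantive computation is to pin down the order of the work variability $\sysvar$ and to verify that the three abstract assumptions translate exactly to the stated inequalities $0 \le \gamma < \alpha < \tfrac{1+\gamma}{2} < 1$ together with $\rho_\njt = \Theta(1)$.

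First I would establish that $\sysvar = \Theta(n^{1+\gamma})$. The upper bound is straightforward: $\sysvar = \sum_i \arr_i \siz_i^2/\sev_i^2 \le (\Lm/\sevmin)\sum_i \arr_i \siz_i/\sev_i \le (n^\gamma/\sevmin)\cdot n = O(n^{1+\gamma})$. The matching lower bound uses the commonness hypothesis $\rho_\njt = \Theta(1)$: this forces $\arr_\njt \siz_\njt = \Theta(n)$, hence $\arr_\njt = \Theta(n^{1-\gamma})$, and so $\sysvar \ge \arr_\njt \siz_\njt^2/\sev_\njt^2 = \Theta(n^{1-\gamma}\cdot n^{2\gamma}) = \Theta(n^{1+\gamma})$. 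With this in hand, Assumption~\ref{assump:traffic} reads $n^\alpha = o(n^{(1+\gamma)/2}/\log n)$, which is the condition $\alpha < \tfrac{1+\gamma}{2}$; Assumption~\ref{assump:lm-bound} reads $n^\gamma \le \Lmratio n^\alpha$, which is $\gamma < \alpha$; and Assumption~\ref{assump:commonness} becomes $1 = \omega\bigl(\sqrt{n^{\alpha-(1+\gamma)/2}\cdot n^{\gamma-1}\log n}\,\log n\bigr)$, whose right-hand side is $n^{o(1)}\cdot n^{(2\alpha+\gamma-3)/4}$ and vanishes because $2\alpha+\gamma < 1+2\gamma < 3$. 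So the three assumptions are exactly the stated condition set.

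Next I would read off part~(1) directly from Theorem~\ref{result:waiting-time-fcfs}: since $\mwt^\FCFS = \Theta(\sysvar/(n\slk))$ and equivalently $\E[\wait_i(\infty)]^\FCFS = \Theta(\sysvar/(n\slk))$ for each $i$, both quantities equal $\Theta(n^{1+\gamma}/(n\cdot n^\alpha)) = \Theta(n^{\gamma-\alpha})$. For part~(2), I would plug $i=\njt$ into the lower bound of Theorem~\ref{result:waiting-time-lower-bound}, which yields $\mwt = \Omega\bigl(\sysvar/(\arr \Lm \slk)\bigr) = \Omega(n^{1+\gamma}/(n\cdot n^\gamma\cdot n^\alpha)) = \Omega(n^{-\alpha})$; the maximum over $i$ can only be larger, so the claimed bound holds.

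For part~(3), I would show that the maximum in Theorem~\ref{result:waiting-time-priority} is itself $O(n^{-\alpha})$. The key observation is a uniform bound on $\sysvar_i/\siz_i$: since $\sysvar_i = \sum_{j\le i}\arr_j\siz_j^2/\sev_j^2 \le (\siz_i/\sevmin)\sum_{j\le i}\arr_j\siz_j/\sev_j \le \siz_i n/\sevmin$, we have $\sysvar_i/\siz_i = O(n)$ for every $i$. Combining this with $\slk_i \ge \slk = n^\alpha$ (the slack capacity is monotonically decreasing in $i$) and $\arr = \Theta(n)$, each summand in \eqref{eq:result:mean-waiting-time-priority} satisfies $\sysvar_i/(\arr \siz_i \slk_i) = O(n/(n\cdot n^\alpha)) = O(n^{-\alpha})$. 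Since the number of job types $\njt$ is constant, summing gives $\mwt^\PP = O(n^{-\alpha})$, which together with part~(2) yields $\mwt^\PP = \Theta(n^{-\alpha})$. The only mild subtlety in this plan is the $\slk_i - \siz_i$ in the denominator of \eqref{eq:result:mean-waiting-time-priority}: I would handle it by noting $\siz_i \le \Lm \le \Lmratio \slk \le \Lmratio \slk_i$, so $\slk_i - \siz_i = \Theta(\slk_i)$, absorbing it harmlessly into the constant. There is no real technical obstacle here; the proof is bookkeeping once $\sysvar = \Theta(n^{1+\gamma})$ is established.
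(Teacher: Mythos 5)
Your proposal is correct and matches what the paper intends: the corollary is stated as a direct consequence of Theorems~\ref{result:waiting-time-fcfs}--\ref{result:waiting-time-priority}, and your substitution argument (establishing $\sysvar=\Thebrac{n^{1+\gamma}}$ from $\rho_\njt=\Thebrac{1}$, translating the three assumptions into $\gamma<\alpha<\frac{1+\gamma}{2}$, and bounding $\sysvar_i/\siz_i=\Obrac{n}$ with $\slk_i-\siz_i=\Thebrac{\slk_i}\geq\Thebrac{n^\alpha}$ for part~(3)) is exactly the bookkeeping the paper leaves implicit.
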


\section{Proof Roadmap and drift method preliminaries}\label{sec:drift-prelim}
We organize our proofs of the main results as follows: we first prove two important bounds for a quantity called \textit{workload} given by $\sumall \frac{\siz_i}{\sev_i} Q_i$, in Lemma~\ref{result:workload-lower} and Lemma~\ref{result:workload-upper}, respectively. 
Then we convert the workload bounds to the waiting time bounds in Theorem~\ref{result:waiting-time-fcfs} and Theorem~\ref{result:waiting-time-lower-bound} using properties of FCFS and a linear programming relaxation. For Theorem~\ref{result:waiting-time-priority}, we analyze SNF by considering each $i$-th subsystems for $i\in[\njt]$. Some subsystems only need Lemma~\ref{result:workload-lower} and \ref{result:workload-upper}, while others require an additional Lemma~\ref{lem:proof:total-server-need-expectation}.

Our proof approach is closely related to the recently developed drift method \citep[][]{ErySri_12, MagSri_16}. The drift method allows us to extract information from a continuous-time Markov chain $\{\ve{S}(t)\}_{t\geq 0}$ on state-space $\mathcal{S}$ by computing the \textit{drift} of different test functions. Because $S(t)$ is a Markov chain with countable state space and bounded transition rates, we can define \textit{drift} of the function $f$ as
\begin{equation}
    G f(\ve{s}) \triangleq \lim_{t\to 0}\E\left[\frac{f\left(\ve{S}(t)\right) - f(\ve{s})}{t} \;\middle|\; \ve{S}(0)=\ve{s}\right].
\end{equation}
We call the operator $G$ the \textit{generator} of the Markov chain.

For a multiserver-job system, let $\njt$-dimensional real vectors $\ve{x}, \ve{z} \in \R_+^\njt$ be possible realizations of state descriptors $\ve{X}(t)$ and $\ve{Z}(t)$, where recall that $\ve{X}(t)$ is the vector of the number of jobs in the system at time $t$, and $\ve{Z}(t)$ is the vector of the number of jobs in service at time $t$. We focus on $f$ that only depends on $\ve{x}$, i.e., $f: \R_+^\njt \to \R$. The drift of $f$ is of the form
\begin{equation}\label{eq:proof:drift-form}
    G f(\ve{x}, \ve{z}) = \sumall \arr_i \left(f(\ve{x}+ \ve{e}_i) - f(\ve{x})\right)+ \sumall \sev_i z_i \left(f(\ve{x}-\ve{e}_i) - f(\ve{x})\right),
\end{equation}
where $\ve{e}_i \in \R_+^\njt$ is the vector whose $i$-th entry is $1$ and all the other entries are $0$. Note that although $f$ and $G f$ are functions of the system state $\ve{u}$, we write $f(\ve{x})$ and $G f(\ve{x},\ve{z})$ to highlight the variables that affect their values.

We frequently use the following relation regarding the drift
\begin{equation}\label{eq:proof:drift-method-master}
    \E[G f(\ve{X},\ve{Z})] = 0.
\end{equation}
Heuristically, this is because when $\ve{X}(0)$ and $\ve{Z}(0)$ follow the stationary distribution, $\ve{X}(t)$ and $\ve{Z}(t)$ also follow the stationary distribution, so $f(\ve{X}(t))$ and $f(\ve{X}(0))$ have the same expectation. Rigorously speaking, this relation only holds for well behaved functions and Markov processes. The conditions under which the relation holds are discussed in detail in Appendix~\ref{sec:app:drift-cond}. Throughout the paper, we assume \eqref{eq:proof:drift-method-master} holds for all $f$ that we consider.

\section{Workload bounds}\label{sec:workload-bounds}
In this section, we prove two bounds for a quantity called \textit{workload} given by $\sumall \frac{\siz_i}{\sev_i} Q_i$. These bounds are fundamental to the proofs of the main theorems. 
In Lemma~\ref{result:workload-lower}, we give a lower bound on the expected workload applicable to \emph{any policy}.
In Lemma~\ref{result:workload-upper}, we give upper bounds on the expected workload under any \emph{$\wst$-work-conserving policy}, a class of policies defined in Definition~\ref{def:work-conserving}. After stating these two lemmas, we go through preliminaries and proof sketches of the two lemmas. Finally, we show the full proof of the two lemmas at the end of the section.

\begin{lemma}[Workload lower bound]\label{result:workload-lower}
    Consider the multiserver-job system with $n$ servers satisfying Assumptions \ref{assump:traffic} and \ref{assump:lm-bound}. Under any policy, the expected workload is lower bounded as
    \begin{equation}\label{eq:result:workload-lower}
        \E\left[\sumall \frac{\siz_i}{\sev_i} Q_i(\infty)\right] \geq \frac{\sysvar}{\slk} \cdot (1 - o(1)),
    \end{equation}
    where the expression represented by $o(1)$ is independent of the policies.
\end{lemma}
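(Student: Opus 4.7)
The plan is to apply the drift method to the quadratic test function $f(\ve{x}) = \tfrac{1}{2}V(\ve{x})^2$, where $V(\ve{x}) := \sum_{i} h_i x_i$ with $h_i := \ell_i/\mu_i$ is the total expected work in the system, and $B(\ve{z}) := \sum_{i} \ell_i z_i$ is the number of busy servers. Substituting $f$ into \eqref{eq:proof:drift-form} and using the identities $(V \pm h_i)^2 - V^2 = \pm 2 h_i V + h_i^2$, $\sum_i \lambda_i h_i = n - \delta$, $\sum_i \mu_i z_i h_i = B$, and $\sum_i \lambda_i h_i^2 = \sigma^2$, one obtains
\[
Gf(\ve{x},\ve{z}) \;=\; V\bigl((n-\delta) - B\bigr) \;+\; \tfrac{1}{2}\sigma^2 \;+\; \tfrac{1}{2}\sum_i z_i \ell_i^2/\mu_i .
\]
Taking the stationary expectation and using the flow-balance identity $\mu_i \E[Z_i] = \lambda_i$ (valid for any stable policy), which in particular gives $\E[\sum_i \mu_i Z_i h_i^2] = \sigma^2$, the relation $\E[Gf] = 0$ collapses to the master identity
\[
\delta\,\E[V(\infty)] \;-\; \E\bigl[V(\infty)(n - B(\infty))\bigr] \;=\; \sigma^2 .
\]
Since $V \geq 0$ and $n - B \geq 0$ pointwise, the second term on the left is non-negative, which already delivers the crude bound $\E[V(\infty)] \geq \sigma^2/\delta$.

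Next, I would convert this into a bound on the queue workload $V_Q := \sum_i h_i Q_i$ by writing $V = V_Q + V_0$ with $V_0 := \sum_i h_i Z_i$. Flow balance pins down $\E[V_0] = \sum_i \lambda_i \ell_i/\mu_i^2$, a policy-independent constant. Substituting the decomposition into the master identity and recognising $\E[V_0(n-B)] - \delta\E[V_0] = \mathrm{Cov}(V_0, n-B)$ (since $\E[n-B] = \delta$) rewrites it in the more informative form
\[
\delta\,\E[V_Q] \;=\; \sigma^2 \;+\; \E\!\bigl[V_Q(n-B)\bigr] \;+\; \mathrm{Cov}(V_0, n-B) .
\]
The term $\E[V_Q(n-B)]$ is non-negative and can be dropped, so the entire lemma reduces to showing $\mathrm{Cov}(V_0, n-B) \geq -o(\sigma^2)$; dividing by $\delta$ then yields $\E[V_Q] \geq (\sigma^2/\delta)(1-o(1))$.

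The main obstacle will be this covariance estimate, uniformly over all (possibly non-work-conserving) policies, since the naive subtraction $\E[V_Q] \geq \sigma^2/\delta - \E[V_0]$ is far too weak in our scaling regimes—$\E[V_0]$ is in fact comparable to or larger than $\sigma^2/\delta$. The natural route is Cauchy--Schwarz together with the structural inequality $V_0 \leq B/\mu_{\min}$, which gives $|\mathrm{Cov}(V_0, n-B)| \leq \mathrm{Var}(B)/\mu_{\min}$ and reduces the task to upper-bounding $\mathrm{Var}(B)$. For the latter I would run a \emph{second} drift computation—applied, for example, to $f_2(\ve{x}) = \tfrac{1}{2}B(\ve{x})^2$ or to $f_2(\ve{x}) = \tfrac{1}{2}(B(\ve{x}) - (n-\delta))^2$—to extract a closed-form expression for $\mathrm{Var}(B)$ in terms of $\sigma^2$, $\delta$, and $\Lm$. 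Assumption~\ref{assump:traffic} is then used to guarantee that $\sigma^2$ is large relative to $\delta^2$, and Assumption~\ref{assump:lm-bound} ensures that single arrival/departure jumps in $B$ are at most $\Lm \ll \delta$, so that $B$ concentrates around $n - \delta$ tightly enough to force $\mathrm{Var}(B)/\mu_{\min} = o(\sigma^2)$. I anticipate that the concentration of $B$ about $n - \delta$ under an arbitrary policy is the most delicate step, and it is where the two assumptions enter essentially.
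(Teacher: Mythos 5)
Your master identity $\slk\,\E[V]-\E[V(n-B)]=\sysvar$ and the crude bound $\E[V]\ge\sysvar/\slk$ are correct, and you correctly diagnose why subtracting $\E[V_0]$ fails. The gap is in the proposed repair: the claim $\mathrm{Cov}(V_0,n-B)\ge-o(\sysvar)$ is false, and the reduction to it is circular. First, for a general (non-work-conserving) policy the two terms you separate must be kept together: a policy that idles $K$ servers while the queue is nonempty on an event of probability $p$ with $pK=\Theta(\slk)$ contributes roughly $-pK^2/\sevmin$ to the covariance and $+pK^2/\sevmax$ to the discarded term $\E[V_Q(n-B)]$; with $K=\Theta(n)$ the covariance alone is of order $-n\slk$, whereas $\sysvar=\Obrac{n\Lm}=\Obrac{n\slk}$, so the covariance can be $-\Wbrac{\sysvar}$ or worse. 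Indeed the only policy-free facts about $B$ are $0\le B\le n$ and $\E[B]=n-\slk$, whose extremal variance is $\slk(n-\slk)=\Wbrac{\sysvar}$, so no policy-uniform argument yields $\mathrm{Var}(B)=o(\sevmin\sysvar)$. Second, even for benign policies the route is circular: take all $\sev_i=\sev$, so $V_0=B/\sev$ and your identity becomes $\mathrm{Var}(B)/\sev=\sysvar+\E[V_Q(n-B)]-\slk\,\E[V_Q]$; estimating $\mathrm{Var}(B)$ to within $o(\sysvar)$ is therefore equivalent to the bound you are trying to prove, and the proposed second drift computation cannot break this — the generator formula \eqref{eq:proof:drift-form} applies only to functions of $\ve{x}$ ($B$ is a function of $\ve{z}$), and the drift of $(\sum_i\siz_ix_i)^2$ merely reproduces a rescaling of the first identity. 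The supporting intuition is also off: $\mathrm{Var}(B)$ is driven by accumulated Poisson arrival/departure noise of magnitude $\Thebrac{\sqrt{\sysvar}}\gg\slk$, not by the jump size $\Lm$. Two smaller issues: $V_0\le B/\sevmin$ pointwise does not imply $|\mathrm{Cov}(V_0,n-B)|\le\mathrm{Var}(B)/\sevmin$ (covariances are not monotone under domination), and $\E[Gf]=0$ for the untruncated quadratic requires $\E[V^2]<\infty$, which an arbitrary policy need not satisfy.

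The missing idea is to avoid the split $V=V_Q+V_0$ entirely. The paper works with the centered quantity $\normwork=\sum_i\frac{\siz_i}{\sev_i}(X_i-\arr_i/\sev_i)$, which satisfies $\E[\normwork]=\E[V_Q]$, shifts it by a \emph{negative} offset $\offset_1=-\Obrac{\sqrt{\sysvar}\log n+\Lm}$, and applies the drift identity to a truncated quadratic of $\normwork-\offset_1$. Because the derivative of that test function vanishes on $\{\normwork\le\offset_1\}$ and is nonnegative elsewhere, the cross term is controlled pointwise by $\sum_i\siz_i(Z_i-\arr_i/\sev_i)\le\slk$ with no covariance estimate; the quadratic term contributes $2\sysvar$ up to a correction supported on $\{\normwork\le\offset_1+\Lm/\sevmin\}$, whose probability is at most $n^{-2}$ because each $X_i$ stochastically dominates a Poisson$(\arr_i/\sev_i)$ variable via coupling with an infinite-server system (Lemma~\ref{lem:mminf-bounds}); the same coupling gives $\E[(\normwork-\offset_1)^-]=\Obrac{\sqrt{\sysvar}}$. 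That one-sided state-space concentration is the ingredient absent from your proposal.
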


\begin{definition}\label{def:work-conserving}
    We call a policy $\wst$-work-conserving, if the following equation holds
    \begin{equation}
        \sumall \siz_i Z_i(t) \geq \min\left(\sumall \siz_i X_i(t), n - \wst \right) \quad \forall t\geq 0 .
    \end{equation}
\end{definition}
Here $\sumall \siz_i Z_i(t)$ is equal to the number of busy servers at time~$t$, while $\sumall \siz_i X_i(t)$, which we call the \textit{total server need}, is the potential number of busy servers if we can put all jobs at time $t$ into service. Therefore, under a $\wst$-work-conserving policy, either all jobs are in service, or there are at most $\wst$ idling servers. Under any $\Lm$-work-conserving policy, one can show that the system is stable when $\Lm \leq \Lmratio \slk$ (Assumption~\ref{assump:lm-bound}) holds. In particular, the system is stable under both FCFS and SNF. 

\begin{lemma}[Workload upper bound]\label{result:workload-upper}
    Consider the multiserver-job system with $n$ servers under a $\wst$-work-conserving policy with $\wst \leq \Lmratio \slk$, where $ \Lmratio \in (0,1)$ is the parameter in Assumption~\ref{assump:lm-bound}. Then when $\slk = \obrac{\frac{\sqrt{\sysvar}}{\log n}}$, 
    \begin{equation}\label{eq:result:workload-upper-case1}
        \E\left[\sumall \frac{\siz_i}{\sev_i} Q_i(\infty) \right] \leq \frac{\sysvar}{\slk - \wst} \cdot (1 + o(1)) = \Obrac{\frac{\sysvar}{\slk}};
    \end{equation}
    when $\slk = \Wbrac{\frac{\sqrt{\sysvar}}{\log n}}$, 
    \begin{equation}\label{eq:result:workload-upper-case2}
        \E\left[\sumall \frac{\siz_i}{\sev_i} Q_i(\infty) \right] = \Obrac{\sqrt{\sysvar}\log n}.
    \end{equation}
\end{lemma}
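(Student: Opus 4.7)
The plan is to prove the two cases via two different Lyapunov test functions in the drift-method framework, exploiting the $\wst$-work-conservation hypothesis in a decisive way. For Case~1 ($\slk = \obrac{\sqrt{\sysvar}/\log n}$, the heavier regime), I would apply the drift method with the quadratic test function $f(\ve{x}) = \tfrac{1}{2}\bigl(\sumall \tfrac{\siz_i}{\sev_i} x_i\bigr)^2$. A direct expansion of the generator gives
\begin{equation*}
Gf(\ve{x},\ve{z}) = \Bigl(\sumall \tfrac{\siz_i}{\sev_i} x_i\Bigr)\Bigl((n-\slk) - \sumall \siz_i z_i\Bigr) + \tfrac{1}{2}\sysvar + \tfrac{1}{2}\sumall \tfrac{\siz_i^2}{\sev_i} z_i,
\end{equation*}
and combining $\E[Gf]=0$ with the auxiliary linear-drift identities $\E[Z_i] = \arr_i/\sev_i$ (which in turn yield $\E[\sumall (\siz_i^2/\sev_i) Z_i] = \sysvar$) reduces everything to the key identity $\E\bigl[\bigl(\sumall \tfrac{\siz_i}{\sev_i} X_i\bigr) D\bigr] = \sysvar$, where $D \triangleq \sumall \siz_i Z_i - (n - \slk)$ is the centered busy-server count.

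Next I would split $\sumall \tfrac{\siz_i}{\sev_i} X_i = \sumall \tfrac{\siz_i}{\sev_i} Q_i + \sumall \tfrac{\siz_i}{\sev_i} Z_i$. The defining property of $\wst$-work-conservation (Definition~\ref{def:work-conserving}) forces $\sumall \siz_i Z_i \geq n - \wst$ whenever any queue is nonempty, so $D \geq \slk - \wst \geq 0$ on $\{\ve{Q}\neq\ve{0}\}$ while the queue-workload vanishes on $\{\ve{Q}=\ve{0}\}$; together these give the pointwise bound $\bigl(\sumall \tfrac{\siz_i}{\sev_i} Q_i\bigr)\,D \geq (\slk - \wst) \sumall \tfrac{\siz_i}{\sev_i} Q_i$. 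Rearranging the identity then reduces the target bound to showing that the in-service cross term $\E\bigl[\bigl(\sumall \tfrac{\siz_i}{\sev_i} Z_i\bigr) D\bigr]$ is $\obrac{\sysvar}$, after which dividing by $\slk - \wst$ delivers the claimed $\sysvar/(\slk - \wst)\cdot(1+o(1))$.

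For Case~2 ($\slk = \Wbrac{\sqrt{\sysvar}/\log n}$, the lighter regime), the quadratic bound is loose and I would switch to an exponential Lyapunov function $g(\ve{x}) = \exp\bigl(\theta \sumall (\siz_i/\sev_i) x_i\bigr)$ with $\theta$ of order $1/\sqrt{\sysvar}$. Taylor-expanding the exponentials in the generator produces $Gg(\ve{x},\ve{z}) = g(\ve{x})\cdot\theta\bigl((n-\slk) - \sumall \siz_i z_i + \tfrac{\theta}{2}(\sysvar + \sumall (\siz_i^2/\sev_i) z_i) + O(\theta^2)\bigr)$. On $\{\ve{Q}\neq\ve{0}\}$ the work-conservation bound drives the bracket below $-(\slk-\wst)/2$ once $\theta$ is small enough, while on $\{\ve{Q}=\ve{0}\}$ the bracket is bounded; imposing $\E[Gg]=0$ and balancing the two regions yields a geometric tail $\Prob\bigl(\sumall (\siz_i/\sev_i) X_i > t\bigr) \leq \Obrac{e^{-\theta t}}$, and integrating this up to $t \sim \log n/\theta$ produces $\Obrac{\sqrt{\sysvar}\log n}$.

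The main obstacle is controlling the in-service cross term in Case~1. A crude Cauchy--Schwarz bound fails because $\sumall (\siz_i/\sev_i) Z_i$ can be as large as $n/\sevmin$, which dominates $\sqrt{\sysvar}$ in our regime. The cleanest route is to observe that the product-of-means contributions cancel exactly, leaving $\E\bigl[\bigl(\sumall (\siz_i/\sev_i) Z_i\bigr) D\bigr] = \sum_{i,j}(\siz_i \siz_j/\sev_i)\,\mathrm{Cov}(Z_i,Z_j)$, which would be nonnegative under a positive-association property of the stationary distribution of $\ve{Z}$; since such a property is not evident for an arbitrary $\wst$-work-conserving policy, the backup plan is a case split on $\{\ve{Q}=\ve{0}\}$ in which the event's probability is controlled either by reusing the exponential-drift machinery of Case~2 or by a light-traffic concentration estimate in the spirit of Lemma~\ref{lem:proof:total-server-need-expectation} cited in the proof roadmap. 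A secondary technical point, in Case~2, is the justification of $\E[Gg]=0$ at the chosen $\theta$, which I would handle by first working with a truncation of $g$ and then passing to the limit once the tail bound is in hand.
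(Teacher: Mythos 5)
Your Case~1 set-up is clean up to the decisive step: the identity $\E\bigl[\bigl(\sumall \tfrac{\siz_i}{\sev_i}X_i\bigr)D\bigr]=\sysvar$ with $D=\sumall \siz_i(Z_i-\xnom_i)$ is correct, and the queue part of the cross term does yield exactly the factor $\slk-\wst$. But the proof then stands or falls on showing $\E\bigl[\bigl(\sumall \tfrac{\siz_i}{\sev_i}Z_i\bigr)D\bigr]\geq -\obrac{\sysvar}$, and neither of your fallbacks closes this. Positive association of $\ve{Z}(\infty)$ is not available for an arbitrary $\wst$-work-conserving policy (the policy may adversarially correlate which types occupy the servers). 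More importantly, the case split on $\{\ve{Q}=\ve{0}\}$ targets the wrong event: on $\{\ve{Q}=\ve{0}\}$ we have $Z_i=X_i$ and both factors concentrate via the infinite-server coupling, whereas the dangerous contribution lives on $\{\ve{Q}\neq\ve{0}\}$, where $D=\Theta(\slk)$ but $\sumall\tfrac{\siz_i}{\sev_i}(Z_i-\xnom_i)$ can a priori be as negative as $-\Theta\bigl(n(\tfrac{1}{\sevmin}-\tfrac{1}{\sevmax})\bigr)$ if the servers are filled with fast jobs. That gives a pointwise worst case of $-\Theta(n\slk)$, which is \emph{not} $\obrac{\sysvar}$, since $\sysvar=\Obrac{n\Lm}=\Obrac{n\slk}$. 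Work conservation controls only the $\siz_i$-weighted sum $\sumall\siz_iZ_i$, never the $\siz_i/\sev_i$-weighted one, and ruling out the bad configuration would require upper-tail control on $Z_i$ for fast types — essentially the bound you are trying to prove. The missing ingredient is the state-space concentration of Lemma~\ref{lem:mminf-bounds}: a high-probability lower bound on $\sumall\bigl(\tfrac{1}{\sevmin}-\tfrac{1}{\sev_i}\bigr)\siz_i(X_i-\xnom_i)$, obtained by coupling with infinite-server systems, which is what lets one pass from ``normalized work is large'' to ``total server need is large'' and hence, via work conservation, to $D\geq\slk-\wst$. The paper sidesteps your cross term entirely by using the shifted one-sided quadratic $f(\ve{x})=\bigl((\fnormwork{\ve{x}}-\offset_2)^+\bigr)^2$, so that the drift identity produces $\E\bigl[D\cdot(\normwork-\offset_2)^+\bigr]\leq\sysvar$ — here the factor multiplying $D$ is nonnegative and supported only where the concentration event forces $D\geq\min(\sevmin(\offset_2-K_2),\slk-\wst)>0$.

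Your Case~2 has the same hole in a different guise. The exponential Lyapunov argument needs $Gg<0$ whenever $\sumall\tfrac{\siz_i}{\sev_i}x_i$ exceeds its mean by a threshold of order $\sqrt{\sysvar}\log n$, on a high-probability set. Without the concentration event, the only threshold at which work conservation guarantees $\sumall\siz_iz_i\geq n-\wst$ is $\sumall\tfrac{\siz_i}{\sev_i}x_i>\tfrac{n}{\sevmin}$, i.e.\ an excess of order $n\bigl(\tfrac{1}{\sevmin}-\tfrac{1}{\sevmax}\bigr)$, which is far too large; ``balancing the two regions'' hides exactly the estimate $\Prob(\mathcal{E}_1^c)$ that Lemma~\ref{lem:mminf-bounds} supplies. (Also, $\theta\sim 1/\sqrt{\sysvar}$ is miscalibrated: killing the second-order term against $\slk-\wst$ forces $\theta=\Obrac{\slk/\sysvar}$, which is smaller by a $\log n$ factor in this regime — that still integrates to $\Obrac{\sqrt{\sysvar}\log n}$, but only once the threshold issue is fixed. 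The paper instead handles both cases with the same quadratic argument, merely choosing $\offset_2=K_2+\Thebrac{\sqrt{\sysvar}/\log n}$ in the lighter regime, with $K_2=\Obrac{\sqrt{\sysvar}\log n}$ dominating the bound.) In short: both halves of your plan are missing the infinite-server-coupling concentration that is the actual engine of the paper's proof, and the Case~1 route additionally generates a cross term that this machinery does not obviously tame.
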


\begin{remark}
When Assumption~\ref{assump:traffic} is satisfied, i.e., when $\slk = \obrac{\frac{\sqrt{\sysvar}}{\log n}}$, the workload upper bound in Lemma~\ref{result:workload-upper} coincides with the workload lower bound in Lemma~\ref{result:workload-lower} order-wise, which implies that the expected workload $\E\left[\sumall \frac{\siz_i}{\sev_i} Q_i(\infty) \right]  = \Thebrac{\frac{\sysvar}{\slk}}$. 
Note that in this case, although the expected workload under all $\wst$-work-conserving policies has the same order, the mean waiting time can vary among policies, as shown for FCFS and SNF in Theorems~\ref{result:waiting-time-fcfs} and \ref{result:waiting-time-priority}.
\end{remark}

\subsection{\textbf{Preliminaries and proof sketches for Lemma~\ref{result:workload-lower} and Lemma~\ref{result:workload-upper}.}}\label{sec:workload-bounds-preliminary-sketch}

Our proofs focus on bounding the \textit{normalized work}, defined as
\[
    \normwork \triangleq \sumall \frac{\siz_i}{\sev_i} \left(X_i - \xnom_i\right),
\] 
where we write $\xnom_i \triangleq \frac{\arr_i}{\sev_i}$ for notational simplicity. We claim that normalized work has the same expectation as the workload, i.e., $\E[\normwork] = \E\left[\sumall \frac{\siz_i}{\sev_i} Q_i\right]$. To see this, recall that $Q_i=X_i-Z_i$. Now consider the drift of $X_i$, given by $G X_i=\arr_i-\sev_i Z_i$.  One can verify that $X_i$ satisfies $\E[G X_i]=0$, and thus $\E[Z_i]=\frac{\arr_i}{\sev_i}$.
Therefore, the expected workload can be written as:
\begin{equation}
    \E\left[\sumall \frac{\siz_i}{\sev_i} Q_i\right] = \E\left[\sumall \frac{\siz_i}{\sev_i}\left(X_i - Z_i\right)\right] =\E\left[\sumall \frac{\siz_i}{\sev_i}\left(X_i - \xnom_i\right)\right] = \E[\normwork].
\end{equation}
Therefore, bounding the expected workload is equivalent to bounding the steady-state expectation of the normalized work $\E[\normwork]$. 

We break $\E[\normwork]$ into three terms:
\begin{equation}
    \E[\normwork] = \offset + \E[(\normwork - \offset)^+] - \E[(\normwork - \offset)^-],
\end{equation}
where $\offset \in \R$ is up to our choice; $(\normwork - \offset)^+ \triangleq \max\{\normwork - \offset, 0\}$ denotes the positive part, and $(\normwork - \offset)^- \triangleq -\min \{\normwork - \offset, 0\}$ denotes the negative part. 

The major difficulty during the proofs is bounding the expectation of the positive part $\E[(\normwork - \offset)^+]$. This relies on the relation that $\E[G f(\ve{X}, \ve{Z})] = 0$ as introduced in Section~\ref{sec:drift-prelim}. In our proofs, we choose $f$ to be piecewise quadratic functions to get bounds on the term 
\[
\E\left[\sumall \siz_i \left( Z_i-\xnom_i\right)\cdot (\normwork-\offset)^+\right].
\]
Since $\sumall \siz_i (Z_i - \xnom_i) = \sumall \siz_i Z_i - n+\slk$, we will be able to bound $\E[(\normwork - \offset)^+]$ if we are able to give an accurate estimate of the number of busy servers $\sumall \siz_i Z_i$ when the normalized work $\normwork \geq \offset$. To get a precise estimate, we exploit the state-space concentration result that says for each $i\in[\njt]$, $X_i$ cannot be much smaller than $\xnom_i$, i.e., $(X_i - \xnom_i)^-$ is small with high probability. Formally, this state-space concentration is established by Lemma~\ref{lem:mminf-bounds},
whose proof uses a sample-path coupling argument and is given in Appendix~\ref{sec:mminf-proof}. 

\begin{restatable}{lemma}{mminf}\label{lem:mminf-bounds}
    Consider the multiserver-job system with $n$ servers. For any nonnegative vector $c = (c_1, \dots, c_\njt)\in\mathbb{R}^{\njt}_+$  independent of $n$, let $\cmax=\max_{i\in[\njt]} c_i$, $\sevmax = \max_{i\in[\njt]} \sev_i$ and let
    \begin{equation*}
        \yc=\sumall c_i \siz_i\left(X_i(\infty)-\xnom_i\right),
    \end{equation*}
    where $\xnom_i = \frac{\arr_i}{\sev_i}$.
    Then we have the three bounds below.
    \begin{enumerate}
        \item For any $K \geq 0$, 
        \begin{equation}
            \Prob\left(\yc \leq -K \right) \leq \exp\left(-\frac{K^2}{2\cmax^2 \sevmax \sysvar}\right).
       \end{equation}
       \item For any $\alpha\ge 0$ and $\beta \geq 0$ such that $\alpha\beta \geq \cmax^2\sevmax \sysvar$ and any $j\ge 0$,
       \begin{equation}
           \Prob\left(\yc \leq - \alpha - \beta j \right) \leq e^{-j}.
       \end{equation}
        \item Let $\yc^-=\max\left\{-\yc,0\right\}$ to be the negative part of $\yc$.  Then
        \begin{equation}
            \E\left[\yc^-\right] \leq \sqrt{\cmax^2 \sevmax \sysvar}.
        \end{equation}
    \end{enumerate}
\end{restatable}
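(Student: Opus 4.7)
The plan is to reduce the concentration of $\yc$ to concentration of a sum of independent centered Poisson random variables via a sample-path coupling with an M/M/$\infty$ system, and then derive all three parts from a single sub-Gaussian moment generating function bound.

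\textbf{Step 1: Coupling with M/M/$\infty$.} I would construct a single probability space carrying (i) the multiserver-job system under its given policy, and (ii) an auxiliary M/M/$\infty$ system in which every arriving job immediately obtains a dedicated server and leaves after its (exponential, rate $\sev_i$) service time. The two systems share the same arrival epochs and the same i.i.d.\ service durations $\{S_j\}$ for each arrival. Memorylessness of the exponential distribution lets us attach a service duration to each job at arrival without altering its distribution. Because a job in the constrained system either starts service immediately (and then spends time $S_j$ in service, identical to the M/M/$\infty$ system) or is delayed in the queue before occupying its servers, its departure time is a.s.\ no earlier than in the M/M/$\infty$ copy. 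Consequently, at every time $t$ and for every type $i$, $X_i(t) \ge N_i(t)$ a.s., where $N_i(t)$ is the count in the M/M/$\infty$ system. Taking $t\to\infty$, in stationarity we may assume $N_i \sim \mathrm{Poisson}(\xnom_i)$ with independence across $i$, and pointwise $X_i(\infty) \ge N_i$. Hence
\begin{equation*}
\yc \;=\; \sum_{i=1}^{I} c_i \siz_i (X_i(\infty)-\xnom_i) \;\ge\; \sum_{i=1}^{I} c_i \siz_i (N_i - \xnom_i) \;=:\; M \quad\text{a.s.}
\end{equation*}
In particular $\yc^- \le M^-$ and $\Prob(\yc\le -K) \le \Prob(M\le -K)$ for every $K\ge 0$.

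\textbf{Step 2: Sub-Gaussian MGF for $M$.} Using the Poisson MGF and the elementary inequality $e^{-x}-1+x \le x^2/2$ valid for $x\ge 0$, for any $\theta>0$ I get
\begin{equation*}
\E\bigl[e^{-\theta c_i \siz_i (N_i-\xnom_i)}\bigr] \;=\; \exp\!\bigl(\xnom_i(e^{-\theta c_i \siz_i}-1+\theta c_i \siz_i)\bigr) \;\le\; \exp\!\bigl(\tfrac{\theta^2}{2}\,c_i^2 \siz_i^2 \xnom_i\bigr).
\end{equation*}
By independence, $\E[e^{-\theta M}] \le \exp(\theta^2 \sigma_M^2/2)$ with $\sigma_M^2 := \sum_i c_i^2 \siz_i^2 \xnom_i = \sum_i c_i^2 \siz_i^2 \arr_i/\sev_i$. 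Since $c_i \le \cmax$ and $\sev_i \le \sevmax$, I can bound $\sigma_M^2 \le \cmax^2 \sevmax \sum_i \arr_i \siz_i^2/\sev_i^2 = \cmax^2 \sevmax \sysvar$.

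\textbf{Step 3: Derive the three bounds.} For part~1, a Chernoff bound on $-M$ combined with $\Prob(\yc\le -K)\le\Prob(M\le -K)$ and optimization over $\theta = K/\sigma_M^2$ yields $\Prob(\yc\le -K) \le \exp(-K^2/(2\cmax^2\sevmax\sysvar))$. For part~2, I set $K=\alpha+\beta j$ and use AM--GM: $(\alpha+\beta j)^2 \ge 4\alpha\beta j \ge 4\cmax^2\sevmax\sysvar \cdot j$, so the exponent is at most $-2j \le -j$. For part~3, using $\yc^-\le M^-$ and $\E[M]=0$,
\begin{equation*}
\E[\yc^-] \;\le\; \E[M^-] \;=\; \tfrac{1}{2}\E[|M|] \;\le\; \tfrac{1}{2}\sqrt{\E[M^2]} \;=\; \tfrac{1}{2}\sigma_M \;\le\; \sqrt{\cmax^2\sevmax\sysvar}.
\end{equation*}
Alternatively I could integrate the tail bound in part~1, which gives the same order (up to $\sqrt{\pi/2}$).

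\textbf{Main obstacle.} The coupling step is the delicate one: one must make the pathwise stochastic dominance $X_i(t)\ge N_i(t)$ rigorous jointly across all $i$, which requires attaching service durations to jobs at arrival rather than at service start (justified by memorylessness) so that the same $\{S_j\}$'s govern both systems, and it requires the two systems to be initialized compatibly (e.g.\ both empty, then take $t\to\infty$ and invoke stationarity on the left and Burke/product-form on the right). Once this coupling is in hand, the remaining steps are standard sub-Gaussian computations.
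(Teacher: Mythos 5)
Your proposal is correct and follows essentially the same route as the paper's proof in Appendix~\ref{sec:mminf-proof}: a sample-path coupling with an infinite-server system whose stationary counts are independent Poissons, a sub-Gaussian bound on the moment generating function via $e^{-s}\leq 1-s+s^2/2$, and then Chernoff for parts~1--2 and Cauchy--Schwarz for part~3. The only (immaterial) differences are that you invoke AM--GM to get exponent $-2j$ in part~2 where the paper keeps only the cross term to get $-j$, and your part~3 gains a factor $\tfrac{1}{2}$ via $\E[M^+]=\E[M^-]$; the coupling subtlety you flag is handled in the paper by stating the conclusion as stochastic dominance $\ycinf \leq_{st} \yc$ rather than pointwise domination in stationarity.
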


Next, we give the proof sketches of Lemma~\ref{result:workload-lower} and Lemma~\ref{result:workload-upper}. 

\subsubsection*{\textbf{Proof sketch of Lemma~\ref{result:workload-lower} (workload lower bound).}}
Recall that $\E[\normwork] = \offset_1 + \E[(\normwork - \offset_1)^+] - \E[(\normwork - \offset_1)^-]$, for some scalar $\offset_1$ to be specified later. To bound the positive part, we invoke the relation $\E[G f(\ve{X}, \ve{Z})]=0$ for a carefully constructed function $f(x)$
and get
\begin{equation}
     \E\left[\sumall \siz_i \left( Z_i-\xnom_i\right)\cdot (\normwork-\offset_1)^+\right] \geq \sysvar - \Obrac{n\Lm} \Prob\left(\normwork \leq \offset_1 + \frac{\Lm}{\sevmin}\right).
\end{equation}
According to Lemma~\ref{lem:mminf-bounds} (a) with $\yc = \normwork$, we can choose some $\offset_1 = -\Obrac{\sqrt{\sysvar} \log n + \Lm}$ such that the probability on the right hand side is bounded by $\frac{1}{n^2}$. Moreover, observe that $\sumall \siz_i \left( Z_i-\xnom_i\right) \leq n - (n-\slk) = \slk$, we get
\begin{equation}
    \E[(\normwork - \offset_1)^+] \geq \frac{\sysvar}{\slk} \cdot (1-o(1)).
\end{equation}

By Lemma~\ref{lem:mminf-bounds} (c) and the fact that $\offset_1 \leq 0$, we can immediately get that the negative part satisfies
\[
    \E[(\normwork - \offset_1)^-] \leq \E[(\normwork)^-] = \Obrac{\sqrt{\sysvar}}.
\]

Combining the bounds on $\E\left[(\normwork-\offset_1)^+\right]$ and $\E[(\normwork - \offset_1)^-]$ gives
\[
    \E[\normwork] = \frac{\sysvar}{\slk} \cdot (1-o(1)) - \Obrac{\sqrt{\sysvar} \log n + \Lm} = \frac{\sysvar}{\slk} \cdot (1-o(1)),
\]
where the last equality follows from Assumption~\ref{assump:traffic} and \ref{assump:lm-bound}, that is, $\slk = \obrac{\sqrt{\sysvar}/\log n}$ and $\Lm \leq \Lmratio \slk$.

\subsubsection*{\textbf{Proof sketch of Lemma~\ref{result:workload-upper} (workload upper bound).}}
Observe that $E[\normwork] \leq \offset_2 + \E[(\normwork - \offset_2)^+]$, for some $\offset_2$ to be specified later. To bound the positive part $\E[(\normwork - \offset_2)^+]$, we apply the relation $\E[G f(\ve{X}, \ve{Z})] = 0$ to a carefully constructed function $f(x)$ 
\begin{equation}\label{eq:proof:bd1:drift-var-terms}
        \E\left[\sumall \siz_i \left( Z_i-\xnom_i\right)\cdot (\normwork-\offset_2)^+\right]
        \leq \sysvar.
\end{equation}

In addition, we claim that there exists some $\gamma > 0$ such that
\begin{equation}\label{eq:proof:workload-upper:intermediate-goal}
    \gamma \cdot \E[(\normwork - \offset_2)^+] \leq \E\left[\sumall \siz_i \left( Z_i-\xnom_i\right)\cdot (\normwork-\offset_2)^+\right] + o(1).
\end{equation}
To prove this, we observe that the RHS term $\E\left[\sumall \siz_i \left( Z_i-\xnom_i\right)\cdot (\normwork-\offset_2)^+\right]$ is non-negative and the LHS term $\E[(\normwork - \offset_2)^+]$ is non-zero only when 
\begin{equation}
    \sumall \frac{\siz_i}{\sev_i}(X_i - \xnom_i) \geq \offset_2.
\end{equation}
By Lemma~\ref{lem:mminf-bounds} (b) with $c_i = \frac{1}{\sevmin} - \frac{1}{\sev_i}$, we also have the following inequality with probability at least $1 - \frac{1}{n^3}$, 
\begin{equation}\label{eq:proof:workload-upper:with-high-prob}
    \sumall \left(\frac{1}{\sevmin} - \frac{1}{\sev_i}\right) \siz_i (X_i - \xnom_i) \geq - K_2 ,
\end{equation}
for some $K_2 = \Obrac{\sqrt{\sysvar}\log n}$. Adding up the two inequalities above and applying $\wst$-work-conserving property, we get
\[
    \sumall \siz_i (Z_i - \xnom_i)  \geq\min\left(\sevmin(\offset_2 - K_2), \slk- \wst\right).
\]

After handling the low probability event that \eqref{eq:proof:workload-upper:with-high-prob} does not hold, we can show \eqref{eq:proof:workload-upper:intermediate-goal} with $\gamma = \min\left(\sevmin(\offset_2 - K_2), \slk- \wst\right)$. Therefore, 
\begin{align*}
    \E[\normwork] &\leq \offset_2 + \E[(\normwork - \offset_2)^+] \leq \offset_2 + \frac{\sysvar}{\min\left(\sevmin(\offset_2 - K_2), \slk- \wst\right)} + o(1).
\end{align*}
The upper bounds \eqref{eq:result:workload-upper-case1} and \eqref{eq:result:workload-upper-case2} in Lemma~\ref{result:workload-upper} follow once we choose a suitable $\offset_2$. When $\slk = \obrac{\sqrt{\sysvar}/\log n}$, choosing $\offset_2 = K_2 + (\slk - \wst)/\sevmin$ yields $\E[\normwork] \leq \frac{\sysvar}{\slk-\wst}\cdot (1+o(1))$. When $\slk = \Wbrac{\sqrt{\sysvar}/\log n}$, choosing $\offset_2 = K_2 + \Thebrac{\sqrt{\sysvar} / \log n}$ yields $\E[\normwork] \leq \Obrac{\sqrt{\sysvar}\log n}$.

\subsection{Proofs of Lemma~\ref{result:workload-lower} and Lemma~\ref{result:workload-upper}.}\label{sec:app:workload-bounds}
In this section we show the full proof of Lemma~\ref{result:workload-lower} and Lemma~\ref{result:workload-upper}.

\proof{Proof of Lemma~\ref{result:workload-lower}.}
    Recall that in Section~\ref{sec:workload-bounds-preliminary-sketch}, we have shown that $\E\left[\sumall \frac{\siz_i}{\sev_i} Q_i\right] = \E[\normwork]$, where $\normwork$ is the \textit{normalized work} given by $\normwork \triangleq \sumall \frac{\siz_i}{\sev_i} \left(X_i - \xnom_i\right)$ and $\xnom_i \triangleq \frac{\arr_i}{\sev_i}$. Therefore, our goal is equivalent to lower bounding $\E[\normwork]$. To do this, we first perform the following decomposition: 
    \[
        \E[\normwork] = \offset_1 + \E[(\normwork-\offset_1)^+] - \E[(\normwork-\offset_1)^-].
    \]
    where $\offset_1$ is a properly chosen small number to be specified later. We bound the positive part $\E[(\normwork - \offset_1)^+]$ and the negative part $\E[(\normwork - \offset_1)^-]$ separately using different techniques. 
    
    We bound $\E[(\normwork-\offset_1)^+]$ by analyzing the Lyapunov drift of a function $f\colon \R_+^\njt \to \R$ defined below.  Let $\ve{x},\ve{z} \in \R_+^\njt$ denote possible realizations of the state descriptors $\ve{X}(t), \ve{Z}(t)$. Then $f$ is defined as
    \begin{equation*}
        f(\ve{x}) = \varphi_M\left(\fnormwork{\ve{x}} - \offset_1\right)
    \end{equation*}
    where $\fnormwork{\ve{x}} \triangleq \sumall \frac{\siz_i}{\sev_i} (x_i - \xnom_i)$ is a possible realization of $\normwork$, $\varphi_M(s)\colon \R \to \R$ is defined as
    \[
        \varphi_M(s) = 
        \begin{cases}
            0 & \text{ if } s \leq 0\\
            s^2 & \text{ if } 0 < s \leq M\\
            2Ms - M^2 &\text{ if } s > M
        \end{cases},
    \]
    and $M$ is a positive number preventing $\varphi_M(s)$ from growing too fast as $s$ gets large. The derivative of $\varphi_M(s)$ is $\varphi_M'(s) = 2\min\left\{s^+, M\right\}$.

    We will utilize the relation $\E[G f(\ve{X}, \ve{Z})] = 0$, which is implied by Lemma~\ref{lem:bar-master-condition} in Appendix~\ref{sec:app:drift-cond} if we have $\E[\abs{f(\ve{X})}] < \infty$. Since $f(\ve{x})$ grows linearly fast as $\ve{x}$ gets large, $\E[f(\ve{x})]<\infty$ follows if we have $\E[X_i] < \infty$ for all $i\in[\njt]$. On the other hand, there is nothing to prove if $\E[X_i] = \infty$ for some $i$. 
    
    To calculate $Gf(\ve{x}, \ve{z})$, we first decompose the drift $G f(\ve{x},\ve{z})$ formula \eqref{eq:proof:drift-form} in the following way:
    \begin{equation}\label{eq:app:proof:drift-mean-var-decomp-in-lemma1}
        \begin{aligned}
            G f(\ve{x}, \ve{z}) &= \sumall \left(\arr_i - \sev_i z_i\right)\frac{\partial f}{\partial x_i}(\ve{x})\\
            &\mspace{23mu}+ \sumall \arr_i \left(f(\ve{x}+\ve{e}_i) - f(\ve{x}) -  \frac{\partial f}{\partial x_i}(\ve{x})\right)+\sumall\sev_i z_i \left(f(\ve{x}-\ve{e}_i) - f(\ve{x}) + \frac{\partial f}{\partial x_i}(\ve{x})\right). 
        \end{aligned}
    \end{equation}
    It is easy to see that the partial derivatives of $f$ appearing in the first term is given by
    \[
    \frac{\partial f}{\partial x_i}(\ve{x}) = \frac{\partial \fnw}{\partial x_i}(\ve{x}) \varphi_M'\left(\fnormwork{\ve{x}}-\offset_1\right)  =\frac{2\siz_i}{\sev_i} \min\left( (\fnormwork{\ve{x}} - \offset_1)^+ , M\right).
    \]
    To bound the remaining two terms, observe that
    \begin{align*}
        f(\ve{x}+\ve{e}_i) - f(\ve{x}) - \frac{\partial f}{\partial x_i}(\ve{x}) &= \int_{\xi \in [\ve{x}, \ve{x}+\ve{e}_i]} \left(\frac{\partial f}{\partial x_i}(\xi) - \frac{\partial f}{\partial x_i}(\ve{x})\right) d\xi\\ 
        &= \frac{\siz_i}{\sev_i} \int_{\xi \in [\ve{x}, \ve{x}+\ve{e}_i]} \left(\varphi_M'\left(\fnormwork{\xi} - \offset_1 \right) - \varphi_M'\left(\fnormwork{\ve{x}} - \offset_1\right)\right) d\xi\\
        &\geq \frac{2\siz_i}{\sev_i} \int_{\xi \in [\ve{x}, \ve{x}+\ve{e}_i]} \left(\fnormwork{\xi} - \fnormwork{\ve{x}}\right) d\xi \cdot\indibrac{ \offset_1 +  \frac{\Lm}{\sevmin}<\fnormwork{\ve{x}} < \offset_1 + M -  \frac{\Lm}{\sevmin}}\\
        &= \frac{\siz_i^2}{\sev_i^2} \cdot\indibrac{ \offset_1 +  \frac{\Lm}{\sevmin}<\fnormwork{\ve{x}} < \offset_1 + M -  \frac{\Lm}{\sevmin}},
    \end{align*}
    where the inequality is due to the fact that  for any $s_1 \geq s_2$,
    \begin{equation}
        \varphi'_M(s_1) - \varphi'_M(s_2) \geq 2(s_1 - s_2) \cdot \indibrac{0 \leq s_2\leq s_1 \leq M},
    \end{equation}
    which can be verified by brute force calculation. Similarly,
    \begin{align*}
        f(\ve{x} - \ve{e}_i) - f(\ve{x}) +  \frac{\partial f}{\partial x_i}(\ve{x})
        &= \int_{\xi \in [\ve{x}-\ve{e}_i, \ve{x}]} \left(-\frac{\partial f}{\partial x_i}(\xi) + \frac{\partial f}{\partial x_i}(\ve{x})\right) d\xi\\ 
        &= \frac{\siz_i}{\sev_i} \int_{\xi \in [\ve{x}-\ve{e}_i, \ve{x}]} \left(-\varphi_M'\left(\fnormwork{\xi} - \offset_1 \right) + \varphi_M'\left(\fnormwork{\ve{x}} - \offset_1\right)\right) d\xi\\
        &\geq \frac{2\siz_i}{\sev_i} \int_{\xi \in [\ve{x}-\ve{e}_i, \ve{x}]} \left(-\fnormwork{\xi} + \fnormwork{\ve{x}}\right) d\xi \cdot\indibrac{ \offset_1 +  \frac{\Lm}{\sevmin}<\fnormwork{\ve{x}} < \offset_1 + M -  \frac{\Lm}{\sevmin}}\\
        &= \frac{\siz_i^2}{\sev_i^2} \cdot\indibrac{ \offset_1 +  \frac{\Lm}{\sevmin}<\fnormwork{\ve{x}} < \offset_1 + M - \frac{\Lm}{\sevmin}}\\
        &\geq \frac{\siz_i^2}{\sev_i^2}\indibrac{ \offset_1 +  \frac{\Lm}{\sevmin}<\fnormwork{\ve{x}} < \offset_1 + M -  \frac{\Lm}{\sevmin}}.
    \end{align*}
    Plugging the above inequalities into the decomposition of the drift \eqref{eq:app:proof:drift-mean-var-decomp-in-lemma1} and taking expectation on both sides, because $\E[G f(\ve{X},\ve{Z})] = 0$, we have
    \begin{equation}
        \begin{aligned}
             0 &\geq \E\left[ \sumall \siz_i \left(\xnom_i - Z_i\right)\cdot 2\min\left( (\normwork - \offset_1)^+ , M\right) + \sumall \left(\frac{\arr_i \siz_i^2}{\sev_i^2} + \frac{Z_i\siz_i^2}{\sev_i}\right) \indibrac{ \offset_1 +  \frac{\Lm}{\sevmin}<\normwork < \offset_1 + M -  \frac{\Lm}{\sevmin}}\right].
        \end{aligned}    
    \end{equation}
    The same inequality still holds when we let $M\to\infty$ inside the expectation, 
    \begin{equation}
        0 \geq \E\left[ \sumall \siz_i \left(\xnom_i - Z_i\right)\cdot 2(\normwork-\offset_1)^+ + \sumall \left(\frac{\arr_i \siz_i^2}{\sev_i^2} + \frac{Z_i\siz_i^2}{\sev_i}\right) \indibrac{\normwork>\offset_1 +  \frac{\Lm}{\sevmin}}\right].
    \end{equation}
    Here we are implicitly exchanging $\lim_{M\to\infty}$ and $\E$. This is legal because the random variable inside the expectation is dominated by another random variable $ 2n\Lm(\normwork-\offset_1)^+ + 2n\Lm / \sevmin$ with a finite expectation, so we can apply the dominated convergence theorem. Using the facts that $\E\left[\sumall \left(\frac{\arr_i \siz_i^2}{\sev_i^2} + \frac{Z_i\siz_i^2}{\sev_i}\right) \right] = 2\sysvar$ and  $\sumall \left(\frac{\arr_i \siz_i^2}{\sev_i^2} + \frac{Z_i\siz_i^2}{\sev_i}\right) \leq \frac{2n\Lm}{\sevmin}$, we get
    \begin{equation}
        \E\left[\sumall \siz_i \left(Z_i - \xnom_i\right)\cdot (\normwork-\offset_1)^+ \right] \geq \sysvar -\frac{n\Lm}{\sevmin} \cdot \Prob\left(\normwork\leq\offset_1 +  \frac{\Lm}{\sevmin}\right).
    \end{equation}
    Consider Lemma~\ref{lem:mminf-bounds} with $\yc= \normwork$, $c_i = \frac{1}{\sev_i}$, $K = 2\sqrt{\sevmax \sysvar \log n} / \sevmin $. Then we have
    $
    \Prob\left(\normwork \leq -K \right) \leq \frac{1}{n^2}.
    $ 
    Note that this choice of $K$ yields $K= \Obrac{\sqrt{\sysvar}\log n}$. We take $\offset_1 = - K - \frac{\Lm}{\sevmin}$, then $\Prob\left(\normwork \leq \offset_1 +  \frac{\Lm}{\sevmin} \right) \leq \frac{1}{n^2}$. Moreover, observe that $\sumall \siz_i \left( Z_i-\xnom_i\right) \leq n - (n-\slk) = \slk$, we get
    \begin{equation}\label{eq:lb-proof:positive-part}
        \E\left[(\normwork-\offset_1)^+ \right] \geq \frac{\sysvar}{\slk}\cdot (1-o(1)).
    \end{equation}
        
    The lower bound for negative part $\E[(\normwork-\offset_1)^-]$ follows from Lemma~\ref{lem:mminf-bounds}, which says that $E[(\normwork)^-] = \Obrac{\sqrt{\sysvar}}$. Because $s\mapsto s^-$ is a monotonically decreasing function and $\offset_1 \leq 0$,
    \begin{equation}\label{eq:lb-proof:negative-part}
        \E[(\normwork-\offset_1)^-] \leq \E[(\normwork)^-] = \Obrac{\sqrt{\sysvar}}.
    \end{equation}
    
    Therefore, combining \eqref{eq:lb-proof:positive-part},  \eqref{eq:lb-proof:negative-part} and the fact that  $\offset_1 = - \Obrac{\sqrt{\sysvar} \log n + \Lm}$, we have
    \[
        \E[\normwork] = \frac{\sysvar}{\slk} \cdot (1-o(1)) - \Obrac{\sqrt{\sysvar} \log n + \Lm} = \frac{\sysvar}{\slk} \cdot (1-o(1)),
    \]
    This finishes the proof of Lemma~\ref{result:workload-lower}.
\endproof

\proof{Proof of Lemma~\ref{result:workload-upper}.}
Recall that in Section~\ref{sec:workload-bounds-preliminary-sketch} we have shown that $\E\left[\sumall \frac{\siz_i}{\sev_i} Q_i\right] = \E[\normwork]$, where $\normwork$ is the \textit{normalized work} given by $\normwork \triangleq \sumall \frac{\siz_i}{\sev_i} \left(X_i - \xnom_i\right)$ and $\xnom_i \triangleq \frac{\arr_i}{\sev_i}$. Moreover, we have
\begin{equation}
    \E[\normwork] \leq \offset_2 + \E[(\normwork-\offset_2)^+],
\end{equation}
for any number $\offset_2$. Next, we will bound the term $\E[(\normwork-\offset_2)^+]$ for a suitably chosen $\offset_2$.

We bound $\E[(\normwork-\offset_2)^+]$ by analyzing the Lyapunov drift of a function $f\colon \R_+^\njt \to \R$ defined below.  Let $\ve{x},\ve{z} \in \R_+^\njt$ denote possible realizations of the state descriptors $\ve{X}(t), \ve{Z}(t)$. Then $f$ is defined as
\begin{equation*}
    f(\ve{x}) = \varphi\left(\fnormwork{\ve{x}} - \offset_2\right)
\end{equation*}
where $\fnormwork{\ve{x}} \triangleq \sumall \frac{\siz_i}{\sev_i} (x_i - \xnom_i)$ is a possible realization of $\normwork$ and $\varphi(s) = \left(s^+\right)^2$.

This proof relies on the relation that $\E[G f(\ve{X}, \ve{Z})] = 0$, which is justified by Lemma~\ref{lem:polynomial-zero-drift} in Appendix~\ref{sec:app:drift-cond}. To calculate $G f(\ve{x},\ve{z})$, we decompose the drift formula \eqref{eq:proof:drift-form} in the following way:
\begin{equation}\label{eq:app:proof:drift-mean-var-decomp}
    \begin{aligned}
        G f(\ve{x}, \ve{z}) &= \sumall \left(\arr_i - \sev_i z_i\right)\frac{\partial f}{\partial x_i}(\ve{x})\\
        &\mspace{23mu}+ \sumall \arr_i \left(f(\ve{x}+\ve{e}_i) - f(\ve{x}) -  \frac{\partial f}{\partial x_i}(\ve{x})\right)+\sumall\sev_i z_i \left(f(\ve{x}-\ve{e}_i) - f(\ve{x}) + \frac{\partial f}{\partial x_i}(\ve{x})\right). 
    \end{aligned}
\end{equation}
It is easy to see that
\begin{equation*}
\frac{\partial f}{\partial x_i}(\ve{x}) = \frac{\partial r}{\partial x_i}(\ve{x}) \varphi'\left(\fnormwork{\ve{x}} - \offset_2\right)  =\frac{2\siz_i}{\sev_i} (\fnormwork{\ve{x}} - \offset_2)^+.
\end{equation*}
The remaining two terms are bounded by constants independent of $\ve{x}$ and $\ve{z}$ in the following way. 
\begin{align*}
    f(\ve{x}+\ve{e}_i) - f(\ve{x}) - \frac{\partial f}{\partial x_i}(\ve{x}) &= \int_{\xi \in [\ve{x}, \ve{x}+\ve{e}_i]} \left(\frac{\partial f}{\partial x_i}(\xi) - \frac{\partial f}{\partial x_i}(\ve{x})\right) d\xi\\ 
    &= \frac{\siz_i}{\sev_i} \int_{\xi \in [\ve{x}, \ve{x}+\ve{e}_i]} \left(\varphi'\left(\fnormwork{\xi}-\offset_2\right) - \varphi'\left(\fnormwork{\ve{x}}-\offset_2\right)\right) d\xi\\
    &\leq \frac{2\siz_i}{\sev_i} \int_{\xi \in [\ve{x}, \ve{x}+\ve{e}_i]} \left(\fnormwork{\xi} - \fnormwork{\ve{x}}\right) d\xi \\ 
    &= \frac{\siz_i^2}{\sev_i^2}, 
 \end{align*}
where the inequality is because $\varphi'(s) = 2s^+$ is $2$-Lipschitz continuous. 
Similarly,
\begin{equation*}
    f(\ve{x} - \ve{e}_i) - f(\ve{x}) +  \frac{\partial f}{\partial x_i}(\ve{x}) \leq \frac{\siz_i^2}{\sev_i^2}. 
\end{equation*}
Plugging in the inequalities above back to the decomposition of the drift \eqref{eq:app:proof:drift-mean-var-decomp} and taking expectation on both sides, because $\E[G f(\ve{X},\ve{Z})] = 0$, we have
\begin{equation}
    \begin{aligned}
         0 &\leq \E\left[ \sumall \siz_i \left(\xnom_i - Z_i\right)\cdot 2(\normwork-\offset_2)^+\right]+ \E\left[\sumall \left(\frac{\arr_i \siz_i^2}{\sev_i^2} + \frac{Z_i\siz_i^2}{\sev_i}\right)\right]. 
    \end{aligned}    
\end{equation}
Observe that because $\E[Z_i] = \frac{\arr_i}{\sev_i}$, the second term can be straightforwardly computed as 
\begin{align}
       \E \left[\sumall \left(\frac{\arr_i \siz_i^2}{\sev_i^2} + \frac{Z_i \siz_i^2}{\sev_i}\right)\right] = 2\sumall \frac{\arr_i\siz_i^2}{\sev_i^2} = 2\sysvar. 
\end{align}
Rearranging the terms, we get the following key equation: for any number $\offset_2$, 
\begin{equation}\label{eq:app:proof:bd1:mean-var-terms}
    \E\left[ \sumall \siz_i \left(Z_i - \xnom_i\right)\cdot (\normwork-\offset_2)^+\right] \leq \sysvar.
\end{equation}
Now suppose we are able to show that
\begin{equation}\label{eq:app:proof:drift-goal}
    \gamma \cdot \E\left[ (\normwork-\offset_2)^+\right] \leq \E\left[ \sumall \siz_i \left(Z_i - \xnom_i\right) \cdot (\normwork-\offset_2)^+ \right] + o(1),
\end{equation}
for some $\gamma > 0$, then by \eqref{eq:app:proof:bd1:mean-var-terms}, $\E\left[ (\normwork-\offset_2)^+\right] \leq \frac{\sysvar + o(1)}{\gamma}$, so $\E[\normwork] \leq \offset_2 + \frac{\sysvar +o(1)}{\gamma}$.

We devote the remainder of this proof to proving \eqref{eq:app:proof:drift-goal}.
The idea here is to use the following two events to further partition the probability space
\begin{align*}
\mathcal{E}_1 = \left\{\sumall \left(\tfrac{1}{\sevmin} - \tfrac{1}{\sev_i} \right)\siz_i(X_i - \xnom_i) > -K_2\right\}, \quad
\mathcal{E}_2 = \left\{\normwork \leq \frac{n}{\sevmin}\right\},
\end{align*}
where $K_2$ is a suitable number such that $\mathcal{E}_1$ happens with high probability.
We break the term on the left of \eqref{eq:app:proof:bd1:mean-var-terms} based on the three cases $\mathcal{E}_1$, $\mathcal{E}_1^c \cap \mathcal{E}_2$ and $\mathcal{E}_1^c \cap \mathcal{E}_2^c$ and analyze them separately.
\begin{align}
    \E\left[ \sumall \siz_i\left(Z_i - \xnom_i\right)\cdot (\normwork-\offset_2)^+\right] &=  \E\left[ \sumall \siz_i\left(Z_i - \xnom_i\right)\cdot (\normwork-\offset_2)^+ \indi_{\{\normwork>\offset_2, \mathcal{E}_1\}} \right]\label{eq:app:proof:bd1:indicator-decomp1}\\
    &\mspace{5mu}+ \E\left[\sumall \siz_i\left(Z_i - \xnom_i\right)\cdot (\normwork-\offset_2)^+\indi_{\{\normwork>\offset_2,\mathcal{E}_1^c,\mathcal{E}_2\}}\right]\label{eq:app:proof:bd1:indicator-decomp2} \\
    &\mspace{5mu}+ \E\left[\sumall \siz_i\left(Z_i - \xnom_i\right)\cdot (\normwork-\offset_2)^+\indi_{\left\{\normwork>\offset_2,\mathcal{E}_1^c, \mathcal{E}_2^c\right\}}\right]\label{eq:app:proof:bd1:indicator-decomp3}.
\end{align}

\textbf{Case 1: $\mathcal{E}_1$ happens.} Observe that the term in \eqref{eq:app:proof:bd1:indicator-decomp1} is non-zero only when $\normwork >\offset_2$ and event $\mathcal{E}_1$ happens, which implies that
\begin{equation*}
    \sumall \frac{\siz_i}{\mu_i} (X_i - \xnom_i) > \offset_2,
\end{equation*}
\begin{equation*}
    \sumall \left(\frac{1}{\sevmin} - \frac{1}{\sev_i} \right)\siz_i(X_i - \xnom_i) > - K_2.
\end{equation*}
Adding up the above two inequalities and rearranging the terms yield
\begin{equation}
    \sumall \siz_i(X_i  - \xnom_i) > \sevmin (\offset_2 - K_2).
\end{equation}
When the above inequality holds, we can invoke the definition of $\wst$-work-conserving policy and the fact that $\sumall \siz_i \xnom_i = n-\slk$ to get
\begin{equation}
    \begin{aligned}
         \sumall \siz_i\left(Z_i - \xnom_i\right)      &\geq \min\left(\sumall \siz_i X_i, n-\wst\right) - \sumall\siz_i\xnom_i \\
         &= \min\left(\sumall \siz_i(X_i - \xnom_i), \slk-\wst\right) \geq \min\left(\sevmin(\offset_2 - K_2), \slk - \wst\right).
    \end{aligned}
\end{equation}
\begin{equation}
    \begin{aligned}
       &\E\left[ \sumall \siz_i\left(Z_i - \xnom_i \right)\cdot (\normwork-\offset_2)^+ \indi_{\left\{\normwork > \offset_2,\mathcal{E}_1\right\}}\right] \\
       &\geq \min\left(\sevmin(\offset_2 - K_2), \slk - \wst\right)\cdot\E\left[\left(\normwork - \offset_2\right)^+\indi_{\left\{\normwork > \offset_2,\mathcal{E}_1\right\}}\right] .
    \end{aligned}\label{eq:app:proof:bd1:drift-part1}
\end{equation}

\textbf{Case 2: $\mathcal{E}_1^c \cap \mathcal{E}_2$ happens.} To bound the term in \eqref{eq:app:proof:bd1:indicator-decomp2}, we need to analyze the probability of event $\mathcal{E}_1^c$. Consider Lemma~\ref{lem:mminf-bounds} with $\yc= \sumall \left(\frac{1}{\sevmin} - \frac{1}{\sev_i} \right)\siz_i(X_i - \xnom_i)$, $c_i = \frac{1}{\sevmin} - \frac{1}{\sev_i}$, $\alpha = \beta= \sqrt{\frac{\sevmax}{\sevmin^2}\sysvar}$, $j=3\log n$. It can be verified that
$\alpha\beta \geq \cmax^2\sevmax\sysvar$.  Let $K_2=\alpha+\beta j$. Then we have
\begin{align*}
\Prob\left(\mathcal{E}_1^c\right)=\Prob\left(\sumall \left(\frac{1}{\sevmin} - \frac{1}{\sev_i} \right)\siz_i(X_i - \xnom_i) \leq -K_2 \right)\le \frac{1}{n^3}.
\end{align*}
Note that this choice of $K_2$ satisfies $K_2= O(\sqrt{\sysvar}\log n)$. 

With the upper bound $\Prob\left(\mathcal{E}_1^c\right)\le\frac{1}{n^3}$, we bound the term in \eqref{eq:app:proof:bd1:indicator-decomp2} in the following way. Observe that $\sumall\siz_i(Z_i - \xnom_i) \geq  -n$.  Further, when the event $\mathcal{E}_2$ occurs, $\normwork\le\frac{n}{\sevmin}$, thus $0 \leq (\normwork-\offset_2)^+\le\frac{n}{\sevmin}$.  Therefore,
\begin{equation*}
    \mspace{23mu}\E\left[\sumall\siz_i(Z_i - \xnom_i)\cdot(\normwork - \offset_2)^+\indi_{\left\{\normwork > \offset_2, \mathcal{E}_1^c, \mathcal{E}_2\right\}}\right] 
    \geq -n\cdot \frac{n}{\sevmin}\cdot \Prob\left(\mathcal{E}_1^c\right)
    \geq -\frac{1}{\sevmin n},
\end{equation*}
\begin{equation*}
    \mspace{23mu}\E\left[(\normwork - \offset_2)^+ \indi_{\left\{\normwork > \offset_2, \mathcal{E}_1^c, \mathcal{E}_2\right\}}\right]
    \leq \frac{n}{\sevmin}\cdot \Prob\left(\mathcal{E}_1^c\right)
    \leq \frac{1}{\sevmin n^2}.
\end{equation*}
We can rearrange the above inequalities into a similar form as \eqref{eq:app:proof:bd1:drift-part1}:
\begin{equation}\label{eq:app:proof:bd1:drift-part2}
    \begin{aligned}
        &\mspace{23mu}\E\left[\sumall\siz_i(Z_i - \xnom_i)\cdot(\normwork - \offset_2)^+\indi_{\left\{\normwork > \offset_2, \mathcal{E}_1^c, \mathcal{E}_2\right\}}\right]\\
        &\geq \min\left(\sevmin(\offset_2 - K_2), \slk - \wst\right) \cdot \E\left[(\normwork - \offset_2)^+ \indi_{\left\{\normwork > \offset_2, \mathcal{E}_1^c, \mathcal{E}_2\right\}}\right] - \frac{n+\slk}{\sevmin n^2},
    \end{aligned}
\end{equation}
where we have used the fact that $\min\left(\sevmin(\offset_2 - K_2), \slk - \wst\right) \leq \slk$.

\textbf{Case 3: $\mathcal{E}_1^c \cap \mathcal{E}_2^c$ happens.} Lastly, we bound the term in (\ref{eq:app:proof:bd1:indicator-decomp3}). Observe that $\mathcal{E}_2^c$ implies that $\sumall \siz_i X_i \geq n$, so an $\wst$-work-conserving policy will make sure that $\sumall \siz_i Z_i \geq n-\wst$. Therefore, 
\begin{equation}\label{eq:app:proof:bd1:drift-part3}
    \begin{aligned}
      \mspace{23mu}\E\left[\sumall\siz_i(Z_i - \xnom_i)\cdot(\normwork - \offset_2)^+\indi_{\left\{\normwork > \offset_2, \mathcal{E}_1^c, \mathcal{E}_2^c\right\}}\right]
      \geq (\slk - \wst)\cdot\E\left[ (\normwork - \offset_2)^+\indi_{\left\{\normwork > \offset_2, \mathcal{E}_1^c, \mathcal{E}_2^c\right\}}\right].
     \end{aligned}
\end{equation}

Combining the results in three cases \eqref{eq:app:proof:bd1:drift-part1} \eqref{eq:app:proof:bd1:drift-part2}
\eqref{eq:app:proof:bd1:drift-part3}, we have
\begin{equation}
\E\left[ \sumall \siz_i \left(Z_i - \xnom_i\right) \cdot (\normwork-\offset_2)^+ \right] \geq \min\left(\sevmin(\offset_2 - K_2), \slk - \wst\right) \cdot \E\left[ (\normwork-\offset_2)^+\right] - \frac{n+\slk}{\sevmin n^2}.
\end{equation}
Therefore, we have shown \eqref{eq:app:proof:drift-goal} with $\gamma = \min\left(\sevmin(\offset_2 - K_2), \slk - \wst\right)$. By \eqref{eq:app:proof:bd1:mean-var-terms}, we conclude that
\begin{equation}
    \begin{aligned}
        \E[\normwork] &\leq \offset_2 + \E\left[(\normwork - \offset_2)^+\right] \leq \offset_2 +  \frac{\sysvar}{\min\left(\sevmin(\offset_2 - K_2), \slk - \wst\right)} + O\left(\frac{1}{n}\right).
    \end{aligned}
\end{equation}

When $\slk= \obrac{\frac{\sqrt{\sysvar}}{\log n}}$, we take $\offset_2 = K_2 + \frac{\slk - \slk'}{\sevmin}$ and get
\begin{align}
    \E[\normwork] &\leq K_2 + \frac{\slk - \slk'}{\sevmin} + \frac{\sysvar}{\slk - \wst} + \Obrac{\frac{1}{n}} = \frac{\sysvar}{\slk - \wst} \cdot \left(1 + o(1)\right),\nonumber
\end{align}
where we have used the fact that $K_2 = O(\sqrt{\sysvar}\log n)$ and $\slk = o\left(\frac{\sqrt{\sysvar}}{\log n}\right)$.

When $\slk= \Wbrac{\frac{\sqrt{\sysvar}}{\log n}}$, we have $\slk \geq C \frac{\sqrt{\sysvar}}{\log n}$ for some $C > 0$ and $n$ large enough. Because $\slk - \wst \geq \Lmratio \slk$, we get $\slk - \wst \geq \Lmratio  C \frac{\sqrt{\sysvar}}{\log n}$ for $n$ large enough. Taking $\offset_2 = K_2 + \frac{\Lmratio C }{\sevmin}\frac{\sqrt{\sysvar}}{\log n}$ yields $\min\left(\sevmin(\offset_2 - K_2), \slk - \wst\right) \geq  \Lmratio C  \frac{\sqrt{\sysvar}}{\log n}$, so
\begin{align*}
    \E[\normwork] &\leq K_2 + \frac{\Lmratio C  \sqrt{\sysvar}}{\sevmin \log n} + \frac{\sqrt{\sysvar}\log n}{\Lmratio C} + \Obrac{\frac{1}{n}} = \Obrac{\sqrt{\sysvar}\log n}.
\end{align*}
This finishes the proof.
\endproof

\section{Proof of Theorem~\ref{result:waiting-time-fcfs} (waiting times under FCFS)}\label{sec:proof:fcfs}
\subsection{Proof overview and preliminaries}
We prove Theorem~\ref{result:waiting-time-fcfs} in this section. We first present some intuition and preliminaries. Then we give a proof the theorem. Finally we give the proofs of the two lemmas used in the proof, Lemma~\ref{lem:fcfs:independence} and Lemma~\ref{lem:fcfs:sandwich}.

The analysis of FCFS is based on the intuition that, under FCFS, the number of type~$i$ jobs in the queue is approximately proportional to its arrival rate $\arr_i$, i.e.    
\begin{equation}\label{eq:fcfs-heuristics}
    \E[Q_i] \approx \frac{\arr_i}{\arr} \E[Q_\Sigma],
\end{equation}
where $Q_\Sigma \triangleq \sumall Q_i$ is the total queue length. Using this relation, we can easily convert from the expected workload to mean waiting time:
\begin{equation}
    \sumall \frac{\siz_i}{\sev_i} \E[Q_i] \approx \sumall \frac{\arr_i \siz_i}{\sev_i} \frac{1}{\arr} \E[Q_\Sigma] = (n-\slk) \E\left[\wait\right],
\end{equation}
where the second equality is due to Little's law and the fact that $ \sumall \frac{\arr_i \siz_i}{\sev_i} = n-\slk$.

This intuition is formalized by considering a \textit{Modified-FCFS} policy, which serves the jobs in a FCFS order, but will not serve the next job until there are at least $\Lm$ idle servers. In Lemma~\ref{lem:fcfs:independence}, we prove that under Modified-FCFS, \eqref{eq:fcfs-heuristics} holds exactly, so the expected queue length of each type of jobs can be computed using the above argument. Moreover, when $n$ is large, we expect Modified-FCFS to be a good approximation of FCFS. In fact, we can define an \textit{upper bounding system} as the multiserver-job system with $n$ servers under Modified-FCFS, and define a \textit{lower bounding system} as the multiserver-job system with $n+\Lm$ servers under Modified-FCFS. In Lemma~\ref{lem:fcfs:sandwich}, we prove that the queue length of each type of jobs in the original system is sandwiched between those in the two modified systems. Since the two modified systems themselves are very close to each other, we get a tight characterization of FCFS.   

\begin{lemma}\label{lem:fcfs:independence}
    For both the upper bounding system and the lower bounding system, we have
    \begin{align}
        \E[\sysup{Q}_i(\infty)] &= \frac{\arr_i}{\arr}\E[ \sysup{Q}_{\Sigma}(\infty) ],\quad  \forall i\in [\njt], \label{eq:proof:system2:queue-fraction}\\
        \E[\syslow{Q}_i(\infty)] &= \frac{\arr_i}{\arr}\E[ \syslow{Q}_{\Sigma}(\infty) ],\quad  \forall i\in [\njt].\label{eq:proof:system3:queue-fraction}
    \end{align}
    where $\sysup{Q}_i(t)$ and $\syslow{Q}_i(t)$ denote the queue lengths of type $i$ jobs in the upper and lower bounding systems, respectively; $\sysup{Q}_{\Sigma}(t) \triangleq \sum_{i=1}^\njt \sysup{Q}_i(t)$ and $\syslow{Q}_{\Sigma}(t) \triangleq \sum_{i=1}^\njt \syslow{Q}_i(t)$ are the total queue lengths.
\end{lemma}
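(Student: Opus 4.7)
The plan is to reduce \eqref{eq:proof:system2:queue-fraction} and \eqref{eq:proof:system3:queue-fraction} to the statement that, in each modified system, the mean stationary waiting time is the same across all job types; combined with Little's law applied class by class, this immediately yields both proportionality identities.

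First I would check that each modified system is stable and admits a stationary distribution with finite mean queue length. Under Modified-FCFS, once the queue is nonempty the number of idle servers is at most $\Lm - 1$, so the instantaneous service capacity exceeds the offered load $\sum_i \arr_i \siz_i / \sev_i = n - \slk$ by at least $\slk - \Lm + 1 \geq (1-\Lmratio)\slk > 0$ in the upper bounding system (Assumption~\ref{assump:lm-bound}) and by strictly more in the lower. With stationarity in hand, Little's law applied to type $i$ gives $\E[\sysup{Q}_i(\infty)] = \arr_i\,\E[\wait_i^{(U)}(\infty)]$, and similarly for the lower bounding system, so the proof reduces to showing that $\E[\wait_i^{(U)}(\infty)]$ is independent of $i$ (and likewise for the lower bounding system).

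For this I would use a sample-path coupling. Run two copies of the upper bounding system on a common probability space, sharing the total Poisson arrival epochs, all exponential service-time clocks, and the type labels of every arrival \emph{except} one tagged arrival, which is assigned type $i$ in the first copy and type $j$ in the second. The defining feature of Modified-FCFS is that the predicate ``start the head-of-queue job'' reduces to ``at least $\Lm$ idle servers are available''---a condition that does not look at the head-of-queue job's type and is always satisfiable because $\siz_k \le \Lm$ for every $k$. Consequently, up to the moment the tagged job enters service the two coupled systems execute the identical sequence of arrivals, service starts, and service completions; in particular the tagged job's waiting time is literally the same sample-path realization in both copies, which after averaging gives $\E[\wait_i^{(U)}(\infty)] = \E[\wait_j^{(U)}(\infty)]$. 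PASTA provides the standard justification that the pre-arrival state seen by a tagged Poisson arrival has the stationary distribution regardless of the tagged type. Writing this common value as $\E[\wait^{(U)}(\infty)]$ and summing Little's law over all types, one obtains $\E[\sysup{Q}_\Sigma(\infty)] = \arr \E[\wait^{(U)}(\infty)]$, whence $\E[\sysup{Q}_i(\infty)] = \arr_i \E[\wait^{(U)}(\infty)] = (\arr_i/\arr)\E[\sysup{Q}_\Sigma(\infty)]$, which is \eqref{eq:proof:system2:queue-fraction}. The identical argument with $n$ replaced by $n+\Lm$ gives \eqref{eq:proof:system3:queue-fraction}.

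The main obstacle I anticipate is making the coupling fully rigorous---in particular, specifying precisely which driving randomness is shared across the two copies, and verifying that the divergence after the tagged job enters service (caused by the different exponential draws for types $i$ versus $j$) is irrelevant because the tagged job's waiting time is measurable with respect to the pre-service history on which the two copies coincide. The remaining ingredients---PASTA, Little's law, and the stability bound---are all standard.
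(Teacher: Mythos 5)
Your proposal is correct, and it rests on exactly the same key observation as the paper's proof: under Modified-FCFS the decision to start the head-of-queue job depends only on whether enough servers are idle (at least $\Lm$ of them), never on that job's type, so a job's waiting time is a function of the earlier jobs' primitives and its own arrival epoch and is therefore independent of its own type label. Where you diverge is in how you convert this into the queue-length identity. The paper does it directly: it writes $\sysup{Q}_i(t)=\sum_k \indi\{C(k)=i,\,A(k)\le t,\,A(k)+\sysup{W}(k)>t\}$, observes that the event $\{A(k)\le t,\, A(k)+\sysup{W}(k)>t\}$ lies in the $\sigma$-algebra generated by $\mathcal{F}_{k-1}$ and $A_0(k)$ (where $\mathcal{F}_{k-1}$ is generated by the primitives of jobs $1,\dots,k-1$) and is hence independent of $C(k)$, and then factors the expectation --- no Little's law, no PASTA, no tagged-customer stationarity is needed. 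Your route through per-class Little's law, PASTA, and a two-copy coupling with a re-typed tagged arrival is valid but carries extra overhead (stability, finiteness of $\E[\sysup{Q}_i(\infty)]$, and the Palm-calculus justification of PASTA all have to be supplied), and the obstacle you flag at the end --- making rigorous the claim that the tagged job's waiting time is determined by the pre-service history common to both copies --- is precisely what the paper's measurability statement $\sysup{W}(k)\in\sigma(\mathcal{F}_{k-1},A_0(k))$ accomplishes. If you adopt that statement directly, you can dispense with the coupling and Little's law altogether and land on the paper's one-line factorization of $\E[\sysup{Q}_i(t)]$.
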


\begin{lemma}\label{lem:fcfs:sandwich}
    \begin{equation}\label{eq:proof:fcfs:sandwich}
        \E[\syslow{Q}_i(\infty)] \leq \E\left[Q_i(\infty)\right] \leq \E[\sysup{Q}_i(\infty)] \quad  \forall i\in [\njt].
    \end{equation}
    where $\sysup{Q}_i(t)$ and $\syslow{Q}_i(t)$ denote the queue lengths of type $i$ jobs in the upper and lower bounding systems, respectively.
\end{lemma}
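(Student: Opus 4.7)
The plan is a sample-path coupling under which all three systems share the same Poisson arrival stream with coupled type labels, and the $k$-th arrival carries a single pre-drawn exponential service duration $S_k$ of rate $\sev_{i_k}$ (where $i_k$ is its type) used in whichever system serves it. Starting all three systems empty, let $\tau_k$, $\sysup{\tau}_k$, and $\syslow{\tau}_k$ denote the service-start times of the $k$-th arrival in the original, upper-bounding, and lower-bounding systems respectively. The main claim is the pathwise inequality
\begin{equation*}
\syslow{\tau}_k \leq \tau_k \leq \sysup{\tau}_k \quad \text{for every } k, \text{ almost surely,}
\end{equation*}
which I prove by induction on $k$; the base case is immediate since all three systems start empty.

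The inductive step for $\tau_{k+1} \leq \sysup{\tau}_{k+1}$ rests on a nesting observation. Fix $t^* = \sysup{\tau}_{k+1}$. Every job $j \leq k$ has started in the upper system by $t^*$ (FCFS order), and by the induction hypothesis also started in the original by $t^*$. Coupled service durations yield $\tau_j + S_j \leq \sysup{\tau}_j + S_j$, so if job $j$ is still in service in the original at $t^*$ then the same is true in the upper. Summing sizes, the original's busy-server count at $t^*$ is at most the upper's, which by the Modified-FCFS trigger is at most $n - \Lm \leq n - \siz_{i_{k+1}}$. Hence the original's FCFS trigger for job $k+1$ is already satisfied at or before $t^*$, giving $\tau_{k+1} \leq t^*$. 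The dual inequality $\syslow{\tau}_{k+1} \leq \tau_{k+1}$ is fully symmetric: at $t^* = \tau_{k+1}$ the original's busy count is at most $n - \siz_{i_{k+1}} \leq n$, and the analogous nesting $\syslow{\tau}_j + S_j \leq \tau_j + S_j$ forces the lower system's busy count at $t^*$ to be at most $n$, i.e., at least $\Lm$ idle servers out of $n + \Lm$, so the lower's Modified-FCFS trigger is satisfied.

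Once the start-time ordering holds, a type-$i$ job is queued in a given system at time $t$ iff it has arrived by $t$ but its start time in that system exceeds $t$. Monotonicity of start times across the coupling gives $\syslow{Q}_i(t) \leq Q_i(t) \leq \sysup{Q}_i(t)$ pathwise for every $i$ and $t$. Positive recurrence of all three systems (ensured by Assumption~\ref{assump:lm-bound}, which gives $\slk - \Lm > 0$) combined with ergodicity of time averages upgrades the pathwise ordering to the steady-state expectations, yielding \eqref{eq:proof:fcfs:sandwich}. The main obstacle is the nesting step: one must carefully verify that the ``faster'' system's set of in-service jobs at the critical time is contained in the ``slower'' system's; once this is pinned down, the rest is bookkeeping.
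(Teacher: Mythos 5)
Your proposal is correct and is essentially the paper's own argument: the same sample-path coupling of arrivals, types, and service durations across the three systems, the same induction on service-start times (equivalently, waiting times, since arrivals coincide), and the same key nesting of the in-service job sets to compare busy-server counts at the critical instant — the paper merely phrases the inductive step as a proof by contradiction rather than your direct contrapositive form. No gaps.
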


\subsection{Proof of Theorem~\ref{result:waiting-time-fcfs}}
\begin{proof}[Proof of Theorem~\ref{result:waiting-time-fcfs}.]
Recall that by Little's law, $\E\big[\wait_i\big] = \frac{1}{\arr_i} \E[Q_i]$ and $\E\big[\wait\big] = \frac{1}{\arr} \sumall \E[Q_i]$.
To characterize $\E[Q_i]$, Lemma~\ref{lem:fcfs:sandwich} suggests that we only need to characterize $\E[\syslow{Q}_i]$ and $\E[\sysup{Q}_i]$, the queue lengths in the lower bounding system and the upper bounding system, for each $i\in[\njt]$.

Observe that the lower bounding system has $n+\Lm$ servers, with slack capacity $\slk+\Lm$. Because $\slk + \Lm \leq (1+\Lmratio)\slk = \obrac{\frac{\sqrt{\sysvar}}{\log n}}$, we can apply Lemma~\ref{result:workload-lower} and Lemma~\ref{lem:fcfs:independence} to get 
\[
    \frac{1}{\arr} \E[\syslow{Q}_{\Sigma}]\cdot\sumall \frac{\arr_i \siz_i}{\sev_i} = \sumall \frac{\siz_i}{\sev_i} \E[\syslow{Q}_i] \geq \frac{\sysvar}{\slk+\Lm} \cdot (1-o(1)). 
\]
Recall the facts introduced in Section~\ref{sec:model} that $\slk = \obrac{n}$ and $\slk =  n - \sumall \frac{\arr_i \siz_i}{\sev_i}$, so we have $\sumall \frac{\arr_i \siz_i}{\sev_i} = n \cdot(1-o(1))$. Therefore, the above equation implies that $\E[\syslow{Q}_{\Sigma}] \geq  \frac{\arr \sysvar}{n(\slk+\Lm)} \cdot (1-o(1))$. By Lemma~\ref{lem:fcfs:independence}, 
\begin{equation}
    \E[\syslow{Q}_i] \geq  \frac{\arr_i\sysvar}{n(\slk+\Lm)} \cdot (1-o(1)) \quad \forall i\in[\njt].
\end{equation}

For the upper bounding system, observe that it has $n$ server and slack capacity $\slk$, and operates under a $\Lm$-work-conserving policy. Applying the workload upper bound in Lemma~\ref{result:workload-upper} yields
\begin{equation*}
    \frac{1}{\arr} \sumall \frac{ \siz_i}{\sev_i} \E[\sysup{Q}_i] \leq \frac{\sysvar}{\slk-\Lm} \cdot (1+o(1)).
\end{equation*}
Following a similar argument as in the lower bounding system, we have \begin{equation}
    \E[\sysup{Q}_i] \leq \frac{\arr_i \sysvar}{n(\slk-\Lm)}\cdot (1+o(1)) \quad \forall i\in[\njt].
\end{equation}

Therefore, by Lemma~\ref{lem:fcfs:sandwich}, we have
\begin{equation}
  \frac{\arr_i\sysvar}{n(\slk+\Lm)} \cdot (1-o(1)) \leq \E[Q_i] \leq \frac{\arr_i \sysvar}{n(\slk-\Lm)}\cdot (1+o(1)) \quad \forall i\in[\njt],
\end{equation}
which implies the waiting time bounds in Theorem~\ref{result:waiting-time-fcfs}.
\end{proof}

\subsection{Proof of Lemma~\ref{lem:fcfs:independence} and Lemma~\ref{lem:fcfs:sandwich}}
The lemmas are proved using some sample-path coupling arguments. Before showing the proofs, we first give a construction of the sample paths of a multiserver-job system. We index all the jobs by their order of arriving to the system. Let $A_0(k)$, $S_0(k)$, $C(k)$ be three sequences of independent random variables, where $A_0(k)$ follows $\exp(\arr)$ distribution, $S_0(k)$ follows $\exp(1)$ distribution, and $C(k)=i$ with probability $\arr_i/\arr$ for $i\in[\njt]$. For the $k$-th job, we let it arrive to the system at the time $A(k) = \sum_{j=1}^k A_0(k)$, with service time $S(k) = S_0(k) / \sev_{C(k)}$ and server need $L(k) = \siz_{C(k)}$. 

We can simulate the system dynamic under FCFS or Modified-FCFS given the realization of the jobs. Let the $k$-th job's waiting time be $W(k)$, then the $k$-th job starts its service at time $A(k)+W(k)$, occupying $L(k)$ servers and leaves the system at $A(k)+W(k)+S(k)$.

We couple the sample paths of the original system, the upper bounding system, and the lower bounding system by making them share the same realization of jobs, i.e., the $k$-the job of the three system arrive at the same time, with the same service times and server needs. 

Now we are ready to present the proof. Note that quantities with superscript $(U)$ corresponds to the upper bounding system, while quantities with $(L)$ belongs to the lower bounding system.



\begin{proof}[Proof of Lemma~\ref{lem:fcfs:independence}.]
In the upper bounding system, the queue length of type $i$ jobs at time $t$ is given by $
        \sysup{Q}_i(t) = \sum_{k=1}^\infty \indi\{C(k) = i, A(k) \leq t, A(k) + \sysup{W}(k) > t\}.
$ 
Taking expectation, we get 
\begin{equation} 
    \begin{aligned}
        \E[\sysup{Q}_i(t)] = \sum_{k=1}^\infty \frac{\arr_i}{\arr} \Prob\Big(A(k) \leq t, A(k) + \sysup{W}(k) > t  \mid C(k) = i\Big) . 
    \end{aligned}\label{eq:proof:modified-fcfs:queue-fraction-intermediate}
\end{equation}

Let the $\sigma$-algebra $\mathcal{F}_k = \sigma((A_0(j), S_0(j), C(j))\mid j\leq k)$. From the definitions, it is easy to see that  
\[
    \left(S(j), A(j), L(j), \sysup{W}(j) \right)\in \mathcal{F}_k \quad \text{ for }  j\leq k.
\]
Observe that the following two events are identical, 
\begin{equation} \label{eq:proof:fcfs:independence-long-ineq}
        \{A(k)\leq t, A(k)+\sysup{W}(k) > t\} = \{A(k) \leq t, \sum_{j\in \sysup{\mathcal{\ve{Z}}}(t;k-1)} L(j) > n - \Lm\},
\end{equation}
where $\sysup{\mathcal{\ve{Z}}}(t;k) \triangleq \{j\mid j\leq k, A(j) + \sysup{W}(j) + S(j) > t\}$ is the set of jobs with index $j\leq k$ who has not finished service by time $t$ in the upper bounding system. 
Moreover, the RHS event is obviously in the $\sigma$-algebra generated by $\mathcal{F}_{k-1}$ and $A_0(k)$, which makes it independent of $C(k)$. This fact helps us get rid of the conditioning on the RHS of (\ref{eq:proof:modified-fcfs:queue-fraction-intermediate}):
\begin{equation}
    \E[\sysup{Q}_i(t)]= \frac{\arr_i}{\arr}\sum_{k=1}^\infty  \Prob\left(A(k) \leq t, A(k) + \sysup{W}(k) > t\right),
\end{equation}
\begin{equation}
    \E[\sysup{Q}_{\Sigma}(t)] = \sum_{i=1}^\njt \E \sysup{Q}_i(t) =\sum_{k=1}^\infty  \Prob\left(A(k) \leq t, A(k) + \sysup{W}(k) > t\right).
\end{equation}
Dividing the above two inequalities, we get (\ref{eq:proof:system2:queue-fraction}).

The proof of (\ref{eq:proof:system3:queue-fraction}) is almost identical with all superscript $(U)$ replaced by superscript $(L)$, and equation \eqref{eq:proof:fcfs:independence-long-ineq} replaced by 
\begin{equation} 
    \begin{aligned}
        \{A(k)\leq t, A(k)+\syslow{W}(j) > t\} = \{A(k) \leq t, \sum_{j\in \syslow{\mathcal{\ve{Z}}}(t;k-1)} \syslow{L}(j) > n\},
    \end{aligned}
\end{equation}
where $\syslow{\mathcal{\ve{Z}}}(t;k) \triangleq \{j\mid j\leq k, A(j) + \syslow{W}(j) + S(j) > t\}$.
\end{proof}


\begin{proof}[Proof of Lemma~\ref{lem:fcfs:sandwich}.]
We prove the following claim by doing induction on $k$:
\begin{equation}
    W(k) \leq \sysup{W}(k) \quad \forall k=1,2,\dots. \label{eq:proof:fcfs:induction-on-wait}
\end{equation}
Case $k=1$ is trivial because $W(1) = \sysup{W}(1) = 0$. Suppose that we have proved the cases of $1,2,\dots, k$. For case $k+1$, to get a contradiction, suppose that $W(k+1) > \sysup{W}(k+1)$. Let $T = A(k+1) + \sysup{W}(k+1)$, then the $(k+1)$-th job starts service at time $T$ in the upper bounding system, but does not start service until $A(k+1) + W(k+1)$ in the original system. This implies that at the moment before time $T$, the original system has more than $n - L(k+1)$ busy servers, while the upper bounding system has at most $n - \Lm$ busy servers. Because $n - L(k+1) \geq n - \Lm$,
\begin{equation}
    \sum_{j\in \mathcal{\ve{Z}}(T;k)} L(j) > \sum_{j\in \sysup{\mathcal{\ve{Z}}}(T;k)} L(j), \label{eq:proof:fcfs:busy-server-inequality}
\end{equation}
where
$
\mathcal{\ve{Z}}(T;k) = \{j\mid j\leq k, A(j) + W(j) + S(j) > T\},~ 
\sysup{\mathcal{\ve{Z}}}(T;k) = \{j\mid j\leq k, A(j) + \sysup{W}(j) + S(j) > T\}.
$ However, by induction hypothesis, $\mathcal{\ve{Z}}(T;k) \subseteq \sysup{\mathcal{\ve{Z}}}(T;k)$, which contradicts (\ref{eq:proof:fcfs:busy-server-inequality}). Therefore $W(k+1) \leq \sysup{W}(k+1)$. By induction, (\ref{eq:proof:fcfs:induction-on-wait}) is true.

Observe that the number of type~$i$ jobs in the queues at time $t$ is 
\begin{align}
    Q_i(t) &= \sum_{k=1}^\infty \indi\{C(k) = i, A(k)\leq t, A(k) + W(k) > t\} \label{eq:proof:fcfs:q-w}\\
    \sysup{Q}_i(t) &= \sum_{k=1}^\infty \indi\{C(k) = i, A(k)\leq t, A(k) + \sysup{W}(k) > t\}.\label{eq:proof:fcfs:q-w-upper}
\end{align}
by (\ref{eq:proof:fcfs:induction-on-wait}), $Q_i(t) \leq \sysup{Q}_i(t)$ for any $t, i$. Therefore, in steady-state, we have $\E[Q_i] \leq \E[\sysup{Q}_i]$ for any $i$.

For the other inequality in the lemma, we perform a similar argument. We prove the following claim by doing induction on $k$:
\begin{equation}
    W(k) \geq \syslow{W}(k) \quad \forall k=1,2,\dots. \label{eq:proof:fcfs:induction-on-wait2}
\end{equation}
Case $k=1$ is trivial because $W(1) = \syslow{W}(1) = 0$. Suppose we have proved case $1,2,\dots, k$. For case $k+1$, to get a contradiction, suppose $W(k+1) < \syslow{W}(k+1)$. Let $T = A(k+1) + W(k+1)$, then $T< A(k+1) + \syslow{W}(k+1)$. This implies that at time $T$, the $(k+1)$-th job in the original system starts its service, while the $(k+1)$-th job in the lower bounding system is still waiting at the head of line. According to the policy they use, at the moment right before $T$, the original system has at most $n-L(k+1)$ busy servers, while the lower bounding system has $>n$ busy servers. Therefore, 
\begin{equation}
    \sum_{j\in \mathcal{\ve{Z}}(T;k)} L(j) < \sum_{j\in \syslow{\mathcal{\ve{Z}}}(T;k)} L(k), \label{eq:proof:fcfs:busy-server-inequality2}
\end{equation}
where $\syslow{\mathcal{\ve{Z}}}(T;k) = \{j \mid j\leq k, A(j) + \syslow{W}(j) + S(j) > T\}$. 
However, by induction hypothesis, $\syslow{\mathcal{\ve{Z}}}(T;k) \subseteq \mathcal{\ve{Z}}(T;k)$, which contradicts (\ref{eq:proof:fcfs:busy-server-inequality2}). Therefore $W(k+1) \geq \syslow{W}(k+1)$ and \eqref{eq:proof:fcfs:induction-on-wait2} is proved.

Because
\begin{equation}
    \syslow{Q}_i(t) = \sum_{k=1}^\infty \indi\{C(k) = i, A(k)\leq t, A(k) + \syslow{W}(k) > t\}, \label{eq:proof:fcfs:q-w-lower}
\end{equation}
combined with \eqref{eq:proof:fcfs:q-w} \eqref{eq:proof:fcfs:induction-on-wait2},  it follows that $Q_i(t) \geq \syslow{Q}_i(t)$ for all $i,t$. Therefore, in steady-state, we have $\E[\syslow{Q}_i] \leq \E[Q_i]$ for any $i$. This finishes the proof of Lemma~\ref{lem:fcfs:sandwich}.
\end{proof}

\section{Proof of Theorem~\ref{result:waiting-time-lower-bound} (mean waiting time lower bound)}\label{sec:proof:lower}

\begin{proof}[Proof of Theorem~\ref{result:waiting-time-lower-bound}.]
    Recall that by Little's law, $\E\big[\wait\big] = \frac{1}{\arr}\sumall \E[Q_i]$.
    Therefore, it suffices to show a lower bound on the total queue length. We fix a policy in the original system.
    For any $i$ with $\indexmain \leq i \leq \njt$, we consider the $i$-th subsystem by ignoring all job types with index greater than $i$. In the $i$-th subsystem, it is always possible to achieve the same $\E[Q_j]$'s for $j\leq i$ by imitating the service decisions taken by the original system. Therefore, we have  $\sum_{j=1}^i \frac{\siz_j }{\sev_j}\E[Q_j]\ge \frac{\sysvar_i}{\slk_i} \cdot (1-o(1))$, where the right hand side expression is the workload lower bound of the $i$-th subsystem according to Lemma~\ref{result:workload-lower}.
    Then the expected waiting time $\E[\wait]$ is lower-bounded by the optimal value of the following linear programming problem:
    \begin{alignat}{2}
            &\min_{\{q_j\colon j\in[\njt]\}}& \mspace{23mu} &\frac{1}{\arr}\sum_{j=1}^\njt  q_j\nonumber\\
            &\text{subject to}& \quad &\sum_{j=1}^i  \frac{\siz_j}{\sev_j} q_j\geq \frac{\sysvar_i}{\slk_i} \cdot (1-o(1)) \quad \indexmain \leq i \leq \njt \nonumber\\
            &&& q_j \geq 0 \quad \forall j \in [\njt],\nonumber
    \end{alignat}
    where $q_j$ corresponds to  $\E[Q_j]$. It is easy to see that the objective value satisfies
    \begin{align}\label{eq:lb-key-step}
        \frac{1}{\arr}\sum_{j=1}^{\njt}  q_j \geq  \frac{1}{\arr}\frac{\sevmin}{\siz_i}\sum_{j=1}^i \frac{\siz_j}{\sev_j}q_j \geq \frac{\sevmin\sysvar_i}{\arr\siz_i\slk_i} \cdot (1-o(1)),
    \end{align}
    for any $\indexmain \leq i \leq \njt$, where in the first inequality we have used the fact that $\sev_j\geq\sevmin$ and $\siz_j\leq\siz_i$ for any $j\leq i$. Note that when $q_j$ is zero for each $j$ with $j\neq i$, the first inequality becomes an equality up to a constant order factor in terms of $\sev_i$ and $\sevmin$. 
    Because the choice of $i$ with $\indexmain \leq i \leq \njt$ is arbitrary, we have that the optimal value is no less than
    \begin{equation}
       \max_{\indexmain \leq i \leq \njt} \frac{\sevmin \sysvar_i}{\arr \siz_i \slk_i} \cdot (1-o(1)).
    \end{equation}
    Therefore, $\E\big[\wait\big] \geq  \max_{\indexmain \leq i \leq \njt} \frac{\sevmin \sysvar_i}{\arr \siz_i \slk_i} \cdot (1-o(1))$. This completes the proof.
\end{proof}

\begin{remark}
    The proof of the lower bound provides some intuitions for choosing the SNF policy. We consider the simple case where $\indexmain = \njt$. By \eqref{eq:lb-key-step}, the total queue length order-wise achieves the lower bound when the queue consists of jobs with the largest server needs, which suggests us to give low priorities to those jobs. This is in a similar spirit to SRPT, which leaves jobs with large remaining service times in the queue \citep[See, e.g., ][]{Har_13}.
\end{remark}

\section{Proof of Theorem~\ref{result:waiting-time-priority} (mean waiting time under SNF)}\label{sec:proof:priority}
To understand the behavior under the SNF policy, one key observation is that for each $i\in[\njt]$, the type~$i$ jobs are unaffected by type~$j$ jobs with $j > i$. As a result, we can learn about the original system by analyzing each $i$-th subsystem, which is obtained by removing all jobs of type $j$ with $j>i$.
Some subsystems are under relatively heavier traffic, or more precisely, subject to $\slk_i = \obrac{\frac{\sqrt{\sysvar_i}}{\log n}}$, 
while some subsystems are under lighter traffic. For those subsystems under relatively heavier traffic, Lemma~\ref{result:workload-lower} and Lemma~\ref{result:workload-upper} are enough for use; for those subsystems under lighter traffic, we sometimes 
use Lemma~\ref{lem:proof:total-server-need-expectation}, which is a more refined bound on the expected total server need $\E\left[\sumall \siz_i X_i\right]$, proved based on the two tail bounds in Lemma~\ref{lem:proof:workload-tail} and Lemma~\ref{lem:proof:total-server-need-tail}.
The proofs of Lemmas~\ref{lem:proof:workload-tail}, Lemma~\ref{lem:proof:total-server-need-tail}, and \ref{lem:proof:total-server-need-expectation} are in Appendix~\ref{sec:app:priority}.

\begin{restatable}{lemma}{workloadtail}\label{lem:proof:workload-tail}
Consider the multiserver-job system with $n$ servers satisfying $\Lm \leq \Lmratio \slk$ (Assumption~\ref{assump:lm-bound}). Letting $\xnom_i = \frac{\arr_i}{\sev_i}$, under any $\Lm$-work-conserving policy, the normalized work has the following tail bound: for any $\epsilon$ such that $0<\epsilon < \Lmratio$, there exists $\alpha_1  = \frac{2\epsilon\slk}{\sevmin}+\Theta\left(\frac{n\Lm}{\slk}\log n\right)$ and $\beta_1 = \Theta\left(\frac{n\Lm}{\slk}\right)$ such that for any $j\ge 0$,
\begin{equation}
    \Prob\left(\sumall \frac{\siz_i}{\sev_i} (X_i(\infty) - \xnom_i) \geq \alpha_1 + \beta_1 \cdot j\right) \leq  e^{-j}. \label{eq:proof:sub-hw:claim3-statement}
\end{equation}
\end{restatable}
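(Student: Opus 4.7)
Since the lemma is a geometric tail bound on the stationary normalized work $\fnw(\ve{X}) = \sumall \siz_i(X_i-\xnom_i)/\sev_i$, I would use the drift method with an exponential Lyapunov function. The plan is to take $V(\ve{x}) = \exp(\theta\fnw(\ve{x}))$ with $\theta$ of order $\sevmin\slk/(n\Lm)$, matching $\beta_1 = 1/\theta = \Theta(n\Lm/\slk)$ in the statement. After bounding the stationary MGF $\E[V(\ve{X})]$ by a power of $n$, Markov's inequality will give $\Prob(\fnw \geq \alpha_1 + \beta_1 j) \leq \E[V]\,e^{-\theta(\alpha_1+\beta_1 j)} \leq e^{-j}$ provided $\alpha_1$ is chosen so that $\theta\alpha_1 \geq \log \E[V]$. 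Some care is needed to first ensure $\E[V] < \infty$ (e.g., by truncating the exponential and passing to the limit) before invoking $\E[GV(\ve{X},\ve{Z})] = 0$.

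The central calculation will be a pointwise drift bound: expanding $e^{\pm\theta\siz_i/\sev_i}$ to second order and using $\theta\Lm/\sevmin = O(1)$, I expect
\[
GV(\ve{x},\ve{z}) \leq V(\ve{x})\Big[\theta\big((n-\slk) - \sumall \siz_i z_i\big) + C_1\theta^2 n\Lm\Big]
\]
for a constant $C_1$ depending only on $\sevmin,\sevmax$. I then split the state space by the event $\mathcal{G} = \{\sumall\siz_i X_i \geq n\}$: on $\mathcal{G}$ the $\Lm$-work-conserving property forces $\sumall\siz_i Z_i \geq n-\Lm$, so the bracket is at most $\theta(\Lm-\slk) + C_1\theta^2 n\Lm \leq -\eta$ with $\eta = \Theta(\theta\slk)$ once $\theta$ is chosen small enough to absorb the quadratic error (the parameter $\epsilon$ in the statement measures how much slack is left here). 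On $\mathcal{G}^c$ the bracket is only $O(\theta n)$ and $V$ could \emph{a priori} reach $e^{\theta n/\sevmin}$, which is useless; the refined control of $\E[V\mathbf{1}_{\mathcal{G}^c}]$ is the crux of the argument.

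To refine it, I would rewrite $\sumall\siz_i(X_i-\xnom_i) = \sevmin\big(\fnw(\ve{X}) + R(\ve{X})\big)$, where $R(\ve{x}) = \sumall(1/\sevmin - 1/\sev_i)\siz_i(x_i-\xnom_i)$. The event $\mathcal{G}^c\cap\{\fnw\geq t\}$ then forces $R(\ve{X}) < \slk/\sevmin - t$, which is rare once $t$ exceeds $\slk/\sevmin$ by more than the $\sqrt{\sysvar}$-scale standard deviation of $R$. Lemma~\ref{lem:mminf-bounds}(b) with $c_i = 1/\sevmin - 1/\sev_i$ delivers a geometric left tail for $R(\ve{X})$ with decay rate $\Theta(\sevmin/\sqrt{\sevmax\sysvar})$, and integrating that tail against $\theta\,e^{\theta t}$ --- a convergent integral because Assumption~\ref{assump:traffic} implies $\theta\sqrt{\sysvar} = o(1/\log n)$ --- should produce $\E[V\mathbf{1}_{\mathcal{G}^c}] \leq \exp(2\epsilon\theta\slk/\sevmin)\cdot(1+o(1))$, with the $2\epsilon\slk/\sevmin$ arising naturally as the cutoff placed just above the deterministic bound $\slk/\sevmin$.

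Combining $0 = \E[GV] \leq -\eta\E[V\mathbf{1}_{\mathcal{G}}] + O(\theta n)\E[V\mathbf{1}_{\mathcal{G}^c}]$ with $\eta/(\theta n) = \Theta(\slk/n)$ then yields $\E[V] \leq O(n/\slk)\cdot\E[V\mathbf{1}_{\mathcal{G}^c}] = \exp\big(2\epsilon\theta\slk/\sevmin + O(\log(n/\slk))\big)$, so choosing $\alpha_1 = 2\epsilon\slk/\sevmin + \Theta(n\Lm\log n/\slk)$ ensures $\theta\alpha_1 \geq \log \E[V]$ and closes the Markov bound. The hardest step will be the tail integration controlling $\E[V\mathbf{1}_{\mathcal{G}^c}]$: this is where Assumption~\ref{assump:traffic} enters essentially, guaranteeing that the Gaussian-scale decay of $R(\ve{X})$'s tail outruns the MGF parameter $\theta$, and it is also the reason the maximal-server-need bound $\Lm \leq \Lmratio\slk$ from Assumption~\ref{assump:lm-bound} is needed to make the drift on $\mathcal{G}$ genuinely negative.
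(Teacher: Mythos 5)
Your skeleton (exponential Lyapunov function, split the state space, control the MGF on the bad set, finish with Markov) is essentially the same mechanism the paper uses, except the paper packages the exponential-supermartingale step as a black box (Lemma~\ref{lem:lyapunov-upper-bound}, applied to the \emph{linear} function $V_1(\ve{x})=\sumall\frac{\siz_i}{\sev_i}x_i$) and handles the bad set additively through $\Prob(\mathcal{E}^c)$ rather than multiplicatively through $\E[V\mathbf{1}_{\mathcal{G}^c}]$. However, there are two concrete gaps in your version. First, your convergence of the tail integral rests on Assumption~\ref{assump:traffic}, which is \emph{not} among the hypotheses of this lemma: the lemma assumes only $\Lm\le\Lmratio\slk$, and it is invoked downstream (via Lemmas~\ref{lem:proof:total-server-need-tail}--\ref{lem:proof:total-server-need-expectation} and Corollary~\ref{result:queueing-probability}, and Case~3 of the SNF proof) precisely on subsystems with $\slk_i=\wbrac{\sqrt{n\siz_i}\log n}$, where Assumption~\ref{assump:traffic} fails. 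In that regime $\theta=\Theta(\slk/(n\Lm))$ can vastly exceed the $\Theta(1/\sqrt{\sysvar})$ decay rate you extract from the symmetric choice of parameters in Lemma~\ref{lem:mminf-bounds}(b), and your integral diverges. The fix is to use the asymmetric choice $\alpha=\Theta(\epsilon\slk)$, $\beta=\Theta(n\Lm/\slk)$ (legal since $\alpha\beta=\Theta(n\Lm)\ge\cmax^2\sevmax\sysvar$), which makes the left tail of $R$ decay at rate $\Theta(\slk/(n\Lm))$ in \emph{every} regime, and then pick the constant in $\theta$ strictly below the constant in $1/\beta$; this is exactly how the paper calibrates its high-probability set $\mathcal{E}_3$.

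Second, your split at $\mathcal{G}=\{\sumall\siz_iX_i\ge n\}$ is too coarse to recover the coefficient $2\epsilon/\sevmin$ on the $\slk$ term of $\alpha_1$. On $\mathcal{G}^c$ the deterministic constraint is only $\fnw<\slk/\sevmin-R$, so the best you can get is $\E[V\mathbf{1}_{\mathcal{G}^c}]\lesssim e^{\theta\slk/\sevmin}\,\E[e^{-\theta R}]$; your claimed bound $\exp(2\epsilon\theta\slk/\sevmin)(1+o(1))$ is inconsistent with your own cutoff ``just above $\slk/\sevmin$'' and is off by a factor $\exp((1-2\epsilon)\theta\slk/\sevmin)$, which is enormous exactly in the light-traffic regime where $\theta\slk=\Theta(\slk^2/(n\Lm))$ is large. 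This constant is not cosmetic: Lemma~\ref{lem:proof:total-server-need-tail} feeds $\sevmax\alpha_1$ into $\alpha_2$, and Lemma~\ref{lem:proof:total-server-need-expectation} and Corollary~\ref{result:queueing-probability} need $\slk-\alpha_2=\Theta(\slk)>0$, which fails once the coefficient of $\slk$ in $\alpha_1$ reaches $1/\sevmin$. The repair is to trigger the negative drift earlier, on $\{\sumall\siz_i(x_i-\xnom_i)\ge\epsilon\slk\}$: $\Lm$-work-conservation still forces $\sumall\siz_iz_i\ge n-(1-\epsilon)\slk$ there (using $\epsilon\slk\le\slk-\Lm$), the deterministic bound on the complement becomes $\fnw<\epsilon\slk/\sevmin-R$, and the remaining $\E[e^{-\theta R}]$ is controlled by the infinite-server coupling with $\theta^2\cmax^2\sevmax\sysvar=O(\theta\cdot\theta n\Lm)\le\epsilon\theta\slk/\sevmin$ after shrinking the constant in $\theta$. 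With those two repairs your argument closes and is a legitimate, self-contained alternative to the paper's invocation of Lemma~\ref{lem:lyapunov-upper-bound}.
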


\begin{restatable}{lemma}{tsntail}\label{lem:proof:total-server-need-tail}
    Consider the multiserver-job system with $n$ servers satisfying $\Lm \leq \Lmratio \slk$ (Assumption~\ref{assump:lm-bound}). Letting $\xnom_i = \frac{\arr_i}{\sev_i}$, under any $\Lm$-work-conserving policy, the total server need has the following tail bound:
    there exists $\alpha_2$ and $\beta_2$ with $\alpha_2 = \frac{\delta}{2} + \Theta(\frac{n\Lm}{\slk}\log n)$ and $\beta_2 = \Theta(\frac{n\Lm}{\slk})$ such that for any $j\ge 0$,
    \begin{equation}\label{eq:proof:sub-hw:claim4-statement}
        \Prob\left(\sumall \siz_i (X_i(\infty) - \xnom_i) \geq \alpha_2 + \beta_2 \cdot j\right) \leq  e^{-j}.
    \end{equation}
\end{restatable}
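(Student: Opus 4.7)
My plan is to deduce the upper tail of the total server need from the already-established upper tail of the normalized workload (Lemma~\ref{lem:proof:workload-tail}) together with a lower-tail concentration inequality from Lemma~\ref{lem:mminf-bounds}(b). The key algebraic identity is
\begin{equation*}
\sumall \siz_i (X_i(\infty)-\xnom_i) \;=\; \sevmax \underbrace{\sumall \frac{\siz_i}{\sev_i}(X_i(\infty)-\xnom_i)}_{=:\,W} \;-\; \underbrace{\sumall c_i \siz_i (X_i(\infty)-\xnom_i)}_{=:\,\yc},
\end{equation*}
where $c_i := \sevmax/\sev_i - 1 \ge 0$. Since $c_i$ is a nonnegative weight vector independent of $n$, Lemma~\ref{lem:mminf-bounds}(b) directly bounds the lower tail of $\yc$, while Lemma~\ref{lem:proof:workload-tail} bounds the upper tail of $W$; an upper tail on the left-hand side then follows by a union bound.

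Concretely, I would first fix a small constant $\epsilon\in(0,\Lmratio)$ with $2\epsilon\sevmax/\sevmin < 1/2$ (always possible since $\sevmin,\sevmax$ are absolute constants); Lemma~\ref{lem:proof:workload-tail} then yields
$\Prob(W \ge \alpha_1+\beta_1 j)\le e^{-j}$
with $\alpha_1 = (2\epsilon/\sevmin)\slk + \Theta(n\Lm\log n/\slk)$ and $\beta_1=\Theta(n\Lm/\slk)$. Next I would apply Lemma~\ref{lem:mminf-bounds}(b) to $\yc$ with the specific choice $\alpha = \bigl(\tfrac12-2\epsilon\sevmax/\sevmin\bigr)\slk$ and $\beta = c_{\max}^2\sevmax\sysvar/\alpha = \Theta(\sysvar/\slk)$; the product $\alpha\beta$ meets the hypothesis of the lemma by design, so $\Prob(\yc\le -\alpha-\beta j)\le e^{-j}$.

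Shifting $j \mapsto j+\ln 2$ in both bounds and combining via union bound produces $\Prob\bigl(\sumall\siz_i(X_i(\infty)-\xnom_i)\ge \alpha_2 + \beta_2 j\bigr)\le e^{-j}$ with
$\alpha_2 = \sevmax\alpha_1 + \alpha + (\sevmax\beta_1+\beta)\ln 2$ and $\beta_2 = \sevmax\beta_1+\beta$. The $\slk$-coefficients in $\sevmax\alpha_1$ and $\alpha$ were chosen so that they add up to exactly $\slk/2$, so $\alpha_2 = \slk/2 + \Theta(n\Lm\log n/\slk)$. For $\beta_2$, the contribution $\Theta(\sysvar/\slk)$ from $\beta$ is absorbed by $\sevmax\beta_1=\Theta(n\Lm/\slk)$ using the elementary bound $\sysvar = O(n\Lm)$ (which follows from $\lambda_i\siz_i/\sev_i\le n$ and $\sev_i\ge\sevmin$), giving the desired $\beta_2=\Theta(n\Lm/\slk)$.

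The main subtlety, and what I would spend the most care on, is this last constant-juggling step: the leading-order $\Theta(\slk)$ contribution that naturally falls out of $\sevmax\alpha_1$ is not immediately absorbed by the $\Theta(n\Lm\log n/\slk)$ secondary term (in fact it could dominate it when $\slk$ is large relative to $\sqrt{n\Lm}/\log n$), so it must be cancelled by making a correspondingly smaller choice of $\epsilon$ and compensating with a positive $\alpha$ in the Lemma~\ref{lem:mminf-bounds}(b) application. Verifying that the constraint $\alpha\beta\ge c_{\max}^2\sevmax\sysvar$ can be satisfied by choosing $\beta=\Theta(\sysvar/\slk)$ without inflating $\beta_2$ beyond $\Theta(n\Lm/\slk)$ is the crucial quantitative check, and it works precisely because Assumption~\ref{assump:lm-bound} forces $\sysvar=O(n\Lm)$.
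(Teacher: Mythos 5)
Your proposal is correct, but it takes a genuinely different route from the paper's. The paper proves this lemma by a second application of the Lyapunov tail bound (Lemma~\ref{lem:lyapunov-upper-bound}), this time to $V_2(\ve{x})=\sumall \siz_i x_i$: it shows that on a high-probability region---the intersection of the event from Lemma~\ref{lem:proof:workload-tail} with a lower-tail event for $\sumall\bigl(\tfrac{\sevmax}{\sev_i}+\tfrac{\sev_i}{\sevmax}-2\bigr)\siz_i(X_i-\xnom_i)$ obtained from Lemma~\ref{lem:mminf-bounds}---the drift of the total server need is at most $-\sevmax\epsilon\slk$ once $V_2$ exceeds $n-\slk+B_2$, and then converts this drift condition into the tail bound. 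You bypass the second drift argument entirely via the identity $\sumall\siz_i(X_i-\xnom_i)=\sevmax W-\yc$ with $c_i=\sevmax/\sev_i-1\ge 0$ (where $W$ is the normalized work), combining the upper tail of $W$ from Lemma~\ref{lem:proof:workload-tail} with the lower tail of $\yc$ from Lemma~\ref{lem:mminf-bounds}(b) by a union bound. The quantitative bookkeeping checks out: choosing $\epsilon<\min\bigl(\Lmratio,\tfrac{\sevmin}{4\sevmax}\bigr)$ makes the two $\slk$-coefficients sum to exactly $\tfrac12$; the constraint $\alpha\beta\ge\cmax^2\sevmax\sysvar$ is met by construction; and $\beta=\Theta(\sysvar/\slk)=O(n\Lm/\slk)$ because $\sysvar\le n\Lm/\sevmin$, so $\alpha_2$ and $\beta_2$ have the stated orders (the degenerate case $\sevmax=\sevmin$, where $\yc\equiv 0$ and $\cmax=0$, is harmless since then $\Prob(\yc\le-\alpha)=0$ for $\alpha>0$). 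There is no circularity, since Lemma~\ref{lem:proof:workload-tail} is established independently. Your derivation is shorter and purely static; the paper's iterated-drift route has the advantage of exhibiting the underlying negative-drift structure of the total server need on the concentration region, which is the template it applies repeatedly, but for the tail bound itself your argument suffices and is arguably cleaner.
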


The proof technique of the two lemmas is using state space concentration successively: Lemma~\ref{lem:proof:workload-tail} relies on the state-space concentration implied by Lemma~\ref{lem:mminf-bounds}, while Lemma~\ref{lem:proof:total-server-need-tail} relies on the state-space concentration implied by Lemma~\ref{lem:mminf-bounds} and Lemma~\ref{lem:proof:workload-tail}. 

As a consequence of the first two lemmas, we can give a bound on the expectation of the total server need $\E[\sumall \siz_i X_i]$, in a different traffic regime than what is assumed in Assumption~\ref{assump:traffic}. This is useful for analyzing the dynamics of subsystems under SNF.
\begin{restatable}[Total server need upper bound under a lighter traffic]{lemma}{tsnexp}\label{lem:proof:total-server-need-expectation}
Consider the multi-server-job system with $n$ servers satisfying $\Lm \leq \Lmratio \slk$ (Assumption~\ref{assump:lm-bound}), and $\slk = \wbrac{\sqrt{n\Lm}\log n}$. Letting $\xnom_i = \frac{\arr_i}{\sev_i}$, under any $\Lm$-work-conserving policy, the expected total server need has the following upper bound:
\begin{equation}\label{eq:proof:tsn-expectation:statement}
    \E\left[\sumall \siz_i (X_i(\infty) - \xnom_i)\right] = \exp\left(-\Omega\left(   \frac{\slk^2}{n\Lm}\right)\right).
\end{equation}
\end{restatable}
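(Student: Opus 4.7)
The plan is to translate the expectation into a tail integral of the total server need $T \triangleq \sumall \siz_i (X_i - \xnom_i)$, and then invoke the exponential tail bound from Lemma~\ref{lem:proof:total-server-need-tail}.

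First, applying $\E[G X_i] = 0$ gives $\E[Z_i] = \xnom_i$, so $\E[T] = \E[\sumall \siz_i Q_i]$, reducing the task to bounding the expected queued server need. I would next invoke the $\Lm$-work-conserving property: whenever the queue is non-empty, $\sumall \siz_i Z_i \geq n - \Lm$, so $T = \sumall \siz_i (Z_i - \xnom_i) + \sumall \siz_i Q_i \geq (\slk - \Lm) + \sumall \siz_i Q_i$. This yields the deterministic bound $\sumall \siz_i Q_i \leq (T - (\slk - \Lm))^+$ (trivial when the queue is empty). Setting $c \triangleq \slk - \Lm$,
\begin{equation*}
    \E\!\left[\sumall \siz_i Q_i\right] \leq \E\!\left[(T - c)^+\right] = \int_c^\infty \Prob(T \geq s)\,ds.
\end{equation*}

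Lemma~\ref{lem:proof:total-server-need-tail}, written in the form $\Prob(T \geq s) \leq e^{-(s-\alpha_2)/\beta_2}$ for $s \geq \alpha_2$, then bounds the integral by $\beta_2 \exp\!\bigl(-(c - \alpha_2)/\beta_2\bigr)$ provided $c > \alpha_2$. Under the assumption $\slk = \wbrac{\sqrt{n\Lm}\log n}$, we have $\beta_2 = \Theta(n\Lm/\slk) = o\bigl(\slk/(\log n)^2\bigr)$ and $\alpha_2 - \slk/2 = \Theta\bigl((n\Lm/\slk)\log n\bigr) = o(\slk/\log n)$. So if $c - \alpha_2 = \Omega(\slk)$, the exponent becomes $\Omega\bigl(\slk^2/(n\Lm)\bigr)$, which dominates the polynomial prefactor $\beta_2$ (in fact, $\slk^2/(n\Lm) = \omega((\log n)^2)$ under the assumption), delivering the desired bound $e^{-\Omega(\slk^2/(n\Lm))}$.

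The hard part will be guaranteeing the gap $c - \alpha_2 = \Omega(\slk)$ across the full range of $\Lmratio < 1$ allowed by Assumption~\ref{assump:lm-bound}. Since $c \geq (1 - \Lmratio)\slk$ while $\alpha_2 \approx \slk/2$ as stated in Lemma~\ref{lem:proof:total-server-need-tail}, the gap is positive only when $\Lmratio < 1/2$. To cover general $\Lmratio < 1$, I expect the proof to revisit the derivation of Lemma~\ref{lem:proof:total-server-need-tail}, exposing a tunable constant (analogous to the parameter $\epsilon \in (0, \Lmratio)$ of Lemma~\ref{lem:proof:workload-tail}) and replacing the $\slk/2$ term in $\alpha_2$ by $\epsilon' \slk$ for $\epsilon' < 1 - \Lmratio$; once $c - \alpha_2 = \Omega(\slk)$, the tail integral computation above goes through and yields \eqref{eq:proof:tsn-expectation:statement}.
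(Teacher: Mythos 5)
Your reduction to a tail integral of $T=\sumall \siz_i (X_i-\xnom_i)$ is sound, and the final bookkeeping (exponent $\Omega(\slk^2/(n\Lm))=\omega((\log n)^2)$ swallowing the $\Thebrac{n\Lm/\slk}$ prefactor) matches the paper. But your route differs from the paper's at exactly the point you flag, and as written the argument does not close. The pathwise bound $\sumall\siz_i Q_i\le (T-(\slk-\Lm))^+$ forces you to integrate the tail of $T$ starting at $c=\slk-\Lm$, which can be as small as $(1-\Lmratio)\slk$; since Lemma~\ref{lem:proof:total-server-need-tail} only controls the tail above $\alpha_2=\slk/2+o(\slk)$, the argument fails whenever $\Lmratio\ge 1/2$. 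Your proposed repair --- re-deriving the tail lemma with a tunable $\epsilon'\slk$ in place of $\slk/2$ --- is feasible (the proof of that lemma indeed carries a free parameter $\epsilon$ and actually yields $\alpha_2 = 2(1+\sevmax/\sevmin)\epsilon\slk+\Thebrac{\tfrac{n\Lm}{\slk}\log n}$, so taking $\epsilon$ small relative to $1-\Lmratio$ works), but it is extra work that must actually be carried out, and the paper avoids it entirely.

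The paper instead decomposes
\begin{equation*}
\E[T]=\E\big[(T-\slk)^+\big]+\slk\,\Prob\Big(\sumall\siz_iX_i\ge n\Big)+\E\Big[T\,\indi_{\{\sumall\siz_iX_i<n\}}\Big],
\end{equation*}
so the tail integral only needs to start at level $\slk$ itself, where the gap $\slk-\alpha_2=\slk/2-o(\slk)=\Omega(\slk)$ holds automatically for every $\Lmratio<1$. The second and third terms are then handled jointly by the zero-drift identity $\E[\sumall(\arr_i\siz_i/\sev_i-\siz_iZ_i)]=0$: on $\{\sumall\siz_iX_i\ge n\}$ work conservation gives $\sumall\siz_iZ_i\ge n-\Lm$, and on the complement $Z_i=X_i$, which yields $\E[T\,\indi_{\{\sumall\siz_iX_i<n\}}]+\slk\,\Prob(\sumall\siz_iX_i\ge n)\le \Lm\,\Prob(T\ge\slk)$ --- again exponentially small by the tail lemma applied at level $\slk$. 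So the two proofs share the tail-integral engine, but the paper's drift-identity step is what removes the $\Lm$ offset from the integration threshold. If you keep your more direct pathwise bound, you must complete the strengthening of Lemma~\ref{lem:proof:total-server-need-tail} that you only sketch.
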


As a quick digression, with Lemma~\ref{lem:proof:total-server-need-tail}, we can prove an upper bound on the queueing probability in Corollary~\ref{result:queueing-probability}.
We comment that this queueing probability bound significantly improves on the bound in \citet{WanXieHar_21_2} in a slightly more constrained traffic regime.
\begin{restatable}[Queueing probability]{corollary}{queueingprob}\label{result:queueing-probability}
    Consider the multiserver-job system with $n$ servers satisfying $\Lm \leq \Lmratio \slk$ (Assumption~\ref{assump:lm-bound}), and $\slk = \wbrac{\sqrt{n\Lm}\log n}$. Then the probability that a job arrival in steady-state experiencing queueing has the following upper bound:
    \begin{equation}\label{eq:qp-bound:statement}
        \Prob\left(\sumall \siz_i X_i(\infty) \geq n\right) = \exp\left(-\Omega\left(   \frac{\slk^2}{n\Lm}\right)\right).
    \end{equation}
\end{restatable}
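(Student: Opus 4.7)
The plan is to derive Corollary~\ref{result:queueing-probability} as a direct consequence of Lemma~\ref{lem:proof:total-server-need-tail}, by a suitable choice of $j$ in the tail bound. Recall that by the definition of the slack capacity, $\sum_{i=1}^{\njt} \siz_i \xnom_i = n - \slk$, so the event of interest can be rewritten as
\begin{equation*}
\left\{\sum_{i=1}^{\njt} \siz_i X_i(\infty) \geq n\right\} = \left\{\sum_{i=1}^{\njt} \siz_i \bigl(X_i(\infty) - \xnom_i\bigr) \geq \slk\right\}.
\end{equation*}
This is exactly the kind of tail event controlled by Lemma~\ref{lem:proof:total-server-need-tail}, so the task reduces to choosing $j$ so that $\alpha_2 + \beta_2 j = \slk$ and verifying that this $j$ is $\Omega(\slk^2/(n\Lm))$.

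First I would solve for $j$, obtaining
\begin{equation*}
j = \frac{\slk - \alpha_2}{\beta_2} = \frac{\slk/2 - \Theta\bigl(\tfrac{n\Lm}{\slk}\log n\bigr)}{\Theta\bigl(\tfrac{n\Lm}{\slk}\bigr)},
\end{equation*}
using the explicit expressions $\alpha_2 = \slk/2 + \Theta((n\Lm/\slk)\log n)$ and $\beta_2 = \Theta(n\Lm/\slk)$ from Lemma~\ref{lem:proof:total-server-need-tail}. Next I would use the hypothesis $\slk = \wbrac{\sqrt{n\Lm}\log n}$, which gives $\slk^2/(n\Lm) = \wbrac{(\log n)^2}$. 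This implies $\tfrac{n\Lm}{\slk}\log n = \obrac{\slk/\log n} = \obrac{\slk}$, so the numerator of $j$ is $(1-o(1))\slk/2$ and therefore
\begin{equation*}
j = \Theta\!\left(\frac{\slk}{n\Lm/\slk}\right) = \Theta\!\left(\frac{\slk^2}{n\Lm}\right).
\end{equation*}
In particular $j \geq 0$ for $n$ sufficiently large, so Lemma~\ref{lem:proof:total-server-need-tail} applies and yields
\begin{equation*}
\Prob\!\left(\sum_{i=1}^{\njt} \siz_i X_i(\infty) \geq n\right) \leq e^{-j} = \exp\!\left(-\Omega\!\left(\frac{\slk^2}{n\Lm}\right)\right),
\end{equation*}
which is the desired bound.

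The proof is essentially mechanical once Lemma~\ref{lem:proof:total-server-need-tail} is in hand; there is no real obstacle beyond the asymptotic bookkeeping. The only point requiring a moment of care is verifying that the extra $\Theta((n\Lm/\slk)\log n)$ term in $\alpha_2$ is negligible compared with $\slk$, which is exactly what the assumption $\slk = \wbrac{\sqrt{n\Lm}\log n}$ is designed to ensure.
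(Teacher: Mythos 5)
Your proposal is correct and follows essentially the same route as the paper: rewrite the queueing event as $\{\sum_{i}\siz_i(X_i(\infty)-\xnom_i)\geq\slk\}$, set $j=(\slk-\alpha_2)/\beta_2$ in Lemma~\ref{lem:proof:total-server-need-tail}, and use $\slk=\wbrac{\sqrt{n\Lm}\log n}$ to check that $\alpha_2=\slk/2+\obrac{\slk}$ so that $j=\Wbrac{\slk^2/(n\Lm)}$. No gaps.
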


Now we are ready to prove Theorem~\ref{result:waiting-time-priority}.
\begin{proof}[Proof of Theorem~\ref{result:waiting-time-priority}.]
Recall that by Little's Law, we have that $\E\left[ \wait \right] = \frac{1}{\arr}\sum_{i=1}^\njt \E[Q_i]$, and thus it suffices to bound $\E[Q_i]$'s.
We bound $\E[Q_i]$ separately for each $i\in [\njt]$ as follows:
\begin{equation}\label{eq:priority:individual-bdd-by-server-need}
    \E\left[Q_i\right] \leq \frac{1}{\siz_i} \E \left[\sum_{j=1}^i \siz_j Q_j \right] \leq \frac{1}{\siz_i} \E \left[\sum_{j=1}^i \siz_j \left(X_j - \xnom_j\right) \right],
\end{equation}
where we have used the fact that $Q_i$'s are non-negative, $Q_j = X_j - Z_j$ and $\E[Z_j] = \frac{\arr_j}{\sev_j} = \xnom_j$. Observe that under SNF, the dynamics of the first $i$ types of jobs are unaffected by the rest of the jobs. Therefore, the expectation
$\E \left[\sum_{j=1}^i \siz_j \left(X_j - \xnom_j\right) \right]$ can be viewed as the expected total server need in the $i$-th subsystem, which has slack capacity $\slk_i$, maximal server need $\siz_i$, and work variability $\sysvar_i$. We discuss the bound on $\E[Q_i]$ in three cases based on different relationships of $\slk_i$, $\siz_i$ and $\sysvar_i$.

\noindent \textbf{Case 1: $\slk_i = \obrac{\frac{\sqrt{\sysvar_i}}{\log n}}$}. 
Applying Lemma~\ref{result:workload-upper} to the $i$-th subsystem, we have
\begin{align*}
    \E \left[\sum_{j=1}^i \siz_j \left(X_j - \xnom_j\right) \right] 
    &\leq \sevmax \E \left[\sum_{j=1}^i \frac{\siz_j}{\sev_j} \left(X_j - \xnom_j\right) \right] \frac{\sevmax \sysvar_i}{\slk_i-\siz_i} \cdot (1+o(1)).
\end{align*}
Therefore, $\E[Q_i] \leq  \frac{\sevmax\sysvar_i}{\siz_i (\slk_i - \siz_i)} \cdot \left(1+o(1)\right)$.

\noindent \textbf{Case 2: $\slk_i = \Wbrac{\frac{\sqrt{\sysvar_i}}{\log n}}$ and $\siz_i = \wbrac{\frac{\slk \log n}{\sqrt{\sysvar}} \Lm}$}.
Applying Lemma~\ref{result:workload-upper} to the $i$-th subsystem, we have
\begin{align*}
    \E \left[\sum_{j=1}^i \siz_j \left(X_j - \xnom_j\right) \right] \leq \sevmax \E \left[\sum_{j=1}^i \frac{\siz_j}{\sev_j} \left(X_j - \xnom_j\right) \right]=\Obrac{\sqrt{\sysvar_i}\log n}.
\end{align*}
Combining the above equation with \eqref{eq:priority:individual-bdd-by-server-need} and the facts that $\siz_i = \wbrac{\frac{\slk \log n}{\sqrt{\sysvar}} \Lm}$ and $\sysvar_i \leq \sysvar$, we have
$
    \E[Q_i] \leq \Obrac{\frac{\sqrt{\sysvar_i}\log n}{\siz_i}} =\obrac{\frac{\sysvar}{\slk}\cdot\frac{1}{\Lm}}
$.

\noindent \textbf{Case 3: $\slk_i = \Wbrac{\frac{\sqrt{\sysvar_i}}{\log n}}$ and $\siz_i = \Obrac{\frac{\slk \log n}{\sqrt{\sysvar}} \Lm}$}. We first show that $\slk_i= \wbrac{\sqrt{n\siz_i}\log n}$. To see this, observe that $\slk_i = \Wbrac{\frac{\sqrt{\sysvar_i}}{\log n}}$ implies $i < \njt$, so $\slk_i \geq \frac{\arr_{\njt}\siz_{\njt}}{\sev_{\njt}} = n \load_{\njt}$. By Assumption~\ref{assump:commonness}, we have
\begin{equation}
    \slk_i \geq n \load_{\njt} = \wbrac{\sqrt{\frac{\slk \log n}{\sqrt{\sysvar} }\cdot n\Lm}\log n} = \wbrac{\sqrt{n\siz_i}\log n}.
\end{equation}

Therefore, we can apply Lemma~\ref{lem:proof:total-server-need-expectation} to the $i$-th subsystem to get:
\begin{align*}
    \E \left[\sum_{j=1}^i \siz_j \left(X_j - \xnom_j\right) \right] = \exp\left(-\Wbrac{\frac{\slk_i^2}{n\siz_i}}\right)\leq \exp\left(-\Wbrac{(\log n)^2}\right),
\end{align*}
where we have used $\slk_i= \wbrac{\sqrt{n\siz_i}\log n}$ in the last inequality. Therefore,
$
    \E[Q_i] \leq \exp\left(-\Wbrac{(\log n)^2}\right)
$, which decays faster than any polynomial in $n$.

Combining the three cases, we get
\begin{align*}
    \E\left[ \wait \right] = \frac{1}{\arr}\sum_{i=1}^\njt \E[Q_i] \leq \frac{1}{\arr}\sum_{i=\indexmain}^\njt \frac{\sevmax\sysvar_i}{\siz_i (\slk_i - \siz_i)} \cdot \left(1+o(1)\right),
\end{align*}
where we have used the fact that $\E\left[ \wait \right] =  \frac{1}{\arr}\sumall \E[Q_i]$. Here the summation is taken from $\indexmain$ to $\njt$ because Case~1 is equivalent to $\indexmain \leq i\leq \njt$. The second inequality is because the other two terms are of lower order compared with $\frac{\sevmax\sysvar_{\njt}}{\siz_{\njt}(\slk_{\njt}-\siz_{\njt})}$. 

It remains to show that  $\frac{1}{\arr}\sum_{i=\indexmain}^\njt \frac{\sevmax\sysvar_i}{\siz_i (\slk_i - \siz_i)} = \Thebrac{\max_{\indexmain \leq i \leq \njt}\frac{1}{\arr} \frac{\sysvar_i}{\siz_i \slk_i}}$. This is because $\njt$ and $\sevmax$ are independent of $n$ during scaling, and $\siz_i \leq \Lm \leq \Lmratio \slk \leq \Lmratio \slk_i$ for some $\Lmratio < 1$. This finishes the proof.

\end{proof}

\section{Simulation results}\label{sec:simulation}
\begin{figure}
    \centering
    \subfigure[Mean waiting time under the Parameter Set One. There are three job types, with service rates and server needs given by $\sev_1 = 0.25$, $\siz_1 = 1$; $\sev_2 = 0.5$, $\siz_2 = \lfloor\log_2 n\rfloor$; $\sev_3 =1 $, $\siz_3 =\lfloor \sqrt{n}\rfloor$; slack capacity $\slk = 2\lfloor \sqrt{n}\rfloor$; the arrival rates are chosen so that the loads satisfy $\load_1 = \load_2 = \load_3 = \frac{n-\slk}{3n}$.]{\includegraphics[width=0.45\textwidth]{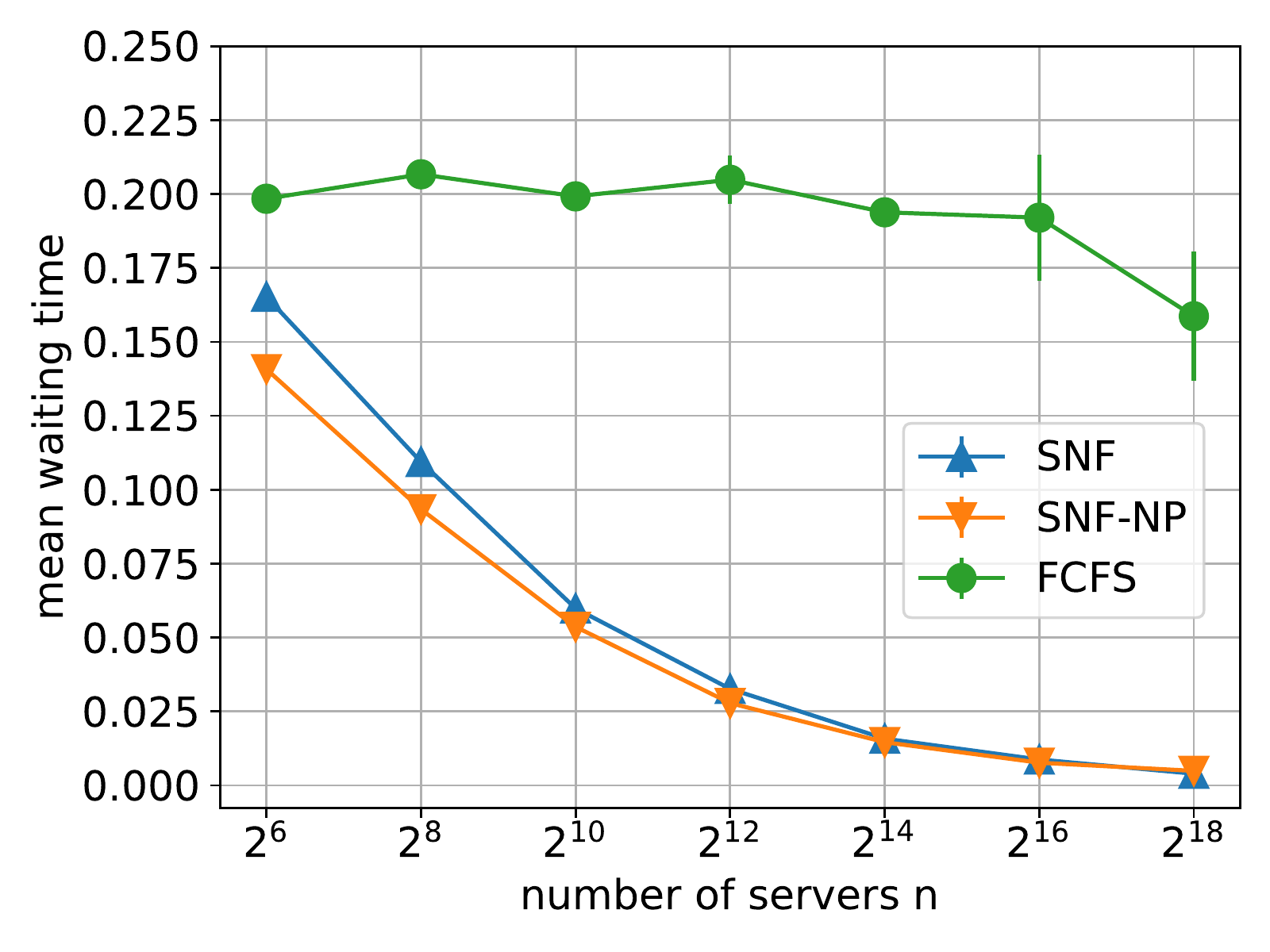}}
   \hspace{0.5cm}
    \subfigure[Mean waiting time under the Parameter Set Two. There are three job types, with the same service rates, server needs and slack capacity as the Parameter Set One. The arrival rates are chosen so that the loads satisfy $\load_1 = \load_2 = \frac{n - \slk - n^{0.7}}{2n}$, $\load_3 = n^{-0.3}$.]{\includegraphics[width=0.45\textwidth]{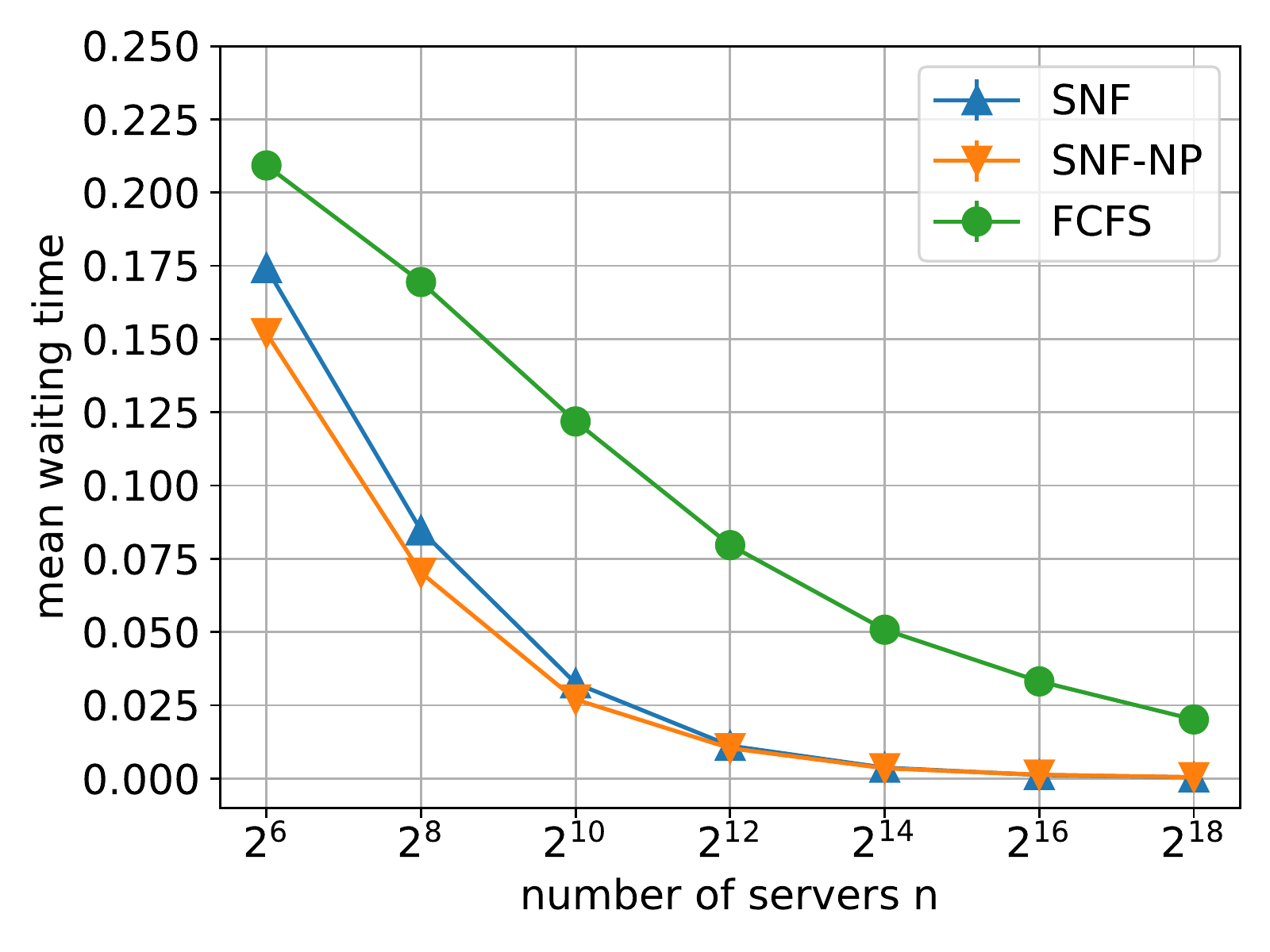}}
    \caption{The mean waiting times of FCFS, SNF, and SNF-NP under two sets of parameters. }
    \label{fig:mwt-exp}
\end{figure}

In this section, we perform simulation experiments to demonstrate the mean waiting time under FCFS, SNF and SNF-NP, where SNF-NP is the non-preemptive variant of SNF that serves a job with the smallest server need in the queue only when enough number of servers free up. 

We run the simulation experiment under two sets of parameters. The Parameter Set One satisfies all three assumptions, while the Parameter Set Two does not satisfy Assumption~\ref{assump:commonness}. The parameters are specified in the caption of Figure~\ref{fig:mwt-exp}.

We plot the mean waiting time against the number of servers $n$ under the three policies, as shown in Figure~\ref{fig:mwt-exp}. The parameter $n$ takes value in $\{2^6, 2^8, 2^{10}, 2^{12}, 2^{14}, 2^{16}, 2^{18}\}$. For each data point, we run a long enough trajectory to estimate the mean and the confidence interval. The confidence interval is estimated through batch means method \citep[see][]{AsmGly_07}, which divides the trajectory into $20$ batches, and calculates the variance of the means of each batch. It turns out that the confidence intervals of most data points in the plots are too small to be visible.

We can make the following observations from the experiments on both parameter sets. First, there is a large performance gap between FCFS and SNF for systems with finite number of servers $n$, which complements our asymptotic results. Note that although in the Parameter Set Two, the absolute different between FCFS and SNF seems to close up as $n$ gets large, their ratio is always greater than $3$ when $n \geq 2^{10}$, and it gets to as large as $40.0$ under Parameter Set One, and $54.4$ under Parameter Set Two. Second, SNF-NP performs comparably with SNF, and sometimes even performs slightly better.

\section{Conclusion and future work}\label{sec:conclusion}
In this paper, we have established order-wise sharp bounds on the mean waiting times of multiserver jobs under the FCFS policy and the SNF policy. We have also proved a lower bound of mean waiting time applicable to any policy. Those bounds imply the optimality of SNF and the strict sub-optimality of FCFS. Apart from the theoretical analysis, we have also demonstrated through simulations the performance improvement of SNF compared with FCFS in finite systems, and the fact that SNF-NP, which is the non-preemptive variant of SNF, has comparable performance with SNF.

There are several interesting directions for future work: (i) Derive a tighter bound on the mean waiting time under SNF when the commonness assumption is violated. (ii) Analyze the performance of SNF-NP. (iii) Relax the maximal server need assumption (Assumption~\ref{assump:lm-bound}), which may require new policy designs to ensure stability while keeping the mean waiting time small.

\bibliographystyle{informs2014}
\bibliography{refs-yige-v230411}

\appendix

\newpage
\section{Discussion on settings without Assumption~\ref{assump:commonness} (commonness assumption)}\label{sec:discussion}
Recall that Theorem~\ref{result:waiting-time-priority} is based on Assumption~\ref{assump:commonness} (commonness assumption). More generally, we have a bound on the mean waiting time of SNF without this assumption, although in that case the order-wise optimality of SNF is not guaranteed. We first define the necessary notation and then state the general bound. We omit the proof of the general bound since it is almost identical to the proof of Theorem~\ref{result:waiting-time-priority}.

Define $
\indexone \triangleq \min\left\{i\in[\njt] \;\middle|\; \slk_i = \Obrac{\sqrt{n\siz_i}\log n}\right\}$, and
$\indextwo \triangleq \indexmain = \min\left\{i\in[\njt]\;\middle|\; \slk_i=\obrac{\sqrt{\sysvar_i}/\log n}\right\}
$.
Because of Assumption~\ref{assump:traffic}, the two sets are non-empty so $\indexone$ and $\indextwo$ are well-defined. Because by definition $n\siz_i \geq \sevmin \sysvar_i$ and $\slk_i$'s are monotonically decreasing, we have $\indexone \leq \indextwo$. Moreover, $\sysvar_i$ and $\siz_i$ are monotonically increasing, so for any $\indextwo \leq i \leq \njt$, $\slk_i = \obrac{\sqrt{\sysvar_i}/\log n}$;
for any $\indexone \leq i < \indextwo$, $\slk_i = \Obrac{\sqrt{n\siz_i}\log n}\cap \Wbrac{\sqrt{\sysvar_i}/\log n}$; 
for any $1\leq i < \indexone$, $\slk_i = \wbrac{\sqrt{n\siz_i}\log n}$. 

\begin{theorem}[Waiting times of SNF without Assumption~\ref{assump:commonness}]
    Consider the multiserver-job system with $n$ servers satisfying Assumption~\ref{assump:traffic} and \ref{assump:lm-bound}. Under SNF, for each $i\in[\njt]$, the expected waiting time for type $i$ jobs and the mean waiting time satisfy:\\
    \begin{equation}\label{eq:discuss:individual-waiting-time-priority-case1}
        \E\big[\wait_i(\infty)\big]^\PP \leq \frac{1}{\arr_i} \frac{\sevmax\sysvar_i}{\siz_i (\slk_i - \siz_i)} \cdot \left(1+o(1)\right)\quad \text{ for $i\geq\indextwo$};
    \end{equation}
    \begin{equation}\label{eq:discuss:individual-waiting-time-priority-case2}
        \E\big[\wait_i(\infty)\big]^\PP \leq \Obrac{ \frac{1}{\arr_i} \frac{\sqrt{\sysvar_i}\log n}{\siz_i}}\quad \text{ for $\indexone\leq i<\indextwo$};
    \end{equation}
    \begin{equation}\label{eq:discuss:individual-waiting-time-priority-case3}
        \E\big[\wait_i(\infty)\big]^\PP \leq \exp\left(-\Wbrac{\frac{\slk_i^2}{n\siz_i}}\right)\quad \text{ for $1\leq i<\indexone$}.
    \end{equation}
    Thus, the mean waiting time satisfies
    \begin{equation}\label{eq:discuss:mean-waiting-time-priority}
        \mwt^\PP \leq \frac{1}{\arr} \sum_{i=\indextwo}^\njt \frac{\sevmax\sysvar_i}{\siz_i (\slk_i - \siz_i)} \cdot \left(1+o(1)\right) + \sum_{k=\indexone}^{\indextwo-1} \Obrac{ \frac{1}{\arr} \frac{\sqrt{\sysvar_k}\log n}{\siz_k}}.
    \end{equation}
\end{theorem}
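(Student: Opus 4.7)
The plan is to mirror the three-case analysis of Theorem~\ref{result:waiting-time-priority}, now generalizing the regime split so that it is governed by the indices $\indexone$ and $\indextwo$ rather than by $\indexmain$ alone. I would begin from Little's law $\mwt = \frac{1}{\arr}\sum_{i=1}^{\njt} \E[Q_i(\infty)]$ and, for each $i$, apply the elementary bound
\[
\E[Q_i] \leq \frac{1}{\siz_i}\,\E\!\left[\sum_{j=1}^i \siz_j\,(X_j - \xnom_j)\right],
\]
which follows from $Q_j \geq 0$ together with $\E[Z_j] = \xnom_j$. The key structural feature of SNF is that the dynamics of types $1,\ldots,i$ are unaffected by the lower-priority types $j > i$; hence the right-hand side equals the expected total server need in the $i$-th subsystem operated under SNF, and this restricted policy is $\siz_i$-work-conserving within that subsystem.

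I would then choose the bound to apply to each subsystem according to which slack-capacity regime it falls into. For $i \geq \indextwo$, we have $\slk_i = \obrac{\sqrt{\sysvar_i}/\log n}$, so the heavy-traffic branch of Lemma~\ref{result:workload-upper} supplies a workload bound of order $\sysvar_i/(\slk_i - \siz_i)\cdot(1+o(1))$; using the identity $\E[\sum_j \siz_j(X_j - \xnom_j)] = \sum_j \siz_j \E[Q_j] \leq \sevmax \sum_j (\siz_j/\sev_j)\E[Q_j]$ then yields \eqref{eq:discuss:individual-waiting-time-priority-case1}. For $\indexone \leq i < \indextwo$, the lighter-traffic branch of Lemma~\ref{result:workload-upper} gives $\E[\sum_j (\siz_j/\sev_j)(X_j - \xnom_j)] = \Obrac{\sqrt{\sysvar_i}\log n}$, and the same conversion produces \eqref{eq:discuss:individual-waiting-time-priority-case2}. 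For $1 \leq i < \indexone$, the defining condition $\slk_i = \wbrac{\sqrt{n\siz_i}\log n}$ is precisely the hypothesis of Lemma~\ref{lem:proof:total-server-need-expectation} applied to the $i$-th subsystem (whose maximal server need is $\siz_i$ and whose slack is $\slk_i \geq \slk \geq \Lm/\Lmratio \geq \siz_i/\Lmratio$), which immediately yields the super-polynomially small bound \eqref{eq:discuss:individual-waiting-time-priority-case3}. Summing these per-type bounds via Little's law and absorbing the negligible contributions from $i < \indexone$ into the lower-order terms produces \eqref{eq:discuss:mean-waiting-time-priority}.

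The main obstacle is less a mathematical difficulty than a matter of careful bookkeeping: one must verify that SNF when restricted to types $1,\ldots,i$ is indeed $\siz_i$-work-conserving uniformly in $i$, and that the asymptotic regimes $\obrac{\cdot}$, $\Wbrac{\cdot}$, and $\wbrac{\cdot}$ partition $[\njt]$ cleanly at the transitions $i = \indexone-1 \to \indexone$ and $i = \indextwo-1 \to \indextwo$. The $\siz_i$-work-conserving property follows from the priority structure of SNF, since whenever at least $\siz_i$ servers are idle any waiting type-$\leq i$ job fits and SNF must put it into service. The regime-boundary issue is resolved by the monotonicity of $\slk_i$ (nonincreasing) and of $\sysvar_i, \siz_i$ (nondecreasing) in $i$, which ensures each index falls cleanly into exactly one of the three regimes and that the hypotheses of the invoked lemmas hold uniformly.
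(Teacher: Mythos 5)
Your proposal is correct and follows essentially the same route as the paper: the paper explicitly omits the proof because it is the three-case argument from the proof of Theorem~\ref{result:waiting-time-priority} (the bound $\E[Q_i]\leq \frac{1}{\siz_i}\E[\sum_{j\leq i}\siz_j(X_j-\xnom_j)]$, the subsystem decomposition enabled by SNF's priority structure, Lemma~\ref{result:workload-upper} in its two branches for $i\geq\indextwo$ and $\indexone\leq i<\indextwo$, and Lemma~\ref{lem:proof:total-server-need-expectation} for $i<\indexone$), with the only change being that the case split is governed directly by $\indexone$ and $\indextwo$ rather than by Assumption~\ref{assump:commonness}. Your verification of the $\siz_i$-work-conserving property of the restricted system and of the monotonicity that makes the three regimes partition $[\njt]$ cleanly matches the paper's reasoning.
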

The proof of this theorem is almost identical to the proof of  Theorem~\ref{result:waiting-time-priority}. In fact, we have already derived the bounds in \eqref{eq:discuss:individual-waiting-time-priority-case1}, \eqref{eq:discuss:individual-waiting-time-priority-case2} and \eqref{eq:discuss:individual-waiting-time-priority-case3} in the proof of Theorem~\ref{result:waiting-time-priority} in Section~\ref{sec:proof:priority} without using Assumption~\ref{assump:commonness}.

Note that when we have the commonness assumption, the second term on the right of \eqref{eq:discuss:mean-waiting-time-priority} has a strictly smaller order than the first term (see the proof of Theorem~\ref{result:waiting-time-priority} in Section~\ref{sec:proof:priority}), so the upper bound orderwise matches the lower bound. Without the commonness assumption, it is unclear whether SNF achieves the optimal order. We leave a tighter bound on the mean waiting time under SNF for future work.

\section{Conditions for applying drift method}\label{sec:app:drift-cond}
\subsection{Drift method in general markov chain}\label{sec:app:drift-cond-general}
In this subsection, we consider a continuous-time Markov chain $\{\ve{S}(t)\}_{t\geq 0}$ on a countable state space $\mathcal{S}$ with generator $G$. Let $r_{ss'}$ denote the transition rate from state $\ve{s}$ to $\ve{s}'$, and let $r(\ve{s}) = -r_{ss}\triangleq\sum_{\ve{s}'\in\mathcal{S},\ve{s}'\neq \ve{s}} r_{ss'}.$
Then for any function $f\colon \mathcal{S}\to\mathbb{R}$, the drift of $f$ is equal to:
\begin{equation}
G f(\ve{s}) = \sum_{\ve{s}'\in \mathcal{S}} r_{ss'} \left(f(\ve{s}') - f(\ve{s})\right).
\end{equation}
We assume that $\{\ve{S}(t)\}_{t\geq 0}$ has a stationary distribution and let $\ve{S}(\infty)$ denote a random element that follows the stationary distribution.

We give a justification of the frequently used condition:
\begin{equation}
    \E[G f(\ve{S}(\infty))] = 0. \label{eq:bar}
\end{equation}
Lemma~\ref{lem:bar-master-condition} below, which is a restatement of the Proposition~3 of \citet{GlyZee_08}, gives rigorous conditions under which the relation in \eqref{eq:bar} holds.

\begin{lemma}\label{lem:bar-master-condition}
     Consider the Markov chain $\{\ve{S}(t)\}_{t\geq 0}$ and a function $f\colon\mathcal{S}\to\mathbb{R}$.
     If for all $\ve{s}\in\mathcal{S}$,
     \begin{equation*}          \E[r(\ve{S}(\infty))\cdot\abs{f(\ve{S}(\infty))}] < \infty,
     \end{equation*}
     then the relation $\E[G f(\ve{S}(\infty))] = 0$ holds.
\end{lemma}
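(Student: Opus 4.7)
The plan is to apply Dynkin's formula together with stationarity, following the approach of \citet{GlyZee_08}. Introduce the process
\[
M_t = f(\ve{S}(t)) - f(\ve{S}(0)) - \int_0^t Gf(\ve{S}(u))\, du.
\]
For a regular countable-state continuous-time Markov chain, standard theory shows that $\{M_t\}_{t\geq 0}$ is a local martingale with respect to the natural filtration of $\ve{S}$. Under the stationary distribution of $\ve{S}(\infty)$, the random variables $f(\ve{S}(t))$ and $f(\ve{S}(0))$ are identically distributed, so $\E[f(\ve{S}(t)) - f(\ve{S}(0))] = 0$. Once we justify both Fubini's theorem on the integral term and the identity $\E[M_t] = 0$, rearranging gives $t \cdot \E[Gf(\ve{S}(\infty))] = 0$ for every $t > 0$, which yields the claim.

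The first step I would carry out is to verify that the hypothesis $\E[r(\ve{S}(\infty))\abs{f(\ve{S}(\infty))}] < \infty$ suffices for Fubini. The key tool is the flow-balance identity satisfied by any stationary distribution $\pi$: for any nonnegative $g\colon \mathcal{S}\to\R_+$,
\[
\E\!\left[\sum_{\ve{s}'\neq \ve{S}(\infty)} r_{\ve{S}(\infty),\ve{s}'}\, g(\ve{s}')\right] = \sum_{\ve{s}'} g(\ve{s}')\, \pi(\ve{s}')\, r(\ve{s}') = \E\bigl[r(\ve{S}(\infty))\, g(\ve{S}(\infty))\bigr],
\]
which follows from the global balance equations $\sum_{\ve{s}\neq \ve{s}'}\pi(\ve{s}) r_{\ve{s}\ve{s}'} = \pi(\ve{s}') r(\ve{s}')$. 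Applying this with $g = \abs{f}$ and combining with the pointwise bound $\abs{Gf(\ve{s})} \leq \sum_{\ve{s}'\neq \ve{s}} r_{\ve{s}\ve{s}'}\abs{f(\ve{s}')} + r(\ve{s})\abs{f(\ve{s})}$ yields
\[
\E\bigl[\abs{Gf(\ve{S}(\infty))}\bigr] \leq 2\,\E\bigl[r(\ve{S}(\infty))\abs{f(\ve{S}(\infty))}\bigr] < \infty,
\]
so Fubini can be applied to the integral and gives $\int_0^t \E[Gf(\ve{S}(u))]\,du = t \cdot \E[Gf(\ve{S}(\infty))]$.

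The main obstacle will be showing $\E[M_t] = 0$, i.e.\ upgrading $M_t$ from a local to a true martingale. I would handle this by a standard truncation at the successive jump times $\tau_1 < \tau_2 < \cdots$ of $\ve{S}$: the stopped process $\{M_{t\wedge \tau_N}\}_{t\geq 0}$ is a true martingale and hence has mean zero, and its total variation along jumps is dominated by $\sum_{k:\,\tau_k \leq t}\abs{f(\ve{S}(\tau_k)) - f(\ve{S}(\tau_k^-))}$ plus the integrated drift $\int_0^t \abs{Gf(\ve{S}(u))}\,du$. By an application of Campbell's formula to the jump component together with the flow-balance computation above, the expectation of this envelope is bounded by a constant times $t\cdot\E[r(\ve{S}(\infty))\abs{f(\ve{S}(\infty))}]$, uniformly in $N$. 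Dominated convergence as $N\to\infty$ then yields $\E[M_t] = 0$, completing the argument. This entire chain of steps is exactly the content of Proposition~3 of \citet{GlyZee_08}, which we cite rather than reproduce in full.
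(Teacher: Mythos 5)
The paper does not actually prove this lemma: it is imported verbatim as Proposition~3 of \citet{GlyZee_08}, with no argument given. Your proposal also ultimately defers to that citation, but the sketch you wrap around it is worth commenting on. The flow-balance computation in your ``first step'' is already essentially a complete and self-contained proof: writing $G f(\ve{s}) = \sum_{\ve{s}'\neq \ve{s}} r_{\ve{s}\ve{s}'} f(\ve{s}') - r(\ve{s}) f(\ve{s})$, the hypothesis $\E[r(\ve{S}(\infty))\abs{f(\ve{S}(\infty))}]<\infty$ together with Tonelli and the global balance equations $\sum_{\ve{s}\neq \ve{s}'}\pi(\ve{s})r_{\ve{s}\ve{s}'}=\pi(\ve{s}')r(\ve{s}')$ shows that both sums in $\E[Gf(\ve{S}(\infty))]$ converge absolutely and are equal, so their difference is zero. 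No martingale is needed. (Equivalently, this is the statement that $f$ is integrable for the stationary distribution of the embedded jump chain, which is proportional to $\pi(\ve{s})r(\ve{s})$.)

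The Dynkin/local-martingale route you then pursue is the standard alternative, but as written it has a genuine gap: to conclude $\E[f(\ve{S}(t))-f(\ve{S}(0))]=0$ from equidistribution you need $\E[\abs{f(\ve{S}(\infty))}]<\infty$, and the stated hypothesis controls only $\E[r(\ve{S}(\infty))\abs{f(\ve{S}(\infty))}]$, which does not dominate $\E[\abs{f(\ve{S}(\infty))}]$ unless $r$ is bounded away from zero. (In the paper's application $r(\ve{s})\geq \arr>0$, so this is harmless there, but the lemma is stated for a general chain.) Since the direct balance argument avoids this issue entirely and is shorter, I would drop the martingale machinery and keep only the first step, or simply cite \citet{GlyZee_08} as the paper does.
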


Lemma~\ref{lem:bar-master-condition} implies that if
\begin{gather*}
\sup_{\ve{s}\in\mathcal{S}} r(\ve{s}) < \infty,\\
\E[\abs{f(\ve{S}(\infty)}] < \infty,
\end{gather*}
then the relation $\E[G f(\ve{S}(\infty))] = 0$ holds.

Lemma~\ref{lem:hajek} below, which is a continuous-time analogue of a result in \citet{haj_82}, provides some functions $f$ that satisfy $\E[\abs{f(\ve{S}(\infty))}] < \infty$.
\begin{lemma}\label{lem:hajek}
    Consider the Markov chain $\{\ve{S}(t)\}_{t\geq 0}$ and let $V:\mathcal{S}\to\R_+$ be a Lyapunov function. Assume that
    \begin{align*}
    &\sup_{\ve{s}\in\mathcal{S}} r(\ve{s}) < \infty,\\
    &\sup_{\ve{s},\ve{s}'\in\mathcal{S}: r_{ss'} > 0} \abs{V(\ve{s}') - V(\ve{s})} < \infty,
    \end{align*}
    and that there exists $\gamma>0$ and $B\geq 0$, such that when $V(\ve{s}) \geq B$,
    \begin{equation*}
        G V(\ve{s}) \leq -\gamma.
    \end{equation*}
    Then for any positive integer $m$, 
    \begin{equation*}
        \E[V(\ve{S}(\infty))^m] < \infty.
    \end{equation*}
\end{lemma}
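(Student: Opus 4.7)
The plan is to reduce the continuous-time claim to the classical discrete-time theorem of \citet{haj_82} via the standard uniformization trick: I will convert $\{\ve{S}(t)\}$ into a discrete-time chain that shares the same stationary distribution on $\mathcal{S}$, verify the discrete-time analogues of the hypotheses, invoke Hajek's theorem to obtain an exponential-moment bound, and finally deduce finiteness of all polynomial moments.

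Concretely, I would set $\bar{r} = \sup_{\ve{s}\in\mathcal{S}} r(\ve{s})$, which is finite by the first hypothesis, and define the uniformized discrete-time chain $\{\tilde{\ve{S}}_k\}_{k\geq 0}$ with transition kernel $\tilde{P} = I + G/\bar{r}$, i.e., $\tilde{p}_{\ve{s}\ve{s}'} = r_{\ve{s}\ve{s}'}/\bar{r}$ for $\ve{s}' \neq \ve{s}$ and $\tilde{p}_{\ve{s}\ve{s}} = 1 - r(\ve{s})/\bar{r}$. A direct computation from this identity gives $\pi\tilde{P} = \pi \iff \pi G = 0$, so the two chains share the same stationary distribution and $\tilde{\ve{S}}_\infty \stackrel{d}{=} \ve{S}(\infty)$. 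The one-step drift of $V$ under $\tilde{P}$ then satisfies
\[
\E\!\left[V(\tilde{\ve{S}}_{k+1}) - V(\tilde{\ve{S}}_k)\,\big|\,\tilde{\ve{S}}_k = \ve{s}\right] = \frac{1}{\bar{r}}\, G V(\ve{s}),
\]
which is at most $-\gamma/\bar{r} < 0$ whenever $V(\ve{s}) \geq B$, and the per-step jump size of $V$ is bounded by $D := \sup_{\ve{s},\ve{s}': r_{\ve{s}\ve{s}'} > 0}|V(\ve{s}') - V(\ve{s})|$, finite by the second hypothesis (self-loops with $\ve{s}' = \ve{s}$ contribute zero to the jump, so they are harmless).

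With the bounded-jump and negative-drift hypotheses verified, I would quote the discrete-time drift lemma of \citet{haj_82} to obtain constants $\eta > 0$ and $C < \infty$ such that $\E[e^{\eta V(\tilde{\ve{S}}_\infty)}] \leq C$. Combining this with the elementary inequality $x^m \leq (m!/\eta^m)\, e^{\eta x}$ for $x \geq 0$ then yields
\[
\E[V(\ve{S}(\infty))^m] = \E[V(\tilde{\ve{S}}_\infty)^m] \leq \frac{m!\, C}{\eta^m} < \infty
\]
for every positive integer $m$, which is the claim. The only genuinely delicate step will be justifying the uniformization reduction carefully, namely the equivalence of stationary distributions via the identity $\tilde{P} = I + G/\bar{r}$ together with the precise statement of Hajek's theorem being invoked; once this bookkeeping is in place, the moment bound is essentially immediate.
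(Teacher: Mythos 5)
Your proposal is correct and is essentially the argument the paper intends: the paper gives no proof of this lemma, describing it only as ``a continuous-time analogue of a result in \citet{haj_82},'' and the uniformization reduction $\tilde{P}=I+G/\bar{r}$ followed by the discrete-time drift theorem is exactly the standard way to make that analogue precise. The one point you should spell out when you write the ``bookkeeping'' is the passage from Hajek's uniform-in-$k$ bound on $\E[e^{\eta V(\tilde{\ve{S}}_k)}\mid \tilde{\ve{S}}_0=\ve{s}_0]$ to the stationary expectation $\E[e^{\eta V(\tilde{\ve{S}}_\infty)}]$, since starting the chain in stationarity begs the question if that moment is a priori infinite; the standard fix is to start from a fixed state and apply Fatou's lemma along the Ces\`aro (or distributional) convergence to $\pi$, or to run the argument with the truncations $\min(V,N)$ and let $N\to\infty$.
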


\subsection{Multiserver-job system}\label{sec:app:drift-cond-msj}
We return to the multiserver-job system. For $f\colon\R^\njt \to \R$, we want to find conditions under which $E[G f(\ve{X},\ve{Z})] = 0$. Recall that
\begin{equation}
    G f(\ve{x}, \ve{z}) = \sumall \arr_i \left(f(\ve{x}+ \ve{e}_i) - f(\ve{x})\right)+ \sumall \sev_i z_i \left(f(\ve{x}-\ve{e}_i) - f(\ve{x})\right),
\end{equation}
where $\ve{x}, \ve{z}$ are possible realizations of $\ve{X}, \ve{Z}$, $\ve{e}_i \in \R^\njt$ is the vector whose $i$th entry is $1$ and all the other entries are $0$. Therefore, the transition rate of any multiserver-job system is bounded by $\sevmax n$. By Lemma~\ref{lem:bar-master-condition}, it suffices to have $\E[\abs{f(\ve{X})}] < \infty$.

This lemma implies that for any test function $f$ with polynomial increasing rate, the relation $\E[Gf(\ve{X}, \ve{Z})] = 0$ holds.
\begin{lemma}\label{lem:polynomial-zero-drift}
    Consider the multiserver-job system with $n$ servers under a  $\wst$-work-conserving policy with $\wst \leq \Lmratio \slk$, where $\Lmratio \in (0,1)$ is the parameter used in Assumption~\ref{assump:lm-bound}. Then for any positive integer $m$, we have
    \begin{equation}
        \E\left[\left(\sumall \frac{\siz_i}{\sev_i}X_i(\infty)\right)^m\right] < \infty.
    \end{equation}
\end{lemma}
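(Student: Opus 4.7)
The plan is to apply Lemma~\ref{lem:hajek} with the Lyapunov function $V(\ve{x}) = \sumall \frac{\siz_i}{\sev_i} x_i$, which is precisely the quantity whose $m$-th moment we wish to bound. Before computing the drift, I would verify the two preliminary hypotheses of the lemma. Any transition corresponds to an arrival or to a departure of some type-$i$ job, and changes $V$ by exactly $\siz_i/\sev_i$, which is at most $\Lm/\sevmin$ (a constant for fixed $n$), so the bounded increment condition holds. For the bounded transition rate condition, observe that the departure rate from a state $\ve{s}$ with in-service vector $\ve{z}$ is $\sumall \sev_i z_i$; since the constraint $\sumall \siz_i z_i \leq n$ forces $\sumall z_i \leq n/\siz_1$, the total rate out of any state is at most $\arr + \sevmax\, n/\siz_1 < \infty$.

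Next I would compute the drift using the formula \eqref{eq:proof:drift-form}. Since $V$ is linear in $\ve{x}$, all second-order terms vanish and a direct calculation gives
\begin{equation*}
GV(\ve{x},\ve{z}) \;=\; \sumall \frac{\siz_i}{\sev_i}\bigl(\arr_i - \sev_i z_i\bigr) \;=\; (n-\slk) - \sumall \siz_i z_i.
\end{equation*}
The $\wst$-work-conserving property asserts that $\sumall \siz_i z_i \geq \min\bigl(\sumall \siz_i x_i,\, n-\wst\bigr)$, so whenever $\sumall \siz_i x_i \geq n - \wst$ we obtain
\begin{equation*}
GV(\ve{x},\ve{z}) \;\leq\; \wst - \slk \;\leq\; -(1-\Lmratio)\slk,
\end{equation*}
a strictly negative constant thanks to the assumption $\wst \leq \Lmratio \slk$ with $\Lmratio < 1$.

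It remains only to convert the precondition on the total server need $\sumall \siz_i x_i$ into one on $V(\ve{x})$. Since $\siz_i/\sev_i \leq \siz_i/\sevmin$, we have $V(\ve{x}) \leq \frac{1}{\sevmin}\sumall \siz_i x_i$, so choosing $B = (n-\wst)/\sevmin$ guarantees that $V(\ve{x}) \geq B$ implies $\sumall \siz_i x_i \geq n-\wst$ and hence the negative drift bound above. The hypotheses of Lemma~\ref{lem:hajek} are therefore met with $\gamma = (1-\Lmratio)\slk$ and this $B$, and the lemma gives $\E[V(\ve{X}(\infty))^m]<\infty$ for every positive integer $m$, which is exactly the assertion. (Positive recurrence, needed to make sense of $\ve{X}(\infty)$, is itself a consequence of the negative drift outside a bounded set via Foster's criterion.)

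The main obstacle is essentially bookkeeping: one must be a little careful that the $\wst$-work-conserving inequality applies in the correct regime (large total server need), and that the drift condition is verified in terms of $V(\ve{x})$ rather than $\sumall \siz_i x_i$ directly; the translation via $\sevmin$ is where the two scales meet. Once that is done, the rest is a direct application of the general Hajek-type moment bound.
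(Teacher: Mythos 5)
Your proposal is correct and matches the paper's own argument essentially step for step: both apply the Hajek-type moment bound (Lemma~\ref{lem:hajek}) to the linear Lyapunov function $V(\ve{x})=\sumall \frac{\siz_i}{\sev_i}x_i$, compute $GV(\ve{x},\ve{z})=(n-\slk)-\sumall\siz_i z_i$, invoke the $\wst$-work-conserving property to get a uniformly negative drift once the total server need is large, and translate that threshold into a condition on $V$ via $\sumall\siz_i x_i\geq\sevmin V(\ve{x})$. Your explicit verification of the bounded transition rates and the remark on positive recurrence are consistent with what the paper establishes in its surrounding discussion, so there is nothing to add.
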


\begin{proof}[Proof of Lemma~\ref{lem:polynomial-zero-drift}.]
    We prove this lemma by applying Lemma~\ref{lem:hajek}. Let $V(\ve{x}) = \sumall \frac{\siz_i}{\sev_i} x_i$.  We first show that the drift $GV(\ve{x},\ve{z})$ is negative when $V(\ve{x})$ is larger than a threshold. Observe that this drift can be bounded as
    \begin{align*}
        G V(\ve{x},\ve{z}) &= \sumall \left(\frac{\arr_i\siz_i}{\sev_i} - \siz_i z_i\right)\\
        &\leq \sumall \frac{\arr_i\siz_i}{\sev_i} - \min\left(\sumall \siz_i x_i, n-\wst\right)\\
        &\leq -\min\left(\sumall \siz_i x_i - n + \slk, \slk - \wst\right),
    \end{align*}
    where the second inequality is due to the use of $\wst$-work-conserving policy. For any state such that $V(\ve{x}) > \frac{n}{\sevmin} + 1$ with $\sevmin = \min_{i\in[\njt]} \sev_i$, we have 
    \[
        \sumall \siz_i x_i \geq \sevmin \sumall \frac{\siz_i}{\sev_i} x_i = \sevmin V(\ve{x}) > n,
    \]
    so $G V(\ve{x},\ve{z}) < -\slk + \wst \leq -\Lmratio \slk < 0$. Moreover, any transition in the system at most changes $V(\ve{x})$ by $\max_i \frac{\siz_i}{\sev_i}$, which is also a finite number. Applying Lemma~\ref{lem:hajek}, we get $\E\left[\left(\sumall \frac{\siz_i}{\sev_i} X_i\right)^m\right] < \infty$ for any positive integer $m$.
\end{proof}

\section{State-space concentration by coupling with infinite server systems}\label{sec:mminf-proof}
In this section, we prove Lemma~\ref{lem:mminf-bounds}, the state-space concentration result that we have stated in Section~\ref{sec:workload-bounds}. 

We start by defining a sequence of infinite-server systems. Consider a sequence of systems where each system has an infinite number of servers.  For the $n$th system in the sequence, we let it have the same job characteristics as the original $n$-server system; i.e., there are $\njt$ job types and job type $i\in[\njt]$ has arrival rate $\arr_i$, server need $\siz_i$, and service rate $\sev_i$.  We refer to this system as the \emph{$n$th infinite-server system}. The $n$th infinite-server system serves as a lower bound system for the original $n$-server system through the coupling defined below.

Let the $n$th infinite-server system have the same arrival sequence as the original $n$-server system; i.e., whenever the $n$-server system has a job arrival, let the infinite-server system have a job arrival with the same server need and service time.  Since every job in the infinite-server system enters service immediately upon arrival, the job leaves the infinite-server system no later than the corresponding job does in the original system. Therefore, the infinite-server system always has no more jobs than the $n$-server system.  Specifically, let $\sysinf{X}_i(t)$ be the number of type~$i$ jobs in the infinite-server system at time $t$.  Recall that $X_i(t)$ denotes the number of type~$i$ jobs in the original $n$-server system.  Then by the construction of the coupling, $\sysinf{X}_i(t) \leq X_i(t)$ almost surely for all $t\geq 0$ and $i\in [\njt]$.  Therefore, in steady-state, $\sysinf{X}_i(\infty)$ is stochastically smaller or equal to $X_i(\infty)$, i.e., $\Prob(\sysinf{X}_i(\infty)\geq x) \leq \Prob(X_i(\infty) \geq x)$ for any $x\in \R$. We denote this relationship by $\sysinf{X}_i(\infty) \leq_{st} X_i(\infty)$.

Next, we restate Lemma~\ref{lem:mminf-bounds} and give it a proof.

\mminf*

\begin{proof}[Proof of Lemma~\ref{lem:mminf-bounds}.]
Consider the infinite-server system defined at the beginning of this section. Based on the coupling argument, we have $\sysinf{X}_i(\infty) \leq_{st} X_i(\infty)$ for any $i\in[\njt]$. We define
\[
\ycinf =\sumall c_i \siz_i\left(\sysinf{X}_i(\infty)-\frac{\arr_i}{\sev_i}\right).
\]
Because $c_i \siz_i\geq 0$, we have 
\begin{equation}\label{eq:app:proof:mminf-yc-coupling}
   \ycinf \leq_{st} \yc.
\end{equation}
Therefore, we can first prove the various lower bounds for $\ycinf$, and then apply \eqref{eq:app:proof:mminf-yc-coupling}. 

We first prove the bound in (a). Since $\sysinf{X}_i(\infty)$ is the number of jobs in steady-state in a M/M/$\infty$ system with arrival rate $\arr_i$ and service rate $\sev_i$, $\sysinf{X}_i(\infty)$ follows Poisson$\left(\frac{\arr_i}{\sev_i}\right)$ distribution by classic results \citep[see, e.g.,][p.249]{BolGreMeeTri_06}. By the independence of $\sysinf{X}_i(\infty)$'s, we have 
\begin{equation*}
    \E[e^{-t \ycinf}] =
    \exp\left( \sumall \frac{\arr_i}{\sev_i} (e^{-c_i \siz_i t} - 1 + c_i \siz_i t)\right).   
\end{equation*}

We first prove the bound in (a). By Markov inequality, for any $t\ge 0$,
\begin{align*}
    \Prob\left(\ycinf \leq - K \right) &= \Prob\left( e^{-t \ycinf} \geq e^{t K}\right)\leq \frac{\E[e^{-t\ycinf}]}{e^{t K}}.
\end{align*}
To upper-bound $\E[e^{-t\ycinf}]$, observe that $e^{-s} \leq 1-s +\frac{s^2}{2}$ for any $s \geq 0$. Then for any $t \geq 0$,
\begin{align*}
\E[e^{-t \ycinf}] &=
    \exp\left( \sumall \frac{\arr_i}{\sev_i} \left(e^{-c_i \siz_i t} - 1 + c_i \siz_i t\right)\right) 
    \le \exp\left( \frac{1}{2} \sumall \frac{\arr_i}{\sev_i} \left(c_i \siz_i t\right)^2\right)\\
    &= \exp\left(\frac{1}{2}\sumall c_i^2\sev_i\cdot \frac{\arr_i\siz_i^2}{\sev_i^2}t^2\right) 
    \leq \exp\left(\frac{1}{2}\cmax^2 \sevmax \sysvar t^2\right),
\end{align*}
where we have used the facts that $c_i^2\le \cmax^2$ and $\sev_i \leq \sevmax$ for any $i\in[\njt]$. Therefore,
\begin{align*}
    \Prob\left(\ycinf \leq - K \right)
    &\leq \exp\left(\frac{1}{2} \cmax^2 \sevmax \sysvar t^2 - K t\right).
\end{align*}
Take $t = \frac{K}{\cmax^2 \sevmax \sysvar}$, we get
\begin{equation}
    \Prob\left(\yc \leq - K \right) \leq \Prob\left(\ycinf \leq - K \right) \leq \exp\left(-\frac{K^2}{2\cmax^2 \sevmax \sysvar}\right),
\end{equation}
where we have used $\ycinf \leq_{st} \yc$ in the first inequality. This proves (a).

Next we prove the bound in (b). Let $\alpha$ and $\beta$ be any two non-negative numbers such that $\alpha\beta \geq \cmax^2\sevmax \sysvar$ and let $j\ge 0$. By the bound in (a), setting $K = \alpha + \beta j$ gives
\begin{align*}
    \Prob\left(\ycinf \leq - \alpha - \beta j \right)
    \leq \exp\left(-\frac{\alpha^2 + 2\alpha\beta j + \beta^2 j^2}{2\cmax^2 \sevmax \sysvar}\right)
    \leq \exp\left(-\frac{\alpha\beta j}{\cmax^2\sevmax \sysvar}\right)
    \le e^{-j}.
\end{align*}
Because $\ycinf \leq_{st} \yc$, 
\[
    \Prob\left(\yc \leq - \alpha - \beta j \right) \leq e^{-j}.
\]
This proves (b).

To prove the bound on $\E\left[\left(\ycinf\right)^-\right]$ in (c), observe that
\begin{equation}
    \E\left[\left(\ycinf\right)^-\right] \leq \E\left[\abs{\ycinf}\right] \leq \sqrt{\E\left[\left(\ycinf\right)^2\right]},
\end{equation}
\begin{align*}
    \E\left[\left(\ycinf\right)^2\right] &= \sumall c_i^2 \siz_i^2\E\left[\left(\sysinf{X}_i(\infty)-\frac{\arr_i}{\sev_i}\right)^2\right]
    =\sumall c_i^2 \siz_i^2 \frac{\arr_i}{\sev_i}
    \leq \cmax^2 \sevmax \sysvar,
\end{align*}
where we have used the fact that $\sysinf{X}_i(\infty)$ follows Poisson$\left(\frac{\arr_i}{\sev_i}\right)$ distribution. Because $s\mapsto s^-$ is a non-increasing function for real number $s$, we have 
\begin{equation}
    \E\left[\yc^-\right]\leq \E\left[\left(\ycinf\right)^-\right] \leq \sqrt{\cmax^2 \sevmax \sysvar}.
\end{equation}
This finishes the proof of (c).
\end{proof}

\section{Proofs of the lemmas for Theorem~\ref{result:waiting-time-priority} (mean waiting time under SNF)}\label{sec:app:priority}
\subsection{A tail bound in general markov chains based on Lyapunov condition}\label{sec:app:tail-bounds-general}
In this subsection, we prove a tail bound in general markov chains based on a certain Lyapunov drift condition. Consider a continuous-time Markov chain $\{\ve{S}(t)\}_{t\geq 0}$ on a countable state space $\mathcal{S}$ with generator $G$. Let $r_{ss'}$ denote the transition rate from state $\ve{s}$ to $\ve{s}'$, and let $r(\ve{s}) = -r_{ss}\triangleq \sum_{\ve{s}'\in\mathcal{S},\ve{s}'\neq \ve{s}} r_{ss'}.$
For any function $f\colon \mathcal{S}\to\mathbb{R}$, the drift of $f$ is equal to
\begin{equation}
G f(\ve{s}) = \sum_{\ve{s}'\in \mathcal{S}} r_{ss'} \left(f(\ve{s}') - f(\ve{s})\right).
\end{equation}
We assume that $\{\ve{S}(t)\}_{t\geq 0}$ has a stationary distribution and let $\ve{S}(\infty)$ denote a random element that follows the stationary distribution.

For a Lyapunov function $V: \mathcal{S}\rightarrow \R_+$, Lemma~\ref{lem:lyapunov-upper-bound} below gives a tail bound on $V(\ve{S}(\infty))$ based on drift analysis. It is a straightforward consequence of Lemma~A.1 in \citet{WenWan_21_2}, which slightly generalizes the well-known Lyapunov-based tail bounds \citep[see, e.g.,][]{BerGamTsi_01,WanMagSri_17_2}.

\begin{lemma}[Lyapunov Upper Bound]\label{lem:lyapunov-upper-bound}
Consider the Markov chain $\{\ve{S}(t)\}_{t\geq 0}$ and
let $V: \mathcal{S}\rightarrow \R_+$ be a Lyapunov function such that $\E[V(\ve{S}(\infty))] < \infty$. Assume that
\begin{align*}
&\sup_{\ve{s}\in\mathcal{S}} r(\ve{s}) < \infty,\\
&\vmx \triangleq \sup_{\ve{s},\ve{s}'\in\mathcal{S}: r_{ss'} > 0} \abs{V(\ve{s}') - V(\ve{s})} < \infty,\\
&\fmx \triangleq \sup_{\ve{s}} \sum_{\ve{s}':V(\ve{s}')>V(\ve{s})} r_{ss'}\left(V(\ve{s}') - V(\ve{s})\right) <\infty.
\end{align*}
If there exists $B\geq 0$, $\gamma > 0,$ $\xi > 0$ and a subset $\mathcal{E}\subseteq \mathcal{S}$ such that for all $\ve{s}$ with $V(\ve{s}) > B+\vmx$, 
\begin{equation*}
    G V(\ve{s}) \leq 
    \begin{cases}
    - \gamma, \quad &\forall \ve{s}\in \mathcal{E},\\
    \xi, \quad &\forall \ve{s}\in \mathcal{S}\backslash\mathcal{E},
    \end{cases}
\end{equation*}
then for all $j\geq 0$,
\begin{equation}\label{eq:lyapunov-upper:conclusion}
    \Prob\left(V(\ve{S}(\infty)) \ge B +2\vmx\cdot\left(\left(\frac{ \fmx}{\gamma}+2\right)j+1\right)\right) 
    \leq e^{-j} + \left(\frac{\xi}{\gamma}+1\right)\Prob(\ve{S}(\infty) \in \mathcal{E}^c).
\end{equation}
\end{lemma}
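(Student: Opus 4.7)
The plan is to follow the standard exponential-Lyapunov template for geometric tail bounds (in the spirit of Hajek's inequality and \citet{BerGamTsi_01}), adapting it so that the bifurcated drift condition---strongly negative drift $-\gamma$ only on the good set $\mathcal{E}$, but merely upper-bounded by $\xi$ on $\mathcal{E}^c$---is respected. Since the statement is declared to be a straightforward consequence of Lemma~A.1 in \citet{WenWan_21_2}, my aim is to mimic that argument and verify that the $\mathcal{E}^c$ region contributes only an additive perturbation of the form $(\xi/\gamma+1)\,\Prob(\ve{S}(\infty)\in\mathcal{E}^c)$ rather than degrading the geometric decay rate.

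\textbf{Key steps.} First, I would introduce the test function $h(\ve{s}) = \exp\!\bigl(\theta\,(V(\ve{s})-B-\vmx)^+\bigr)$ for a parameter $\theta>0$ to be tuned. Using $e^u-1\le u+\tfrac{u^2}{2}e^{|u|}$ together with the jump bound $\vmx$ and the upward-drift budget $\fmx$, one can derive a pointwise drift estimate of the shape
\[
Gh(\ve{s}) \;\le\; \bigl(\theta\, GV(\ve{s}) + \theta^2\,\vmx\,\fmx\, e^{\theta\vmx}\bigr)\,h(\ve{s})
\]
for every state with $V(\ve{s})>B+\vmx$, plus a uniformly bounded remainder on the boundary band $V(\ve{s})\le B+\vmx$ (where $h\equiv 1$). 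Substituting the two cases of the drift hypothesis on $GV$ and choosing $\theta$ of order $\gamma/(\vmx\fmx)$ so that the quadratic correction is absorbed into $\theta\gamma/2$ reduces the drift of $h$ to $\le -\tfrac{\theta\gamma}{2}h(\ve{s})$ on $\mathcal{E}$ and to $\le(\theta\xi+\tfrac{\theta\gamma}{2})h(\ve{s})$ on $\mathcal{E}^c$.

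Next, a truncation-plus-monotone-convergence argument using the hypothesis $\E[V(\ve{S}(\infty))]<\infty$ will ensure $\E[h(\ve{S}(\infty))]<\infty$, so Lemma~\ref{lem:bar-master-condition} yields $\E[Gh(\ve{S}(\infty))]=0$. Splitting this expectation over the three regions $\mathcal{E}\cap\{V>B+\vmx\}$, $\mathcal{E}^c\cap\{V>B+\vmx\}$, and $\{V\le B+\vmx\}$ and rearranging produces an expectation bound of the form
\[
\E[h(\ve{S}(\infty))] \;\le\; 1 + \bigl(\tfrac{\xi}{\gamma}+1\bigr)\,\Prob(\ve{S}(\infty)\in\mathcal{E}^c).
\]
Finally, Markov's inequality applied to $h$ at the threshold $\exp\!\bigl(\theta\cdot 2\vmx((\fmx/\gamma+2)j+1)\bigr)$ converts this expectation bound into exactly the tail bound~\eqref{eq:lyapunov-upper:conclusion}.

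\textbf{Main obstacle.} The delicate part is the constant-bookkeeping: $\theta$ must be picked small enough that $\theta^2\vmx\fmx e^{\theta\vmx}$ is dominated by $\theta\gamma/2$, yet the resulting Markov threshold must reproduce the \emph{exact} coefficient $\fmx/\gamma+2$ in front of $j$ and the \emph{exact} coefficient $\xi/\gamma+1$ in front of $\Prob(\mathcal{E}^c)$ claimed in the statement. It will also be crucial to verify that the positive-drift contribution from $\mathcal{E}^c$ enters only multiplicatively through $\Prob(\ve{S}(\infty)\in\mathcal{E}^c)$---not in the exponential rate itself---which is possible because the uniform upper bound $\xi$ on $GV|_{\mathcal{E}^c}$ lets us control the $h$-weighted contribution over $\mathcal{E}^c$ by its unweighted counterpart once $\theta$ has been fixed.
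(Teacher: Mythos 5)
Your proposal diverges from the paper's proof, which is essentially a one-line citation: the paper invokes Lemma~1 of \citet{BerGamTsi_01} (as packaged in Lemma~A.1 of \citet{WenWan_21_2}) to get the geometric tail bound $\Prob(V(\ve{S}(\infty)) > B + 2\vmx j') \le (\fmx/(\fmx+\gamma))^{j'} + (\xi/\gamma+1)\Prob(\ve{S}(\infty)\in\mathcal{E}^c)$ and then merely converts the geometric base to $e^{-j}$ by setting $j' = \lceil(\fmx/\gamma+1)j\rceil$. That underlying result is proved by a level-crossing recursion using \emph{bounded} (truncated piecewise-linear) test functions $\min\{(V-a)^+, 2\vmx\}$ at each level $a$, which is precisely what makes the bad-set contribution appear as an \emph{unweighted} multiple of $\Prob(\mathcal{E}^c)$ at every level; summing the geometric recursion then yields the coefficient $\xi/\gamma+1$.

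Your exponential-test-function route has a genuine gap at exactly the point you flag as needing verification. When you split $\E[Gh(\ve{S}(\infty))]=0$ over the three regions, the contribution from $\mathcal{E}^c\cap\{V>B+\vmx\}$ is bounded only by a constant times $\E\bigl[h(\ve{S}(\infty))\,\indi_{\{\ve{S}(\infty)\in\mathcal{E}^c,\,V>B+\vmx\}}\bigr]$, an $h$-weighted quantity. Since $h=\exp(\theta(V-B-\vmx)^+)$ is unbounded and the hypotheses place no restriction on how large $V$ can be on $\mathcal{E}^c$, this term cannot be reduced to $C\cdot\Prob(\ve{S}(\infty)\in\mathcal{E}^c)$; the uniform bound $GV\le\xi$ on $\mathcal{E}^c$ controls the drift, not the weight $h$ itself. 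Moving the term to the other side of the inequality leaves you with a lower bound on $\E[h\,\indi_{\mathcal{E}}]$ minus an uncontrolled positive quantity, so the claimed intermediate estimate $\E[h(\ve{S}(\infty))]\le 1+(\xi/\gamma+1)\Prob(\ve{S}(\infty)\in\mathcal{E}^c)$ does not follow. (A secondary issue is the constant bookkeeping: with $\theta\lesssim\gamma/(2\vmx\fmx e^{\theta\vmx})$ forced by absorbing the quadratic correction, the exponent at the Markov threshold $2\vmx((\fmx/\gamma+2)j+1)$ only reaches $j$ itself in the regime $\theta\vmx\ll 1$, which is not guaranteed.) The fix is to abandon the global exponential moment and use the truncated test functions of the Bertsimas--Gamarnik--Tsitsiklis recursion, as the paper does.
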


\begin{proof}[Proof of Lemma~\ref{lem:lyapunov-upper-bound}.]
According to Lemma~1 of \citet{BerGamTsi_01}, for any integer $j'\geq 0$,
\begin{equation*}
    \begin{aligned}
        \Prob\left(V(S(\infty)) > B +  2\vmx\cdot j'\right) &\leq \left(\frac{\fmx}{\fmx+\gamma}\right)^{j'} +\left(\frac{\xi}{\gamma}+1\right)\Prob(S(\infty) \in \mathcal{E}^c). \\
        &\leq \exp\left(-\frac{\gamma j'}{\fmx + \gamma}\right) + \left(\frac{\xi}{\gamma}+1\right)\Prob(S(\infty) \in \mathcal{E}^c)
    \end{aligned}
\end{equation*}
For any $j \geq 0$, we set $j' = \lceil( \frac{\fmx}{\gamma} +1)j \rceil$, which is an integer less than or equal to the value $\left(\left(\frac{ \fmx}{\gamma}+2\right)j+1\right)$.
Then \eqref{eq:lyapunov-upper:conclusion} follows.
\end{proof}

\subsection{Tail bounds in the multiserver-job system}\label{sec:app:tail-bounds-msj}
With the general tool Lemma~\ref{lem:lyapunov-upper-bound} ready, we can now start proving Lemma~\ref{lem:proof:workload-tail}, Lemma~\ref{lem:proof:total-server-need-tail} and Lemma~\ref{lem:proof:total-server-need-expectation}, which are needed for proving Theorem~\ref{result:waiting-time-priority}. As a by-product, we will also prove the bound on queueing probability stated in Corollary~\ref{result:queueing-probability}.

\workloadtail*

\tsntail*


\begin{proof}[Proof of Lemma~\ref{lem:proof:workload-tail}.]
The idea of the proof is applying Lemma~\ref{lem:lyapunov-upper-bound} to the Lyapunov function $V_1(\ve{x}) = \sumall \frac{\siz_i }{\sev_i} x_i$, where $\ve{x} \in \R^\njt$ denotes possible realizations of $\ve{X}$. To check the conditions of Lemma~\ref{lem:lyapunov-upper-bound}, first recall that the drift of $V_1(\ve{x})$ is
\begin{equation}
    \begin{aligned}
      G V_1(\ve{x}, \ve{z}) = \sumall \arr_i \left(V_1(\ve{x}+\ve{e}_i) - V_1(\ve{x})\right) 
     + \sumall  \sev_i z_i \left(V_1(\ve{x}-\ve{e}_i) - V_1(\ve{x})\right).
    \end{aligned}\label{eq:proof:gv1-full}\\
\end{equation}
From the formula of the drift, we can immediately see that the two parameters $\vmx$ and $\fmx$ in Lemma~\ref{lem:lyapunov-upper-bound} are $\vmx = \max_{i \in [\njt]} \frac{\siz_i}{\sev_i}$ and $\fmx=\sumall \frac{\arr_i\siz_i}{\sev_i}$. The two parameters satisfy $\vmx \leq \frac{\Lm}{\sevmin}$ and $\fmx = n-\slk \leq n$.

We spend the rest of the proof showing that when $V_1(\ve{x})$ is larger than a threshold,
\begin{equation*}
    \sumall \frac{\siz_i}{\sev_i}x_i = V_1(\ve{x}) > \sumall \frac{\siz_i}{\sev_i}\xnom_i + B_1,
\end{equation*}
for some $B_1 \geq 0$, there exists $\gamma, \xi > 0$, such that with high probability we have $G V_1(\ve{x}, \ve{z})<-\gamma$ and in the worst case we have $G V_1(\ve{x},\ve{z}) < \xi$. We prove this by finding a suitable state-space concentration. To see what we need, we write out the form of $G V_1$ explicitly:
\begin{align}
    G V_1(\ve{x},\ve{z}) &= \sum_{i=1}^\njt \left(\frac{\arr_i\siz_i}{\sev_i} - \siz_i z_i\right) \label{eq:proof:gv1:temp0} \\
    &\leq \sum_{i=1}^\njt \frac{\arr_i\siz_i}{\sev_i} - \min\left(\sum_{i=1}^\njt \siz_i x_i, n - \Lm\right) \nonumber \\
    &= -\min\left(\sumall \siz_i (x_i-\xnom_i), \slk - \Lm\right) \nonumber\\
    &\leq -\min\left(\sumall \siz_i (x_i-\xnom_i), \Lmratio\slk\right), \label{eq:proof:gv1:temp}
\end{align}
where the first inequality is because of the $\Lm$-work-conserving condition, and the last inequality used the fact that $\Lm \leq \Lmratio\slk$. The last expression suggests that we need a high probability lower bound on $\sumall \siz_i X_i$. 

We apply Lemma~\ref{lem:mminf-bounds} to construct a subset $\mathcal{E}_{3}(j_1) \subseteq \R^\njt$ such that $\ve{X} \in \mathcal{E}_{3}(j_1)$ with high probability, where $j_1$ is a non-negative number. We set $c_i = \frac{1}{\sevmin}- \frac{1}{\sev_i}$, $\alpha = \frac{\epsilon \slk}{\sevmin}$, $\beta = \frac{\sevmax n\Lm}{\sevmin^2 \epsilon\slk}$, $j = 2\log n + j_1$ in Lemma~\ref{lem:mminf-bounds}. It can be verified that $\alpha\beta \geq \frac{\sevmax n \Lm}{\sevmin^3} \geq \cmax^2 \sevmax \sysvar$. Then we have
\begin{equation}
    \Prob\left(\sumall\left(\tfrac{1}{\sevmin} - \tfrac{1}{\sev_i} \right)\siz_i(X_i - \xnom_i) \leq -\alpha - \beta j_1 \right) \leq e^{-j_1}.
\end{equation}
Letting $\alpha_{3} = \alpha + 2 \beta \log n$, $\beta_{3} = \beta$, the above inequality can also be written as
\begin{equation}
    \Prob\left(\sumall\left(\tfrac{1}{\sevmin} - \tfrac{1}{\sev_i} \right)\siz_i(X_i - \xnom_i) \leq -\alpha_{3} - \beta_{3} j_1  \right) \leq \frac{1}{n^2} e^{-j_1}.
\end{equation}
Note that this choice of $\alpha_{3}$ and $\beta_{3}$ satisfies $\beta_{3} = \Theta\left(\frac{n\Lm}{ \slk}\right)$ and $\alpha_{3} = \frac{\epsilon \slk}{\sevmin} + \Theta\left(\frac{n\Lm}{ \slk}\log(n)\right)$. If we define the set $\mathcal{E}_{3}(j_1)$ as
\[
\mathcal{E}_{3}(j_1) = \left\{\ve{x}\in\R^\njt \;\middle|\; \sumall \left(\tfrac{1}{\sevmin} - \tfrac{1}{\sev_i} \right)\siz_i(x_i - \xnom_i) > -\alpha_{3} - \beta_{3} j_1 \right\},
\]
then the above inequality implies that $\Prob\left(\mathcal{E}_{3}(j_1)\right) \geq 1-\frac{1}{n^2}e^{-j_1}$. 

Next, we divide the state space into two parts based on $\mathcal{E}_{3}(j_1)$ and bound $G V(\ve{x},\ve{z})$ on each part separately. When $V_1(\ve{x}) > \sumall \frac{\siz_i}{\sev_i}\xnom_i + B_1$ and $\ve{x}\in\mathcal{E}_{3}(j_1)$, we have
\begin{equation*}
    \sumall \frac{\siz_i}{\sev_i}(x_i - \xnom_i) > B_1,
\end{equation*}
\begin{equation*}
\sumall \left(\tfrac{1}{\sevmin} - \tfrac{1}{\sev_i} \right)\siz_i(x_i - \xnom_i) > -\alpha_{3} - \beta_{3} j_1.
\end{equation*}
Adding up the above two inequality and choosing $B_1 = \alpha_{3} + \beta_{3} j_1+\frac{\epsilon \slk}{\sevmin}$ yield
\begin{equation}
    \sum_{i=1}^\njt \siz_i(x_i - \xnom_i) \geq \epsilon\slk.
\end{equation} 
By (\ref{eq:proof:gv1:temp}) and the fact that $\epsilon < \Lmratio$, we conclude that $G V(\ve{x},\ve{z}) \leq - \epsilon \slk$.

On the other hand, when $V_1(\ve{x}) > \sumall \frac{\siz_i}{\sev_i}\xnom_i + B_1$ and $\ve{x}\notin \mathcal{E}_{3}(j_1)$, it follows from \eqref{eq:proof:gv1:temp0} that $GV(\ve{x},\ve{z}) \leq n$. 

Now we can apply Lemma~\ref{lem:lyapunov-upper-bound} with $B=\sumall \frac{\siz_i}{\sev_i}\xnom_i + B_1$, $\gamma = -\epsilon \slk$, $\xi = n$, $\vmx = \max_{i\in[\njt]}\frac{\siz_i}{\sev_i}$, $\fmx = \sumall \frac{\arr_i \siz_i}{\sev_i}$ and $\mathcal{E} = \mathcal{E}_{3}(j_1)$ to get
\begin{equation}
    \begin{aligned}
        &\mspace{23mu}\Prob\Bigg(\sumall \frac{\siz_i}{\sev_i} (X_i - \xnom_i) \geq \alpha_{3}
        +  \beta_{3} j_1 + \frac{\epsilon\slk}{\sevmin}+ 2\vmx\left(\left(\frac{\fmx}{\epsilon\slk}+2\right)j_2+1\right)\Bigg) \\
        &\leq e^{-j_2} + \left(\frac{n}{\epsilon\slk}+1\right)\frac{1}{n^2}e^{-j_1}. 
    \end{aligned}
\end{equation}
which holds for any $j_1, j_2 \geq 0$. This inequality can be further simplified to
\begin{equation}
    \begin{aligned}
        &\mspace{23mu}\Prob\Bigg(\sumall \frac{\siz_i}{\sev_i} (X_i - \xnom_i) \geq \alpha_{3}
        +  \beta_{3}j_1 + \frac{\epsilon\slk}{\sevmin} + 2\frac{\Lm}{\sevmin}\left(\frac{(2\Lmratio+1)n}{\epsilon\slk}j_2+1\right)\Bigg)\leq e^{-j_2} + e^{-j_1}. 
    \end{aligned}
\end{equation}
where we have used the facts that $\vmx \leq  \frac{\Lm}{\sevmin}$, $\fmx \leq n$, $n>\slk$, $\epsilon < \Lmratio$, and $\left(\frac{n}{\epsilon\slk}+1\right)\frac{1}{n^2} \leq 1$ which holds for $n$ large enough. 
Replacing both of $j_1$ and $j_2$ with $j+\log(2)$, and letting $\beta_1 = \beta_{3} + \frac{(4\Lmratio+2) n \Lm}{\sevmin \epsilon \slk}$, $\alpha_1 = \alpha_{3} + \frac{\epsilon \slk}{\sevmin} + \frac{2\Lm}{\sevmin} + \beta_1 \log(2)$, we get the equation in the lemma statement. Note that this choice of $\alpha_1$ and $\beta_1$ implies that $\beta_1 =\Theta\left(\frac{n\Lm}{\slk}\right)$ and $\alpha_1 = \frac{2\epsilon\slk}{\sevmin}+\Theta\left(\frac{n\Lm}{\slk}\log(n)\right)$. This proves Lemma~\ref{lem:proof:workload-tail}.
\end{proof}

\begin{proof}[Proof of Lemma~\ref{lem:proof:total-server-need-tail}.]
Again we try to prove the result by applying Lemma~\ref{lem:lyapunov-upper-bound} to the Lyapunov function $V_2(\ve{x}) = \sumall \siz_i x_i$. To check the conditions of Lemma~\ref{lem:lyapunov-upper-bound}, recall that the drift of $V_2(\ve{x})$ is equal to
\begin{equation}
    \begin{aligned}
      G V_2(\ve{x}, \ve{z}) &= \sumall \arr_i \left(V_2(\ve{x}+\ve{e}_i) - V_2(\ve{x})\right) + \sumall  \sev_i z_i \left(V_2(\ve{x}-\ve{e}_i) - V_2(\ve{x})\right). 
    \end{aligned}\label{eq:proof:gv2-full}
\end{equation}
From the formula of the drift, we see that parameters $\vmx$ and $\fmx$ of Lemma~\ref{lem:lyapunov-upper-bound} satisfy $\vmx = \Lm$ and $\fmx = \sumall \arr_i \siz_i \leq \sevmax n$.

We spend the rest of the proof showing that when $V_2(\ve{x})$ is larger than a threshold, 
\begin{equation}
    \sumall \siz_i x_i = V_2(\ve{x}) > \sumall \siz_i\xnom_i + B_2, \label{eq:proof:v2-greater-B}
\end{equation}
for some non-negative $B_2$, there exists some $\gamma, \xi > 0$, such that with high probability we have $G V_2(\ve{x}, \ve{z})<-\gamma$ and in the worst case we have $G V_2(\ve{x},\ve{z}) < \xi$. We bound $GV_2$ by finding a suitable state-space concentration. To do this, we write out the form of $G V_2(\ve{x}, \ve{z})$ explicitly and bound it as below: 
\begin{align}
    G V_2(\ve{x},\ve{z}) &= \sumall (\arr_i \siz_i - \sev_i \siz_i z_i) \label{eq:proof:gv2-temp0}\\
    &= \sumall \left(\arr_i \siz_i +(\sevmax - \sev_i)\siz_i z_i\right) -  \sumall \sevmax \siz_i z_i \nonumber\\
    &\leq \sumall \left(\arr_i \siz_i +(\sevmax - \sev_i)\siz_i x_i \right) - \sevmax \min\left(\sumall \siz_i x_i, n-\Lm \right)\nonumber\\
    &= -\sevmax\min\left(\sumall \frac{\sev_i}{\sevmax} \siz_i(x_i -\xnom_i), \;(\slk - \Lm)-\sumall\left(1-\frac{\sev_i}{\sevmax}\right)\siz_i(x_i-\xnom_i)\right),\label{eq:proof:iter2:gv2-leq-sth}
\end{align}
where the inequality is due to $z_i \leq x_i$ and the $\Lm$-work-conserving condition.
The last expression suggests that in order to apply Lemma~\ref{lem:lyapunov-upper-bound}, we need to lower bound $\sumall \frac{\sev_i}{\sevmax} \siz_i(X_i -\xnom_i)$ and upper bound $\sumall(1-\frac{\sev_i}{\sevmax})\siz_i(X_i-\xnom_i)$ with high probability.

Now we construct a subset $\mathcal{E} \subseteq \R^\njt$ such that $\ve{X}\in \mathcal{E}$ with high probability, and $f(\ve{x})$ has negative drift whenever $\ve{x}\in\mathcal{E}$ and $f(\ve{x})>B_2$ for some $B_2\geq 0$. The construction of $\mathcal{E}$ is based on the high probability sets implied by Lemma~\ref{lem:lyapunov-upper-bound} and Lemma~\ref{lem:proof:workload-tail} that we just proved. For a fixed $j_1 \geq 0$, define the set
\[
    \mathcal{E}_{4}(j_1)  = \left\{\ve{x}\in\R^\njt \;\middle|\; \sumall \left(\tfrac{\sevmax}{\sev_i} + \tfrac{\sev_i}{\sevmax} - 2\right) \siz_i(x_i - \xnom_i) > - \alpha_{4} - \beta_{4} j_1\right\},
\]
where
\[
    \beta_{4} = \left(\frac{\sevmax}{\sevmin} + \frac{\sevmin}{\sevmax} - 2\right)^2\frac{\sevmax}{\sevmin} \cdot \frac{n\Lm}{\epsilon \slk}, \quad \alpha_{4} = \epsilon \slk + 2\beta_{4} \log n. 
\]
Note that this choice of $\beta_{4}$, $\alpha_{4}$ yields $\beta_{4} = \Theta\left(\frac{n\Lm}{\slk}\right)$, $\alpha_{4}= \epsilon \slk + \Theta\left(\frac{n\Lm}{\slk} \log n\right)$. In addition, let
\[
    \mathcal{E}_1(j_1) = \left\{\ve{x}\in\R^\njt \;\middle|\; \sumall \frac{\siz_i}{\sev_i} (x_i - \xnom_i) < \alpha_1 + \beta_1 j_1\right\},
\]
where $\beta_1 = \Theta\left(\frac{n\Lm}{\slk}\right)$ and $\alpha_1  = \frac{2\epsilon\slk}{\sevmin}+\Theta\left(\frac{n\Lm}{\slk}\log n\right)$ as defined in Lemma~\ref{lem:proof:workload-tail}. We define $\mathcal{E}$ and $B_2$ as 
\begin{align*}
\mathcal{E} &= \mathcal{E}_{4}(j_1)\cap \mathcal{E}_1(j_1+2\log n). \\
B_2 &= \sevmax(\alpha_1+\beta_1 (j_1+2\log n))+\alpha_{4}+\beta_{4}j_1 + \epsilon \slk\\
&= 2\left(1+\tfrac{\sevmax}{\sevmin}\right) \epsilon \slk +\Theta\left(\frac{n\Lm}{\slk}\log(n)\right) + \Theta\left(\frac{n\Lm}{\slk}\right) j_1,
\end{align*}
then $V_2(\ve{x}) > \sumall \siz_i\xnom_i + B_2$ and $\ve{x}\in \mathcal{E}$ imply the following three inequalities:
\begin{align}
    &\sumall \siz_i(x_i - \xnom_i) > \sevmax\left(\alpha_1+\beta_1 (j_1+2\log n)\right)+\alpha_{4}+\beta_{4}j_1 + \epsilon \slk, \label{eq:proof:iter2:ssc1}\\
    &\sumall \frac{\siz_i}{\sev_i} (x_i - \xnom_i) < \alpha_1 + \beta_1 (j_1+2\log n),\label{eq:proof:iter2:ssc2}\\
    &\sumall \left(\tfrac{\sevmax}{\sev_i}+\tfrac{\sev_i}{\sevmax}-2\right)\siz_i(x_i - \xnom_i) > -\alpha_{4} - \beta_{4} j_1 \label{eq:proof:iter2:ssc3}.
\end{align}
The linear combination of the three inequalities $- \eqref{eq:proof:iter2:ssc1} +  \sevmax \cdot \eqref{eq:proof:iter2:ssc2} - \eqref{eq:proof:iter2:ssc3}$ gives us 
\begin{equation}
    \sumall \left(1 - \tfrac{\sev_i}{\sevmax}\right)\siz_i (x_i - \xnom_i) \leq -\epsilon \slk < 0;
\end{equation}
another linear combination $2\cdot \eqref{eq:proof:iter2:ssc1} -  \sevmax \cdot \eqref{eq:proof:iter2:ssc2} + \eqref{eq:proof:iter2:ssc3}$ gives us
\begin{equation}
     \sumall \frac{\sev_i}{\sevmax} \siz_i(x_i - \xnom_i) \geq \sevmax\left(\alpha_1+\beta_1 (j_1+2\log n)\right)+\alpha_{4}+\beta_{4}j_1 + 2\epsilon \slk > \epsilon \slk.
\end{equation}
Note that when deriving the above two inequalities, we have used the fact that all parameters involved in the expressions, $\alpha_1, \beta_1, j_1, \alpha_{4}, \beta_{4},\slk$, are non-negative. We substitute the above two inequalities back to \eqref{eq:proof:iter2:gv2-leq-sth}, and choose $\epsilon < \min(1-\Lmratio, \Lmratio)$. This gives us
\begin{equation*}
    GV_2(\ve{x},\ve{z}) \leq -\sevmax \min(\epsilon \slk, \slk - \Lm) \leq -\sevmax \epsilon \slk.
\end{equation*}

Next, we show that $\ve{X}\in\mathcal{E}$ with high probability. We apply Lemma~\ref{lem:mminf-bounds} with $c_i = \frac{\sevmax}{\sev_i} + \frac{\sev_i}{\sevmax} - 2$, $\cmax= \frac{\sevmax}{\sevmin} + \frac{\sevmin}{\sevmax} - 2$, $\alpha = \epsilon \slk$, $\beta =  \frac{\cmax^2 \sevmax n\Lm}{\sevmin\epsilon\slk}$. It can be verified that $\alpha\beta \geq \cmax^2 \sevmax \sysvar$, so
\begin{equation}
    \Prob\left(\sumall \left(\tfrac{\sevmax}{\sev_i} + \tfrac{\sev_i}{\sevmax} - 2\right) \siz_i(X_i - \xnom_i) \leq - \alpha - \beta j\right) \leq e^{-j},
\end{equation}
Setting $j = j_1+2\log n$ and recalling the definition of $\alpha_4$, $\beta_4$, we get
\begin{equation}
    \Prob\left(\sumall \left(\tfrac{\sevmax}{\sev_i} + \tfrac{\sev_i}{\sevmax} - 2\right) \siz_i(X_i - \xnom_i) \leq - \alpha_{4} - \beta_{4} j_1\right) \leq \frac{1}{n^2} e^{-j_1},
\end{equation}
i.e., $\Prob\left(\ve{X}\in\mathcal{E}_{4}(j_1)\right) \geq 1 - \frac{1}{n^2} e^{-j_1}$. 
Moreover, applying Lemma~\ref{lem:proof:workload-tail} to the set $\mathcal{E}_1(j_1+2\log n)$, we have $\Prob\left(\ve{X} \in \mathcal{E}_1(j_1+2\log n)\right) \geq 1 - \frac{1}{n^2}e^{-j_1}$. Recall that $\mathcal{E} = \mathcal{E}_{4}(j_1)\cap \mathcal{E}_{1}(j_1+2\log n)$, so by union bound, we have
$
\Prob(\ve{X}\in\mathcal{E}) \geq 1 - \frac{2}{n^2}e^{-j_1}.
$

When $\ve{x}\notin \mathcal{E}$, from the form of \eqref{eq:proof:gv2-temp0}, we can see that $G V_2(\ve{x},\ve{z})\leq\sevmax n$.

We apply Lemma~\ref{lem:lyapunov-upper-bound} with $B = \sumall \siz_i\xnom_i + B_2$ $\gamma = -\sevmax \epsilon \slk$, $\xi = \sevmax n$, $\mathcal{E} = \mathcal{E}_{4}(j_1)\cap \mathcal{E}_{1}(j_1+2\log n)$, $\fmx \leq \sevmax n$, $\vmx = \Lm$. This gives us
\begin{equation}
    \begin{aligned}
        &\mspace{23mu} \Prob\left(\sumall \siz_i(X_i - \xnom_i) \geq B_2 + 2 \Lm \left(\left(\tfrac{\fmx}{\sevmax\epsilon\slk}+2\right)j_2 + 1\right) \right)\leq e^{-j_2}+ \left(\frac{n}{\epsilon \slk}+1\right)\frac{2}{n^2} e^{-j_1}. 
    \end{aligned} \label{eq:proof:v2-apply-lem}
\end{equation}
This inequality can be further simplified into 
\begin{equation}
    \begin{aligned}
        &\mspace{23mu} \Prob\left(\sumall \siz_i(X_i - \xnom_i) \geq B_2 + 2 \Lm \left(\frac{(2\Lmratio+1)n}{\epsilon\slk} j_2 + 1\right) \right)\leq e^{-j_2}+ e^{-j_1},
    \end{aligned}
\end{equation}
where we have used the fact that $\fmx \leq n\sevmax$, $n>\slk$, $\epsilon < \Lmratio$, and $\left(\frac{n}{\epsilon \slk}+1\right)\frac{2}{n^2} \leq 1$ which holds for $n$ large enough. Setting $j_2$ to be equal to $j_1$ and rearranging the terms, we get
\begin{equation}
    \Prob\left(\sumall \siz_i(X_i - \xnom_i) \geq \alpha_2 + \beta_2 j_1\right) \leq 2\exp(-j_1),
\end{equation}
where
\[
    \beta_2 = \frac{(2\Lmratio+1)n\Lm}{\epsilon\slk}+\sevmax \beta_1 +\beta_{4} = \Theta(\frac{n\Lm}{\slk}),
\]
\begin{align*}
    \alpha_2 &=  \sevmax(\alpha_1+2\beta_2 \log n) + \alpha_{4}+\epsilon \slk + 2\Lm\\
    &= 2(1+\frac{\sevmax}{\sevmin}) \epsilon \slk +\Theta(\frac{n\Lm}{\slk}\log(n)).
\end{align*}

We choose any $\epsilon$ satisfying $0 < \epsilon <  \min\left(1-\Lmratio, \Lmratio, \frac{1}{4(1+\sevmax/\sevmin)}\right)$ and set $j_1 \leftarrow j + \log(2)$, then Lemma~\ref{lem:proof:total-server-need-tail} is proved.

\end{proof}

\tsnexp*

\begin{proof}[Proof of Lemma~\ref{lem:proof:total-server-need-expectation}.]
We proceed by trying to build a connection between the expectation $\E\left[\sumall \siz_i (X_i - \xnom_i)\right]$ and the tail probability $\Prob\left(\sumall \siz_i (X_i - \xnom_i) \geq \alpha_2 + \beta_2 j\right)$, where $\alpha_2=\frac{\slk}{2} + \Theta\left(\frac{n\Lm}{\slk}\log n \right)$, $\beta_2=\Theta\left(\frac{n\Lm}{\slk}\right)$ are given in Lemma~\ref{lem:proof:total-server-need-tail}. 

We first decompose $\E\left[\sumall \siz_i (X_i - \xnom_i)\right]$ as follows:
\begin{align}
    \E\left[\sumall \siz_i (X_i - \xnom_i)\right]
    &= \E\left[\sumall \siz_i (X_i - \xnom_i) \indi_{\{\sumall \siz_i X_i \geq n\}}\right]+ \E\left[\sumall \siz_i (X_i - \xnom_i) \indi_{\{\sumall \siz_i X_i < n\}}\right].\nonumber\\
    &= \E\left[\left(\sumall \siz_i(X_i - \xnom_i) - \slk\right)^+ \right]\label{eq:total-server-need-expect:decomp1}\\
    &\mspace{18mu} + \slk \Prob\left(\sumall \siz_i X_i \geq n\right)\label{eq:total-server-need-expect:decomp2}\\
    &\mspace{18mu}+\E\left[\sumall \siz_i (X_i - \xnom_i) \indi_{\{\sumall \siz_i X_i < n\}}\right] \label{eq:total-server-need-expect:decomp3},
\end{align}
where in the second equality we have used the fact that $\sumall \siz_i \xnom_i + \slk = n$. 

Next, we bound the three terms separately. The first term $\E\left[\left(\sumall \siz_i(X_i - \xnom_i) - \slk\right)^+ \right]$ can be rewritten as follows:
\begin{align*}
    &\mspace{23mu} \E\left[\left(\sumall \siz_i(X_i - \xnom_i) - \slk\right)^+ \right]\\
    &= \int_0^\infty \Prob\left(\sumall \siz_i (X_i - \xnom_i) - \slk \geq t\right) dt
    \leq \sum_{j=\lfloor \frac{\slk-\alpha_2}{\beta_2}\rfloor}^\infty \int_{\alpha_2+\beta_2 j}^{\alpha_2+\beta_2(j+1)} \Prob\left(\sumall \siz_i (X_i - \xnom_i)  \geq t\right) dt\\
    &\leq \sum_{j=\lfloor \frac{\slk-\alpha_2}{\beta_2}\rfloor}^\infty \beta_2 \Prob\left(\sumall \siz_i(X_i - \xnom_i) \geq \alpha_2 + \beta_2 \cdot j\right)
    \leq \sum_{j=\lfloor \frac{\slk-\alpha_2}{\beta_2}\rfloor}^\infty \beta_2 e^{-j},
\end{align*}
where $\lfloor a \rfloor$ means the largest integer that is no greater than a real number $a$, and the last inequality is due to Lemma~\ref{lem:proof:total-server-need-tail}. Therefore,
\begin{equation}
    \E\left[\left(\sumall \siz_i(X_i - \xnom_i) - \slk\right)^+ \right] \leq \frac{e}{e-1}\beta_2 \exp\left(-\left\lfloor\frac{\slk-\alpha_2}{\beta_2}\right\rfloor\right), \label{eq:total-server-need-expect:bound-decomp1}
\end{equation}
where $e$ is the base for the natural logarithm.

Next, we try to bound the sum of the term in \eqref{eq:total-server-need-expect:decomp2} and the terms in \eqref{eq:total-server-need-expect:decomp3}, which is equal to $\Prob\left(\sumall \siz_i X_i \geq n\right) + \E\left[\sumall \siz_i (X_i - \xnom_i) \indi_{\{\sumall \siz_i X_i < n\}}\right]$. We do this by analyzing the Lyapunov drift of the function $f(\ve{x}) = \sum_{i=1}^\njt \frac{\siz_i}{\sev_i}x_i$. The drift of the sum is equal to
\begin{equation*}
    G f(\ve{x},\ve{z}) = \sum_{i=1}^\njt \left(\frac{\arr_i\siz_i}{\sev_i} - \siz_i z_i\right).
\end{equation*}
Because $f(\ve{x})$ is a linear function, and the system operates under an $\Lm$-work-conserving policy, by Lemma~\ref{lem:polynomial-zero-drift}, we have $\E\left[G f(\ve{X},\ve{Z})\right] = 0$, i.e.
\begin{equation}
    \E\left[\sum_{i=1}^\njt \left(\frac{\arr_i\siz_i}{\sev_i} - \siz_i Z_i\right)\right] = 0.
\end{equation}
Observe that by the $\Lm$-work-conserving condition, $\sumall \siz_i Z_i \geq n-\Lm$ when $\sumall \siz_i X_i \geq n$, and $\sumall \siz_i Z_i  = \sumall \siz_i X_i $ when $\sumall \siz_i X_i < n$. Therefore, 
\[
    \begin{aligned}
    \mspace{23mu}\E\left[\sum_{i=1}^\njt \left(\frac{\arr_i\siz_i}{\sev_i} - \siz_i Z_i\right)\indi_{\{\sumall \siz_i X_i \geq n\}}\right] \leq (-\slk + \Lm) \Prob\left(\sumall \siz_i X_i \geq n\right),
    \end{aligned}
\]
\[
\begin{aligned}
     \mspace{23mu}\E\left[\sum_{i=1}^\njt \left(\frac{\arr_i\siz_i}{\sev_i} - \siz_i Z_i\right)\indi_{\{\sumall \siz_i X_i < n\}}\right]\leq \E\left[\sum_{i=1}^\njt \siz_i (\xnom_i - X_i) \indi_{\{\sumall \siz_i X_i < n\}}\right].
\end{aligned}
\]
Summing up the above two equations and rearranging the terms yield
\begin{align}
   &\mspace{18mu}\E\left[\sum_{i=1}^\njt \siz_i (X_i-\xnom_i) \indi_{\{\sumall \siz_i X_i < n\}}\right] + \slk \Prob\left(\sumall \siz_i X_i \geq n\right) \nonumber \\
   &\leq \Lm \Prob\left(\sumall \siz_i X_i \geq n\right)
   = \Lm \Prob\left(\sumall \siz_i (X_i-\xnom_i) \geq \slk \right) \nonumber \\
    &\leq \Lm \exp\left(-\left\lfloor\frac{\slk-\alpha_2}{\beta_2}\right\rfloor\right) \label{eq:total-server-need-expect:bound-decomp23}.
   \end{align}
where the last inequality follows from Lemma~\ref{lem:proof:total-server-need-tail} with $j = \lfloor \frac{\slk-\alpha_2}{\beta_2}\rfloor$.

Combining \eqref{eq:total-server-need-expect:bound-decomp1} and \eqref{eq:total-server-need-expect:bound-decomp23}, we finally get
\begin{equation}
     \E\left[\sumall \siz_i (X_i - \xnom_i)\right] \leq \left(\frac{e}{e-1} \beta_2 + \Lm \right) \exp\left(-\left\lfloor\frac{\slk-\alpha_2}{\beta_2}\right\rfloor\right).
\end{equation}
Observe that
\begin{equation}
    \frac{e}{e-1}\beta_2 + \Lm = O\left(\frac{n\Lm}{\slk}\right).
\end{equation}
Because $\slk = \wbrac{\frac{n\Lm}{\slk}\log n}$, we have $\alpha_2 = \frac{\slk}{2} + o(\slk)$. Consequently,
\begin{equation}
    \left\lfloor\frac{\slk-\alpha_2}{\beta_2}\right\rfloor =\Omega\left(   \frac{\slk^2}{n\Lm}\right).
\end{equation}
Therefore, 
\[\E\left[\sumall \siz_i (X_i - \xnom_i)\right] =O\left(\frac{n\Lm}{\slk}\right)\exp\left(-\Omega\left(   \frac{\slk^2}{n\Lm}\right)\right).
\]
Because $\slk = \wbrac{\sqrt{n\Lm}\log n}$, we have $\exp\left(-\Omega\left(   \frac{\slk^2}{n\Lm}\right)\right) \leq \exp\left(-\Wbrac{(\log n)^2}\right)$, which decays faster than any polynomial. Therefore we can omit the factor $O\left(\frac{n\Lm}{\slk}\right)$.
This finishes the proof of Lemma~\ref{lem:proof:total-server-need-expectation}.
\end{proof}

\queueingprob*

\begin{proof}[Proof of Corollary~\ref{result:queueing-probability}.]
Choosing $j = \frac{\slk - \alpha_2}{\beta_2}$ in Lemma~\ref{lem:proof:total-server-need-tail}, we have
\begin{equation*}
    \Prob\left(\sumall \siz_i (X_i-\xnom_i) \geq \alpha_2 + \beta_2 \cdot j \right) 
    \leq \exp\left(-\frac{\slk-\alpha_2}{\beta_2}\right).
\end{equation*}
Recall that $\alpha_2 = \frac{\slk}{2} + \Thebrac{\frac{n\Lm}{\slk}\log n}$ and $\beta_2 = \Thebrac{\frac{n\Lm}{\slk}}$. Because we have assumed $\slk = \wbrac{\sqrt{n\Lm}\log n}$, $\alpha_2 = \frac{\slk}{2}+\obrac{\slk}$. Therefore, $\slk \geq \alpha_2 + \beta_2\cdot j$ and $\frac{\slk-\alpha_2}{\beta_2} = \Wbrac{\frac{\slk^2}{n\Lm}}$. 
It follows that
\begin{align*}
\Prob\left(\sumall \siz_i X_i \geq n\right) &= \Prob\left(\sumall \siz_i (X_i-\xnom_i) \geq \slk \right) \\
&\leq \Prob\left(\sumall \siz_i (X_i-\xnom_i) \geq \alpha_2 + \beta_2 \cdot j \right) \\
    &\leq \exp\left(-\Wbrac{\frac{\slk^2}{n\Lm}}\right).
\end{align*}
This finishes the proof of Corollary~\ref{result:queueing-probability}.

\end{proof}

\end{document}